\newtheorem{theorem}{Theorem}
\newtheorem{lemma}[theorem]{Lemma}
\newtheorem{proposition}[theorem]{Proposition}
\newtheorem{corollary}[theorem]{Corollary}
\newtheorem{example}[theorem]{Example}
\newtheorem{remark}[theorem]{Remark}
\DeclareAcronym{GP}{short=GP, long=Gaussian process, long-plural = \textit{es}}
\DeclareAcronym{MFM}{short=MFM, long=multi-fidelity modelling}
\DeclareAcronym{LOFI}{short=lo-fi, long=low fidelity}
\DeclareAcronym{HIFI}{short=hi-fi, long=high fidelity}
\DeclareAcronym{MCMC}{short=MCMC, long=Markov chain Monte Carlo}
\DeclareAcronym{RKHS}{short=RKHS, long=reproducing kernel Hilbert space}
\DeclareAcronym{GRE}{short=GRE, long=Gauss--Richardson Extrapolation}
\DeclareMathOperator*{\argmax}{arg\,max}
\begin{document}

\doparttoc
\faketableofcontents 

\title{Probabilistic Richardson Extrapolation}
\author{Chris. J. Oates$^1$, Toni Karvonen$^2$, Aretha L. Teckentrup$^3$ \\
Marina Strocchi$^{4,5}$, Steven A. Niederer$^{4,5,6}$ \\
\footnotesize $^1$ School of Mathematics, Statistics and Physics, Newcastle University, UK \\
\footnotesize $^2$ Department of Mathematics and Statistics, University of Helsinki, FI \\
\footnotesize $^3$ School of Mathematics and Maxwell Institute for Mathematical Sciences, University of Edinburgh, UK \\
\footnotesize $^4$ National Heart and Lung Institute, Imperial College London, UK \\
\footnotesize $^5$ School of Biomedical Engineering and Imaging Sciences, King’s College London, UK \\
\footnotesize $^6$ The Alan Turing Institute, UK
}
\maketitle

\begin{abstract}
For over a century, extrapolation methods have provided a powerful tool to improve the convergence order of a numerical method.
However, these tools are not well-suited to modern computer codes, where multiple continua are discretised and convergence orders are not easily analysed.
To address this challenge we present a probabilistic perspective on Richardson extrapolation, a point of view that unifies classical extrapolation methods with modern multi-fidelity modelling, and handles uncertain convergence orders by allowing these to be statistically estimated.
The approach is developed using Gaussian processes, leading to \emph{Gauss--Richardson Extrapolation} (GRE).
Conditions are established under which extrapolation using the conditional mean achieves a polynomial (or even an exponential) speed-up compared to the original numerical method.
Further, the probabilistic formulation unlocks the possibility of experimental design, casting the selection of fidelities as a continuous optimisation problem which can then be (approximately) solved.
A case-study involving a computational cardiac model demonstrates that practical gains in accuracy can be achieved using the GRE method.
\end{abstract}

\section{Introduction}
\label{sec: introduction}

Testing of hypotheses underpins the scientific method, and increasingly these hypotheses are model-based. 
Deterministic or stochastic mathematical models are routinely used to represent mechanisms hypothesised to govern diverse phenomena, such as aerodynamics or electrochemical regulation of the human heart.
In these cases, critical scientific enquiry demands a comparison of the model against a real-world dataset. 
The practical challenge is two-fold; to simulate from the mathematical model, and to obtain a real-world dataset.
Here we focus on the first challenge -- simulating from the model -- which can be arbitrarily difficult depending on the complexity of the model.
For example, simulating a single cycle of a jet engine to an acceptable numerical precision routinely requires $10^6$ core hours \citep{arroyo2021towards}, while accurate simulation from the cardiac models that we consider later in this paper at steady state requires $10^4$ core hours in total \citep{strocchi2023cell}.
To drive progress in these, and many other diverse scientific domains, there is an urgent need for statistical and computational methodology that can mitigate the high cost of accurately simulating from a mathematical model.

Abstractly, we enumerate all of the \emph{discretisation parameters} involved in approximate simulation from the mathematical model using scalars $\mathbf{x} = (x_1,\dots,x_d)$, such that each component of $\mathbf{x}$ controls the error due to a particular aspect of discretisation; for example, $x_1$ could be a time step size, $x_2$ could be the width of a spatial mesh, and $x_3$ could be an error tolerance for an adaptive numerical method.
The principal requirement is that the ideal mathematical model corresponds to the limit $\mathbf{x} \rightarrow \mathbf{0}$ where no discretisation is performed.
Given a value of $\mathbf{x}$, we denote as $f(\mathbf{x})$ the associated numerical approximation to the continuum quantity $f(\mathbf{0})$ from the mathematical model.
The computational cost of such an evaluation will be denoted $c(\mathbf{x})$, with $c(\mathbf{0}) = \infty$ being typical.
The computational challenge addressed in this paper is to produce an accurate approximation to $f(\mathbf{0})$, based on a dataset of simulations $\{ f(\mathbf{x}_j) \}$, where $\{\mathbf{x}_j\} \subset (0,\infty)^d$, such that the computational cost of obtaining $\{f(\mathbf{x}_j)\}$ remains within a prescribed budget.
For this initial discussion we focus on scalar-valued model output, but we generalise to multivariate and infinite-dimensional model output in \Cref{subsec: multivar output}.

Several solutions have been proposed to perform approximate simulation at reduced cost.
In what follows it is useful to draw a distinction between \emph{extrapolation methods}, applicable to the situation where a mathematical model is discretised for simulation and numerical analysis of the discretisation error can be performed, and \emph{modern solutions} that are typically applied to `black box' computer codes for which numerical analysis is impractical.

\paragraph{Extrapolation Methods}

A unified presentation of extrapolation methods, that includes the most widely-used algorithms, is provided by the so-called \emph{E-algorithm} \citep[see the survey of][]{brezinski1989survey}.
The starting point is a (real-valued) convergent sequence, which in our setting we interpret as a sequence of numerical approximations $(f(\mathbf{x}_m))_{m \in \mathbb{N}}$, where $\mathbf{x}_m$ is a vector of discretisation parameters controlling the error in approximating the mathematical model, while $f(\mathbf{0})$ represents the continuum quantity of interest.
The E-algorithm posits an \emph{ansatz} that
\begin{align}
    f(\mathbf{x}_m) = f(\mathbf{0}) + a_1 g_1(m) + \dots + a_{n-1} g_{n-1}(m) \label{eq: ansatz}
\end{align}
for some unknown $a_1,\dots,a_{n-1} \in \mathbb{R}$, some known functions $g_i : \mathbb{N} \rightarrow \mathbb{R}$, and all $m \in \mathbb{N}$.
Then, instantiating \eqref{eq: ansatz} for $m, m+1 , \dots , m + n - 1$, we may solve for the unknown $a_1,\dots,a_{n-1}$ and $f(\mathbf{0})$ in terms of the $n$ values $f(\mathbf{x}_m), \dots, f(\mathbf{x}_{m+n-1})$.
Indeed, solving this linear system for $f(\mathbf{0})$ leads to the estimator
\begin{align}
    S_m \coloneqq S(f(\mathbf{x}_m),\dots,f(\mathbf{x}_{m+n-1})) = \frac{ \left| \begin{array}{ccc} f(\mathbf{x}_m) & \dots & f(\mathbf{x}_{m+n-1}) \\ g_1(m) & \dots & g_1(m+n-1) \\ \vdots & & \vdots \\ g_n(m) & \dots & g_n(m+n-1) \end{array} \right| }{ \left| \begin{array}{ccc} 1 & \dots & 1 \\ g_1(m) & \dots & g_1(m+n-1) \\ \vdots & & \vdots \\ g_n(m) & \dots & g_n(m+n-1) \end{array} \right| } . \label{eq: S}
\end{align}
Under appropriate assumptions, the sequence $(S_m)_{m \in \mathbb{N}}$ constructed based on $(f(\mathbf{x}_m))_{m \in \mathbb{N}}$ as in \eqref{eq: S} not only has the same limit, $f(\mathbf{0})$, but also converges to that limit faster in the sense that $\lim_{m \rightarrow \infty} (S_m - f(\mathbf{0})) / (f(\mathbf{x}_m) - f(\mathbf{0})) = 0$; for precise statements see Chapter 2 of \citet{brezinski2013extrapolation}.

The principal classes of extrapolation methods concern either the case of a single discretisation parameter $x_m$, or they maintain ambivalence about $\mathbf{x}_m$ by operating only on the values of the sequence $(f(\mathbf{x}_m))_{m \in \mathbb{N}}$.
In either case, different extrapolation methods correspond to different basis functions $g_i$ in \eqref{eq: ansatz}. 
\emph{Richardson extrapolation} corresponds to $g_i(m) = x_m^i$, in which case \eqref{eq: ansatz} is recognised as polynomial extrapolation to the origin \citep{richardson1911ix,lf1927deferred}.
The existence of a Taylor expansion of $f$ at the origin is sufficient to guarantee a polynomial-rate convergence acceleration using Richardson's method.
Other examples of extrapolation methods include \emph{Shanks' transformation} $g_i(m) = f(\mathbf{x}_{m+i}) - f(\mathbf{x}_{m+i-1})$ \citep{shanks1955non}, the \emph{Germain--Bonne transformation} $g_i(m) = (f(\mathbf{x}_{m+1}) - f(\mathbf{x}_m))^i$ \citep{germain1990convergence}, and \emph{Thiele's rational extrapolation method} $g_i(m) = x_m^i$, $g_{i+p}(m) = f(x_m) x_m^i$ for $i = 1,\dots,p$, $n = 2p + 1$ \citep{thiele1909interpolationsrechnung,bulirsch1964fehlerabschatzungen,larkin1967some}.
A careful numerical analysis of $f$ is usually required to determine when a particular extrapolation method can be applied.
To the best of our knowledge, ideas from statistics and uncertainty quantification do not feature prominently, if at all, in the literature on extrapolation methods.
In addition, the question of how best to construct the sequence $(\mathbf{x}_m)_{m \in \mathbb{N}}$ under a constraint on the overall computational budget does not appear to have been systematically addressed.
Further background can be found in the book-level treatment of \citet{sidi2003practical} and~\citet{brezinski2013extrapolation}.

Though rather classical, extrapolation methods continue to find new and useful applications, including in optimal transport \citep{chizat2020faster}, regularisation and training of machine learning models \citep{bach2021effectiveness}, and sampling with Markov chain Monte Carlo \citep{durmus2016stochastic}.

\paragraph{Modern Solutions}

If the mathematical model additionally involves one or more degrees of freedom $\bm{\theta}$, numerical approximations $f_{\bm{\theta}}(\mathbf{x})$ are often required across a range of values for $\bm{\theta}$ to identify configurations that are consistent with observations from the real world.
Since the introduction of additional degrees of freedom further complicates numerical analysis, this setting has motivated the development of black box methods that can be applied in situations where numerical analysis is impractical.
Among these, \emph{emulation} and \emph{multi-fidelity modelling} are arguably the most prominent.

In \emph{emulation} one attempts to approximate the map $\bm{\theta} \mapsto f_{\bm{\theta}}(\mathbf{x}_{\text{hi-fi}})$, where the discretisation parameters $\mathbf{x}_{\text{hi-fi}}$ are typically fixed and correspond to a suitably \ac{HIFI} model.
This enables prediction of computer code output at values of $\bm{\theta}$ for which simulation was not performed \citep{sacks1989design}.
A variety of sophisticated techniques have been developed to identify an appropriate basis or subspace in which an emulator can be constructed, such as \textit{reduced order modelling} \citep{lucia2004reduced}.
One drawback of emulation is that it can be \textit{data hungry}; in applications for which it is only possible to perform a small number $n$ of simulations, and for which insight from numerical analysis is unavailable, one usually cannot expect to obtain high-quality predictions.
A second drawback is that emulation treats the discretised model $\bm{\theta} \mapsto f_{\bm{\theta}}(\mathbf{x}_{\text{hi-fi}})$ as the target, whereas in reality the continuum mathematical model $\bm{\theta} \mapsto f_{\bm{\theta}}(\mathbf{0})$ is of principal interest.

A partial solution to the drawbacks of emulation is \ac{MFM}, in which one supplements a small number of simulations from the \ac{HIFI} model $\bm{\theta} \mapsto f_{\bm{\theta}}(\mathbf{x}_{\text{hi-fi}})$ with a larger number of simulations from one or more cheaper \ac{LOFI} models $\bm{\theta} \mapsto f_{\bm{\theta}}(\mathbf{x}_{\text{lo-fi}})$ \citep{peherstorfer2018survey}.
\Ac{LOFI} models can sometimes be obtained using coarse-grid approximations, early stopping of iterative algorithms, or linearisation \citep{piperni2013development}.
Alternatively, \ac{LOFI} models could involve only a subset of the relevant physical mechanisms, an approach popular e.g. in climate science \citep{held2005gap,majda2010quantifying}.
Once specified, the models of different fidelities can be combined in different ways: one can either use the \ac{HIFI} model to periodically `check' (and possibly adapt) the \ac{LOFI} models; or one can use the \ac{LOFI} models as pilot runs to decide whether or not to evaluate the \ac{HIFI} model; or one can use the information from all models simultaneously, by defining a multi-fidelity surrogate model \citep{craig1998constructing,kennedy2000predicting,cumming2009small,ehara2023adaptive} where correlation between models is taken into account. 
Provided that the \ac{LOFI} models are correlated with the original model, these additional cheap simulations can be leveraged to more accurately predict computer code output.
The principal drawback of multi-fidelity modelling is that there is limited guidance on how the \ac{LOFI} models should be constructed, and a poor choice can fail to improve (or even worsen) predictive performance, whilst incurring an additional computational cost.
In addition, as with emulation, the literature on \ac{MFM} tends to treat the \ac{HIFI} model as the target, rather than the continuum mathematical model.

\paragraph{Other Related Work}

Some alternative lines of research will briefly be discussed.
\emph{Probabilistic numerics} casts numerical approximation as a statistical task \citep{hennig2015probabilistic}, with Bayesian principles used to quantify uncertainty regarding the continuum model of interest \citep{cockayne2019bayesian}.
However, the focus of that literature is the design of numerical methods, in contrast to extrapolation methods which operate on the output of existing numerical methods.
In parallel, the application of machine learning methods to numerical tasks has received recent attention; for example deep learning is being used for numerical approximation of high-dimensional parametric partial differential equations \citep{han2018solving}.
This literature does not attempt extrapolation as such, with a \ac{HIFI} numerical method typically used to provide a training dataset.
Gaussian processes have been used in specific applications to extrapolate a series of numerical approximations to a continuum quantity of interest $f(\mathbf{0})$, for example in \citet{thodoroff2023multi} to model ice sheets in Antarctic, and in \citet{ji2022conglomerate} to model the evolution of the quark-gluon plasma following the Big Bang.
To date, however, convergence acceleration has not been studied in the Gaussian process context.
An important numerical task encountered in statistics is to approximate an expected value of interest $f(\mathbf{0}) = \mathbb{E}[X(\mathbf{0})]$.
Unbiased estimation of $f(\mathbf{0})$ at finite cost is possible in this setting using the methodology of \citet{rhee2015unbiased}, provided one can construct a sequence $(X(\mathbf{x}_n))_{n \in \mathbb{N}}$ of computable stochastic approximations to $X(\mathbf{0})$, such that the variance of $X(\mathbf{x}_n) - X(\mathbf{x}_{n-1})$ decays sufficiently fast. 
Similar de-biasing ideas have since been used in the context of Markov chain Monte Carlo \citep{jacob2020unbiased}.
Multilevel methods, based on such sequences, have been combined with Richardson extrapolation in \citet{lemaire2017multilevel,beschle2023quasi}.

\paragraph{Our Contribution}

This paper proposes a probabilistic perspective on extrapolation methods that unifies extrapolation methods and \ac{MFM}.
The approach is instantiated using a \emph{numerical analysis-informed Gaussian process} to approximate the map $\mathbf{x} \mapsto f(\mathbf{x})$, as described in \Cref{sec: methods}, where the conditional mean can be interpreted as a (novel) extrapolation method, in the sense that it provably achieves a polynomial (or even an exponential) speed-up compared to the original numerical method.
Like Richardson extrapolation, our theoretical arguments are rooted in Taylor expansions, so the name \ac{GRE} is adopted.
The probabilistic formulation of extrapolation methods confers several advantages:
\begin{itemize}
    \item In contrast to classical extrapolation methods, which focus on the case of a univariate discretisation parameter $x_n$, it is straight-forward to consider a vector of discretisation parameters $\mathbf{x}_n$ within a regression framework.
    In \Cref{subsec: prior,subsec: fitting the GP,subsec: justify prior} the probabilistic approach is laid out, then in \Cref{subsec: faster convergence,subsec: continuous and discrete} higher-order convergence guarantees for \ac{GRE} are established.
    \item Credible sets for the continuum quantity of interest $f(\mathbf{0})$ can be constructed, enabling computational uncertainty to be integrated into experimental design and downstream decision-support.
    The asymptotic performance of \ac{GRE} credible sets is analysed in \Cref{subsec: UQ}.
    \item In contrast to existing approaches in \ac{MFM}, where a discrete set of fidelities are specified at the outset, \ac{GRE} operates on a continuous spectrum of fidelities and casts the selection of fidelities as a cost-constrained experimental design problem, which can then be approximately solved using methods described in \Cref{subsec: design}.
    \item For computer models whose convergence order is difficult to analyse, the probabilistic formulation allows for convergence orders to be formally estimated.
    The consistency of a maximum quasi-likelihood approach to estimating unknown convergence order is established in \Cref{subsec: estimate order}.
\end{itemize}
The methodology is rigorously tested in the context of simulating from a computational cardiac model involving separate spatial and temporal discretisation parameters in \Cref{sec: applications}.
The sensitivity of the cardiac model to the different discretisation parameters is first estimated from \ac{LOFI} simulations, then an optimal experimental design is generated and used to estimate the true trajectory of the cardiac model in the continuum limit.
Our experimental results demonstrate that a practical gain in accuracy can be achieved with our \ac{GRE} method.
Though our assessment focuses on a specific cardiac model of scientific and clinical interest, the methodology is general and offers an exciting possibility to accelerate computation in the diverse range of scenarios in which computationally-intensive simulation is performed. 
A closing discussion is contained in \Cref{sec: discuss}.

%

\section{Methodology}
\label{sec: methods}

This section presents a novel probabilistic perspective on extrapolation methods, which we instantiate using \acp{GP} to produce methodology that we term \acf{GRE}.
For simplicity of presentation, we first consider the case of a scalar quantity of interest, generalising to arbitrary-dimensional quantities of interest in \Cref{subsec: multivar output}.

\smallskip 

\paragraph{Set-Up} 
Let $f : \mathcal{X} \rightarrow \mathbb{R}$ be a (non-random) real-valued function on a bounded set $\mathcal{X} \subset [0,\infty)^d$ such that $\mathbf{0} \in \mathcal{X}$.
As we explained in \Cref{sec: introduction}, the output $f(\mathbf{x})$ for $\mathbf{x} \neq \mathbf{0}$ will represent a numerical approximation to a continuum quantity $f(\mathbf{0})$ of interest, and for extrapolation to be possible at all we will minimally need to assume that $f$ is continuous at $\mathbf{0}$.

\paragraph{Notation}  
For $\bm{\beta} \in \mathbb{N}_0^p$, let $\partial^{\bm{\beta}} g$ denote the mixed partial derivative $\mathbf{x} \mapsto \partial_{x_1}^{\beta_1} \dots \partial_{x_p}^{\beta_p} \, g(\mathbf{x})$ of a function $g: D \rightarrow \mathbb{R}$, whenever this is well-defined and $D \subseteq \mathbb{R}^p$.
Let $C^s(D)$ denote the set of $s$-times continuously differentiable functions $g : D \rightarrow \mathbb{R}$, meaning that $\partial^{\bm{\beta}} g$ is continuous for all $|\bm{\beta}| \leq s$, where $|\bm{\beta}| \coloneqq \beta_1 + \cdots + \beta_p$.
For $g: D \rightarrow \mathbb{R}$ bounded, let $\|g\|_{L^\infty(D)} \coloneqq \sup_{\mathbf{x} \in D} |g(\mathbf{x})|$.
Let $\pi_r(D)$ denote the set of all polynomial functions of total degree at most $r$ on $D$.
For vectors $\mathbf{a},\mathbf{b} \in \mathbb{R}^d$, let $[\mathbf{a},\mathbf{b}] \coloneqq [a_1,b_1] \times \dots \times [a_d,b_d] \subset \mathbb{R}^d$, and similarly for $[\mathbf{a},\mathbf{b})$ and so forth.
Let $\mathcal{GP}(m,k)$ denote the law of a \Ac{GP} with mean function $m$ and covariance function $k$; background on \Acp{GP} can be found in \citet{rasmussen2006gaussian}.

\subsection{A Numerical Analysis-Informed Gaussian Process}
\label{subsec: prior}

Assuming for the moment that numerical analysis of $\mathbf{x} \mapsto f(\mathbf{x})$ can be performed, our first aim is to encode the resulting bounds on discretisation error into a statistical regression model.
Training such a \emph{numerical analysis-informed} regression model on data $\{f(\mathbf{x}_i)\}_{i=1}^n$ obtained at distinct inputs $X_n = \{\mathbf{x}_i\}_{i=1}^n \subset \mathcal{X} \setminus \{\mathbf{0}\}$ enables statistical prediction of the limit $f(\mathbf{0})$, in analogy with classical extrapolation methods.
To leverage conjugate computation, here we instantiate the idea using \Acp{GP} in a Bayesian framework.
For the first part, we require an explicit error bound $b : \mathcal{X} \rightarrow [0,\infty)$ such that $b(\mathbf{x}) \geq 0$ with equality if and only if $\mathbf{x} = \mathbf{0}$, and such that $f(\mathbf{x}) - f(\mathbf{0}) = O( b(\mathbf{x}) )$.
The error bound $b$ will be encoded into a centred prior \ac{GP} model for $f$, whose covariance function
\begin{align}
    k(\mathbf{x},\mathbf{x}') := \sigma^2 \left[ k_0^2 + b(\mathbf{x}) b(\mathbf{x}') k_e(\mathbf{x},\mathbf{x}') \right] , \qquad \mathbf{x}, \mathbf{x}' \in \mathcal{X} , \label{eq: cov model}
\end{align}
is selected to ensure samples $g \sim \mathcal{GP}(0,k)$ from the \ac{GP} satisfy, with probability one, $g(\mathbf{x}) - g(\mathbf{0}) = O(b(\mathbf{x}))$ (see \Cref{subsec: sample path} for the precise statement).
Here $\sigma^2 > 0$ is an overall scale to be estimated, while the scalar $k_0^2 > 0$ is proportional to the prior variance of $f(\mathbf{0})$.
The symmetric positive-definite function $k_e : \mathcal{X} \times \mathcal{X} \rightarrow \mathbb{R}$ is the covariance function for the \emph{normalised error} $\mathbf{x} \mapsto e(\mathbf{x})$, where $e(\mathbf{x}) := b(\mathbf{x})^{-1} (f(\mathbf{x}) - f(\mathbf{0}))$ for $\mathbf{x} \in \mathcal{X} \setminus \{\mathbf{0}\}$, and must be specified.
In practice $k_e$ will additionally involve length-scale parameters $\bm{\ell}$ which must be estimated, for example $k_e(\mathbf{x},\mathbf{x}') = \exp(- \sum_{i=1}^d \ell_i^{-2} (x_i - x_i')^2 )$ in the case of the Gaussian kernel; we defer all discussion of this point to \Cref{subsec: estimate order,sec: applications}.
To our knowledge, the encoding of convergence orders into a \ac{GP} as in \eqref{eq: cov model} has not been well-studied, though the basic idea appeared in \citet{tuo2014surrogate} and in our preliminary work \citep{teymur2021black}.
Standard techniques can be applied to fit such a \ac{GP} model to a dataset; see \Cref{fig: illustrating samples} and \Cref{subsec: fitting the GP}.

\begin{remark}[Recovering Richardson in dimension $d = 1$] \label{remark: classical connection}
    Let $k_e$ be any kernel that reproduces the polynomial space $\pi_{n-2}(\mathbb{R})$, such as $k_e(x,x') = (1 + xx')^{n-2}$, and consider the `objective' prior with $k_0^2 \rightarrow \infty$.
    Conditioning on data $\{f(x_i)\}_{i=1}^n$, the posterior mean function is the unique interpolant of the form $x \mapsto \mu + b(x) p(x)$ for some $\mu \in \mathbb{R}$, $p \in \pi_{n-2}(\mathbb{R})$ \citep[see e.g.][Proposition 2.6]{karvonen2018bayes}.
    Thus, if $b$ is polynomial, the intercept $\mu$ is the result of polynomial extrapolation to $0$, and is an instance of Richardson's classical extrapolation method.
\end{remark}

Unfortunately the connection in \Cref{remark: classical connection} is not especially useful.
Indeed, while the posterior mean provides a useful point estimate, the posterior variance is identically zero, meaning that predictive uncertainty is not being properly quantified.
Thus we do not attempt to reproduce Richardson extrapolation in the sequel, but rather we develop \emph{de novo} methodology tailored to the \Ac{GP} framework.

\begin{figure}[t!]
\centering
\includegraphics[width = \textwidth]{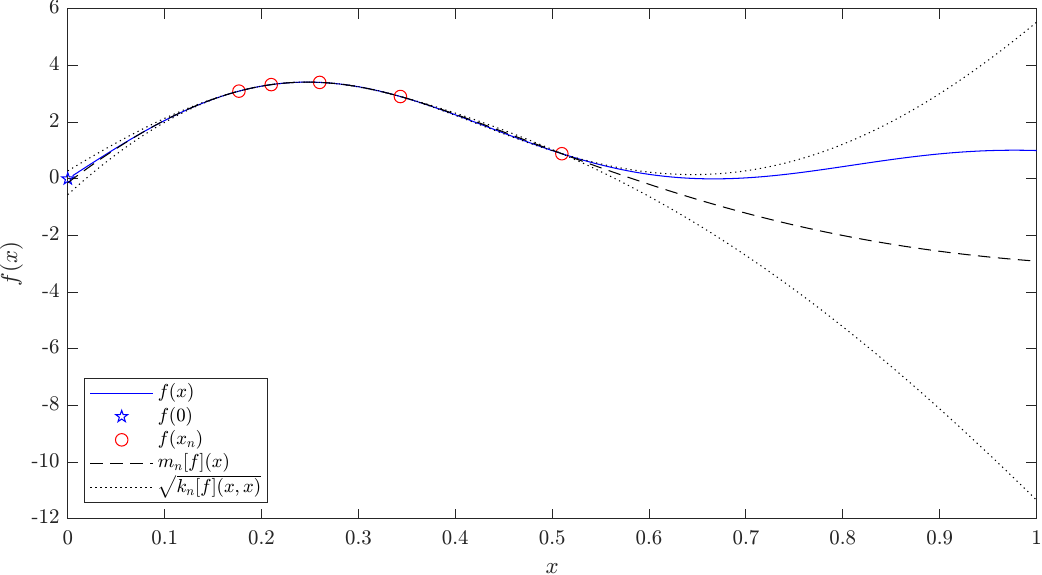}
\caption{The numerical analysis-informed Gaussian process model, fitted to an illustrative dataset $\{f(x_i)\}_{i=1}^n$ (red circles) of size $n =5$, corresponding to the approximations produced by a finite difference method (blue solid curve) whose first-order accuracy [i.e., $b(x) = x$] was encoded into the GP.
The scale $\sigma_n^2[f]$ of the uncertainty was calibrated using the method advocated in \Cref{subsec: UQ}, while $k_e$ was taken to be a Mat\'{e}rn-$\frac{5}{2}$ kernel with length-scale parameter selected using quasi maximum likelihood likelihood (see \Cref{subsec: estimate order}).
Observe that point estimate $m_n[f](0)$ (black dashed curve at $x = 0$), is more accurate than that of the highest fidelity simulation from the numerical method, while the limiting quantity of interest $f(0)$ (blue star) falls within the one standard deviation prediction interval (black dotted curves at $x = 0$).
}
\label{fig: illustrating samples}
\end{figure}

\subsection{Gauss--Richardson Extrapolation}
\label{subsec: fitting the GP}

First we recall the relevant calculations for conditioning the \ac{GP} model \eqref{eq: cov model} on a dataset.
Let $k_b : \mathcal{X} \times \mathcal{X} \rightarrow \mathbb{R}$ be defined as $k_b(\mathbf{x},\mathbf{x}') \coloneqq b(\mathbf{x}) b(\mathbf{x}') k_e(\mathbf{x},\mathbf{x})$, so that our assumptions on $b$ and $k_e$ imply that $k_b$ is a symmetric positive-definite kernel on $\mathcal{X} \setminus \{\mathbf{0}\}$, and a symmetric positive semi-definite kernel on $\mathcal{X}$.
Let $f(X_n)$ be a column vector with entries $f(\mathbf{x}_i)$, let $\mathbf{k}_b(\mathbf{x})$ be a column vector with entries $k_b(\mathbf{x}_i,\mathbf{x})$, and let $\mathbf{K}_b$ be a matrix with entries $k_b(\mathbf{x}_i,\mathbf{x}_j)$.
Recalling that $k_0^2$ is proportional to the prior variance for $f(\mathbf{0})$, we opt for an `objective' prior in which $k_0^2 \rightarrow \infty$.
However, this limit results in an improper prior \ac{GP}.
To make progress, we must first compute the conditional \ac{GP} using a finite value of $k_0^2$ and then retrospectively take the limit -- a standard calculation which we detail in \Cref{app: flat prior limit} -- yielding conditional mean and covariance functions
\begin{align}
m_n[f](\mathbf{x}) & \coloneqq \frac{\mathbf{1}^\top \mathbf{K}_b^{-1} f(X_n) }{ \mathbf{1}^\top \mathbf{K}_b^{-1} \mathbf{1} } + \mathbf{k}_b(\mathbf{x})^\top \mathbf{K}_b^{-1} \left\{ f(X_n) -  \left( \frac{\mathbf{1}^\top \mathbf{K}_b^{-1} f(X_n) }{ \mathbf{1}^\top \mathbf{K}_b^{-1} \mathbf{1} } \right) \mathbf{1} \right\} ,  \label{eq: gp post mean} \\
k_n[f](\mathbf{x},\mathbf{x}') & \coloneqq \sigma_n^2[f] \left\{ k_b(\mathbf{x},\mathbf{x}') - \mathbf{k}_b(\mathbf{x})^\top \mathbf{K}_b^{-1} \mathbf{k}_b(\mathbf{x}') + \frac{ [ \mathbf{k}_b(\mathbf{x})^\top \mathbf{K}_b^{-1} \mathbf{1} - 1 ] [ \mathbf{k}_b(\mathbf{x}')^\top \mathbf{K}_b^{-1} \mathbf{1} - 1 ] }{ \mathbf{1}^\top \mathbf{K}_b^{-1} \mathbf{1} } \right\} ,  \label{eq: gp post cov}
\end{align}
where $\mathbf{1}$ is a column vector whose elements are all 1.
The matrix $\mathbf{K}_b$ can indeed be inverted since we have assumed that the entries of $X_n \subset \mathcal{X} \setminus \{\mathbf{0}\}$ are distinct. 
To obtain \eqref{eq: gp post cov} we have additionally replaced $\sigma^2$ with $\sigma_n^2[f]$, an estimator for the scale parameter $\sigma$, to be specified in \Cref{subsec: UQ}.
Computing the conditional mean and variance at $\mathbf{x} = \mathbf{0}$ results in the simple formulae
\begin{align}
m_n[f](\mathbf{0}) = \frac{ \mathbf{1}^\top \mathbf{K}_b^{-1} f(X_n) }{  \mathbf{1}^\top \mathbf{K}_b^{-1} \mathbf{1} } \qquad \text{ and } \qquad k_n[f](\mathbf{0},\mathbf{0}) =  \frac{ \sigma_n^2[f] }{ \mathbf{1}^\top \mathbf{K}_b^{-1} \mathbf{1} },  \label{eq: simplified expressions}
\end{align}
since $b(\mathbf{0}) = 0$, and thus $k_b(\mathbf{0},\mathbf{x}) = b(\mathbf{0}) b(\mathbf{x}) k_e(\mathbf{0},\mathbf{x}) = 0$ for all $\mathbf{x} \in \mathcal{X}$.
The proposed \ac{GRE} method returns a (univariate) Gaussian distribution, which can be summarised using the point estimate $m_n[f](\mathbf{0})$ for $f(\mathbf{0})$, together with the $100(1-\alpha)\%$ credible intervals
\begin{align}
C_{\alpha}[f] = \left\{ y \in \mathbb{R} : \frac{ | y - m_n[f](\mathbf{0}) | }{ \sqrt{k_n[f](\mathbf{0},\mathbf{0})} } \leq \Phi^{-1}\left(1 - \frac{\alpha}{2} \right) \right\} \label{eq: credible inter}
\end{align}
where $\Phi$ denotes the standard Gaussian cumulative density function.
The uncertainty quantification provided by \ac{GRE} unlocks additional functionality that was not available to classical extrapolation methods, including optimal experimental design for selecting $X_n$ (\Cref{subsec: design}) and principled statistical methods for estimating uncertain convergence orders (\Cref{subsec: estimate order}).
However, both the accuracy of the point estimate and the coverage of the credible intervals will depend critically on the choice of the scale estimator $\sigma_n^2[f]$ and the choice of covariance function $k_e$.
This important issue of how to select $\sigma_n^2[f]$ and $k_e$ will be discussed next.
An illustration of the proposed \ac{GRE} method is provided in \Cref{fig: illustrating samples}.

\subsection{Conservative Gaussian Process Priors}
\label{subsec: justify prior}

Our set-up involves a non-random function $f$ that is modelled using a prior \ac{GP}.
One would perhaps hope to elicit a prior covariance function $k$ in such a manner that $f$ could plausibly have been generated as a sample from the \ac{GP}.
However, such elicitation is fundamentally difficult; the sample support set of a \ac{GP} is not a vector space and may not even be measurable in general \citep{stein2019comment,karvonen2023small}.
How then can we proceed?
In the applications that we have in mind, it is often possible to identify a symmetric positive-definite kernel such that $f$ belongs to the \ac{RKHS} associated to the kernel, whose elements are real-valued functions on $\mathcal{X}$. 
For example, in numerical analysis it is often possible to reason that $f$ possesses a certain number of derivatives, from which inclusion in certain Sobolev \acp{RKHS} can be deduced.
The approach that we take is to identify the covariance function $k$ with the kernel of an \ac{RKHS}, denoted $\mathcal{H}_k(\mathcal{X})$, in which $f$ is contained.
In particular, for any $k_0^2 \in (0,\infty)$ the space reproduced by the kernel $k$ in \eqref{eq: cov model} consists of functions $g : \mathcal{X} \rightarrow \mathbb{R}$ of the form $g(\mathbf{x}) = \mu + b(\mathbf{x}) e(\mathbf{x})$ where $\mu \in \mathbb{R}$ and $e \in \mathcal{H}_{k_e}(\mathcal{X})$, and in the $k_0^2 \rightarrow \infty$ limit the norm structure of $\mathcal{H}_k(\mathcal{X})$ reduces to a semi-norm \smash{$|g|_{\mathcal{H}_k(\mathcal{X})} \coloneqq \| \mathbf{x} \mapsto (g(\mathbf{x}) - g(\mathbf{0}))/b(\mathbf{0}) \|_{\mathcal{H}_{k_e}(\mathcal{X})}$} induced by the norm structure of $\mathcal{H}_{k_e}(\mathcal{X})$; further background on \ac{RKHS} can be found in \citet{berlinet2011reproducing}.
This construction results in a \emph{conservative} prior \ac{GP}, since with probability one sample paths will be less regular than $f$ when the \ac{RKHS} is infinite-dimensional.
However, there are several senses in which this approach to prior elicitation can be justified.
First, it can be viewed as a form of `objective' prior for \acp{GP}, in the sense that it is not intended to reflect prior belief but is rather intended to induce desirable behaviour in the posterior \ac{GP}.
Second, the choices that we make here will be justified through theoretical guarantees on both point estimation error (\Cref{subsec: faster convergence}) and coverage of credible sets (\Cref{subsec: UQ}).
Third, the introduction of an additional scale estimator $\sigma_n^2[f]$ in \eqref{eq: gp post cov} provides an opportunity to counteract the conservatism of the choice of $k$ through the data-driven estimation of an appropriate scale for the credible sets in \eqref{eq: credible inter}.

\subsection{Higher-Order Convergence Guarantees}
\label{subsec: faster convergence}

The main technical contribution of this paper is to establish sufficient conditions under which the \ac{GRE} point estimate $m_n[f](\mathbf{0})$ in \eqref{eq: simplified expressions} provides a more accurate approximation to the continuum limit $f(\mathbf{0})$ compared to the highest fidelity approximation $f(\mathbf{x}_n)$ on which it is based.
The analysis we present is based on local polynomial reproduction, similar to that described in \citet{wendland2004scattered}.
However, our results differ from existing work in that they are adapted to the non-stationary kernel \eqref{eq: cov model} and quantify the space-filling properties of a design $X_n$ using boxes, rather than balls or cones, since boxes are more natural for the domain $\mathcal{X} \subseteq [0,\infty)^d$ and enable sharper control over the constants involved.

To state our results, we define the \emph{box fill distance} $\rho_{X_n,\mathcal{X}}$ as the supremum value of $\nu$ such that there is a box of the form $[\mathbf{x} , \mathbf{x} + \nu \mathbf{1}]$ contained in $\mathcal{X}$ for which $X_n \cap [\mathbf{x} , \mathbf{x} + \nu \mathbf{1}] = \emptyset$.
Define the constants $\gamma_d$ using the induction $\gamma_d \coloneqq 2d(1+\gamma_{d-1})$ with base case $\gamma_1 \coloneqq 2$.
Our first main result, whose proof is contained in \Cref{app: proof thm 1}, concerns the finite-smoothness case where polynomial-order acceleration can be achieved:

\begin{theorem}[Higher-order convergence; finite smoothness]
\label{prop: mean conver}
Let $\mathcal{X} = [\mathbf{0},\mathbf{1}] \subset \mathbb{R}^d$ and $X_n \subset \mathcal{X}$.
Let $\mathcal{X}_h = [\mathbf{0},h\mathbf{1}]$ and $X_n^h = \{h \mathbf{x} : \mathbf{x} \in X_n\}$ where $h \in (0,1]$.
Assume that $f \in \mathcal{H}_k(\mathcal{X})$, $b \in \pi_r(\mathcal{X})$ and $k_e \in C^{2s}(\mathcal{X} \times \mathcal{X})$.
Let $m_n^h[f](\mathbf{0})$ denote the point estimate \eqref{eq: simplified expressions} based on data $f(X_n^h)$.
Then there is an explicit $n$- and $h$-independent constant $C_{r,s}$, defined in the proof, such that
\begin{align*}
\underbrace{ |f(\mathbf{0}) - m_n^h[f](\mathbf{0})| }_{\text{\normalfont extrapolation error}} \; \leq \; C_{r,s} \rho_{X_n,\mathcal{X}}^s |f|_{\mathcal{H}_k(\mathcal{X})} \; \underbrace{ \phantom{|} h^s}_{ \text{\normalfont acceleration}} \; \underbrace{ \|b\|_{L^\infty(\mathcal{X}_h)} }_{\text{\normalfont original bound }}
\end{align*}
whenever the box fill distance satisfies $\rho_{X_n,\mathcal{X}} \leq 1 / (\gamma_d (r + 2s))$.
\end{theorem}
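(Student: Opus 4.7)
By the RKHS structure of \Cref{subsec: justify prior}, the assumption $f \in \mathcal{H}_k(\mathcal{X})$ gives a decomposition $f(\mathbf{x}) = \mu + b(\mathbf{x}) e(\mathbf{x})$ with $\mu = f(\mathbf{0})$, $e \in \mathcal{H}_{k_e}(\mathcal{X})$, and $|f|_{\mathcal{H}_k(\mathcal{X})} = \|e\|_{\mathcal{H}_{k_e}(\mathcal{X})}$. The weights in \eqref{eq: simplified expressions} satisfy $\sum_i w_i = 1$, so by the reproducing property of $k_e$,
\begin{align*}
f(\mathbf{0}) - m_n^h[f](\mathbf{0}) = -\sum_i w_i b(\mathbf{x}_i^h) e(\mathbf{x}_i^h) = -\langle e,\, \psi_w \rangle_{\mathcal{H}_{k_e}(\mathcal{X})}, \qquad \psi_w := \sum_i w_i b(\mathbf{x}_i^h) k_e(\cdot, \mathbf{x}_i^h).
\end{align*}
Cauchy--Schwarz gives $|f(\mathbf{0}) - m_n^h[f](\mathbf{0})| \leq |f|_{\mathcal{H}_k(\mathcal{X})} \cdot \|\psi_w\|_{\mathcal{H}_{k_e}(\mathcal{X})}$. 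The kriging weights from \eqref{eq: simplified expressions} are the BLUE in the flat-prior limit and hence minimise $\|\psi_{\tilde w}\|_{\mathcal{H}_{k_e}}$ over all $\tilde w$ with $\sum_i \tilde w_i = 1$; it therefore suffices to exhibit a single convenient $\tilde w$ whose $\|\psi_{\tilde w}\|_{\mathcal{H}_{k_e}}$ can be bounded directly.

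The plan is to construct $\tilde w_i$ on $X_n$---the construction is transferred to $X_n^h$ automatically, because the polynomial reproduction condition at $\mathbf{0}$ is invariant under $\mathbf{x} \mapsto h\mathbf{x}$---so that (a) $\sum_i \tilde w_i q(\mathbf{x}_i) = q(\mathbf{0})$ for every $q \in \pi_{r+2s-1}(\mathbb{R}^d)$; (b) the support of $\tilde w_i$ lies inside a reference box $[\mathbf{0}, C(r,s,d)\, \rho_{X_n,\mathcal{X}} \mathbf{1}]$; and (c) $\sum_i |\tilde w_i| \leq C(r,s,d)$. I would proceed inductively on $d$: the one-dimensional case uses the box fill distance to find one data point in each of $r+2s$ consecutive sub-intervals of length slightly greater than $\rho_{X_n,\mathcal{X}}$ near $0$, from which a local Vandermonde system produces the weights with explicit bounds (yielding $\gamma_1 = 2$); the recursion $\gamma_d = 2d(1+\gamma_{d-1})$ then propagates these bounds through tensorisation across the remaining coordinates, and the hypothesis $\rho_{X_n,\mathcal{X}} \leq 1/(\gamma_d(r+2s))$ is exactly what keeps the reference box inside $\mathcal{X}$.

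With these weights in hand, expand
\begin{align*}
\|\psi_{\tilde w}\|_{\mathcal{H}_{k_e}}^2 = \sum_{i,j} \tilde w_i \tilde w_j b(\mathbf{x}_i^h) b(\mathbf{x}_j^h)\, k_e(\mathbf{x}_i^h, \mathbf{x}_j^h)
\end{align*}
and apply the double Taylor expansion of $k_e$ at $(\mathbf{0},\mathbf{0})$ to total degree $2s-1$, enabled by $k_e \in C^{2s}(\mathcal{X} \times \mathcal{X})$ and with integral remainder $|R(\mathbf{x},\mathbf{x}')| \leq C_s (\|\mathbf{x}\|_\infty + \|\mathbf{x}'\|_\infty)^{2s}$. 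Each polynomial term factorises as $\bigl[\sum_i \tilde w_i b(\mathbf{x}_i^h) (\mathbf{x}_i^h)^\alpha\bigr] \bigl[\sum_j \tilde w_j b(\mathbf{x}_j^h) (\mathbf{x}_j^h)^\beta\bigr]$ for multi-indices with $|\alpha|+|\beta| \leq 2s-1$, and since $b(\mathbf{x})\mathbf{x}^\alpha \in \pi_{r+|\alpha|} \subseteq \pi_{r+2s-1}$ vanishes at $\mathbf{0}$, property (a) annihilates every such factor. Only the remainder survives, and the bound $\|\mathbf{x}_i^h\|_\infty \leq C(r,s,d)\, \rho_{X_n,\mathcal{X}}\, h$ on the support combined with (c) yields $\|\psi_{\tilde w}\|_{\mathcal{H}_{k_e}}^2 \leq C_{r,s,d}^2\, \rho_{X_n,\mathcal{X}}^{2s}\, h^{2s}\, \|b\|_{L^\infty(\mathcal{X}_h)}^2$. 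Combined with the Cauchy--Schwarz estimate this delivers the claimed bound, with $C_{r,s}$ the explicit accumulation of constants.

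The technical heart of the argument is the construction of the weights $\tilde w_i$: a box-adapted analogue of the local polynomial reproduction of \citet{wendland2004scattered}. Standard statements use ball fill distances around interior points, which is a poor match for a target $\mathbf{0}$ sitting at a corner of $\mathcal{X} \subseteq [0,\infty)^d$ and for the anisotropic rescaling $\mathbf{x} \mapsto h\mathbf{x}$; boxes admit a dimension-separable inductive construction, but keeping the polynomial degree $r+2s-1$, the locality radius of (b), the $\ell^1$ bound of (c), and the fill-distance threshold $1/(\gamma_d(r+2s))$ all simultaneously aligned through the recursion $\gamma_d = 2d(1+\gamma_{d-1})$ requires delicate bookkeeping. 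Everything else---the RKHS Cauchy--Schwarz reduction, the BLUE optimality step, and the double Taylor expansion of $k_e$---is essentially standard once the weights are available.
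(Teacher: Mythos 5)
Your architecture matches the paper's almost exactly: reduce the extrapolation error to the $\mathcal{H}_{k_e}$-norm of a representer $\psi_{\tilde w}$ via the decomposition $f = f(\mathbf{0}) + b\,e$ and Cauchy--Schwarz, invoke optimality of the kriging weights among normalised weights to substitute any convenient $\tilde{\mathbf{w}}$, choose $\tilde{\mathbf{w}}$ by box-local reproduction of $\pi_{r+2s-1}$ near $\mathbf{0}$, and kill the polynomial part of a Taylor expansion of $k_e$ using $b(\mathbf{0})=0$ so that only the $O(\lambda^{2s})$ remainder survives. The only substantive deviation is cosmetic: you expand $k_e$ jointly at $(\mathbf{0},\mathbf{0})$ so that the polynomial terms factorise and each factor is annihilated, whereas the paper expands $k_e(\mathbf{x},\cdot)$ about $\mathbf{x}$ and annihilates only the inner sum over $j$; both work under $k_e \in C^{2s}(\mathcal{X}\times\mathcal{X})$. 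The one place your proposal falls short of a proof is the load-bearing lemma itself, which you correctly identify and correctly state (reproduction of degree $r+2s-1$, support in a box of side $\asymp \gamma_d(r+2s)\rho_{X_n,\mathcal{X}}$, bounded $\ell^1$ norm) but only sketch. Your proposed mechanism --- a one-dimensional Vandermonde construction tensorised across coordinates, with $\gamma_d = 2d(1+\gamma_{d-1})$ arising from the tensorisation --- is not how these constants actually arise and is not obviously executable with the stated bounds: a naive Vandermonde/tensor construction requires a full grid of occupied subcubes and gives $\ell^1$ bounds that blow up with the node separation. The paper instead imports Lemma~1 of \citet{madych1992bounds} (a Markov-type bound $\|p\|_{L^\infty(\mathcal{X})} \leq e^{2d\gamma_d(\ell+1)}\|p\|_{L^\infty(X_n)}$ for scattered points meeting a subcube-covering condition, which is where $\gamma_d$ originates) and then obtains the weights non-constructively by a norming-set argument: injectivity of the sampling operator, Hahn--Banach extension, and Riesz representation in the dual ($\ell^1$) norm. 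So the proposal is a correct and faithful outline of the paper's strategy, but the existence of the local polynomial reproduction weights --- the genuinely technical content --- is asserted rather than established, and the sketched route to it would need to be replaced by (or reduced to) a Madych--Nelson-type argument.
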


\noindent To interpret the conclusion of \Cref{prop: mean conver}, fix $n$ to be large enough that the constraint on the box fill distance is satisfied and examine the convergence of $m_n^h[f](\mathbf{0})$ to $f(\mathbf{0})$ as $h$ is decreased.
If the problem possesses no additional smoothness to exploit (i.e., $s=0$) then convergence is gated at the rate $\|b\|_{L^\infty(\mathcal{X}_h)}$ of the original numerical method, irrespective of the number $n$ of data that are used to train the \ac{GP}.
On the other hand, if $f$ is regular enough that the normalised error functional $\mathbf{x} \mapsto (f(\mathbf{x}) - f(\mathbf{0})) / b(\mathbf{x})$ is an element of the \ac{RKHS} $\mathcal{H}_{k_e}(\mathcal{X})$ of an $s$-smooth kernel (implied by $|f|_{\mathcal{H}_k(\mathcal{X})} < \infty$), then the $h^s$ factor provides acceleration of polynomial order $s$ over the convergence rate of the original numerical method.
To the best of our knowledge these theoretical results are the first of their kind for convergence acceleration using \acp{GP}.
\Cref{ex: finite diff,ex: Romberg} illustrate cases in which our regularity assumptions are satisfied.
For the reader's convenience, we recall some standard examples of kernels and their associated smoothness properties in \Cref{app: kernels}.

\begin{remark}[Sample efficiency compared to Richardson]
A notable feature of Richardson extrapolation is that, under appropriate regularity assumptions, acceleration of order $s$ can be achieved using a dataset of size $n = s + 1$ in dimension $d = 1$.
For example, if $f$ is first-order accurate with $f(h) = f(0) + c_1 h + O(h^2)$, then the line that passes through data $(h,f(h))$ and $(2h,f(2h))$ has intercept $2f(h) - f(2h)$, which is equal to $2[f(0) + c_1 h + O(h^2)] - [f(0) + 2c_1 h + O(h^2)] = f(0) + O(h^2)$; an additional order of accuracy is gained. 
Our result is less sample-efficient, in the sense that $n \geq 2r + 4s$ data are in principle required, due to the constraint on the box fill distance in \Cref{prop: mean conver}.
However, we speculate that this lower bound on $n$ is not tight, and we empirically confirm that order-$s$ acceleration is observed at smaller sample sizes $n$ in \Cref{ex: finite diff,ex: Romberg}.
\end{remark}

On the other hand, if there is infinite smoothness to exploit, then we may consider increasing the value of $s$ in \Cref{prop: mean conver} to obtain an arbitrarily fast convergence rate as $h \rightarrow 0$, albeit with an increasing number $n$ of training points required for the bound to hold.
This result goes beyond classical Richardson extrapolation, but is natural within the \ac{GP} framework.
\Cref{cor: spectral}, whose proof is contained in \Cref{app: proof of spectral}, is obtained by carefully tracking the $s$-dependent constants appearing in \Cref{prop: mean conver}:

\begin{theorem}[Higher-order convergence; infinite smoothness] \label{cor: spectral}
In the setting of \Cref{prop: mean conver}, assume further that $k_e \in C^{\infty}(\mathcal{X} \times \mathcal{X})$ and that $\sup_{\mathbf{x},\mathbf{y} \in \mathcal{X}} \sum_{|\bm{\beta}| = 2s} |\partial_{\mathbf{y}}^{\bm{\beta}} k_e(\mathbf{x},\mathbf{y})| \leq C_k^{2s} (2s)!$ for some constant $C_k$.
Then there exists an explicit $h$-independent constant $C_{n,r,s}$, defined in the proof, such that
\begin{align*}
\underbrace{ |f(\mathbf{0}) - m_n^h[f](\mathbf{0})| }_{\text{\normalfont extrapolation error}} \; \leq \; C_{n,r,s} |f|_{\mathcal{H}(k)} \; \underbrace{ 
\phantom{|}h^{ \scaleto{ \frac{1}{ 4 \gamma_d \rho_{X_n,\mathcal{X}} } }{15pt} } }_{\text{\normalfont acceleration}} \; \underbrace{ \|b\|_{L^\infty(\mathcal{X}_h)} }_{\text{\normalfont original bound }}
\end{align*} 
whenever the box fill distance satisfies $\rho_{X_n,\mathcal{X}} \leq \min \{ 1 / (2 \gamma_d (r+1)) , 1 / (2 d^{1/2} \gamma_d e^{4 d \gamma_d + 1} ) \}$.
\end{theorem}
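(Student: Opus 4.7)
The plan is to apply \Cref{prop: mean conver} with a value of $s$ that grows as $1/\rho_{X_n,\mathcal{X}}$ shrinks, and to exploit the derivative hypothesis on $k_e$ to control the resulting $s$-dependence of $C_{r,s}$. Concretely, I set $s_\star \coloneqq \lfloor 1/(4\gamma_d \rho_{X_n,\mathcal{X}}) \rfloor$, then verify using the first clause $\rho_{X_n,\mathcal{X}} \leq 1/(2\gamma_d(r+1))$ of the hypothesis that $r + 2 s_\star \leq 1/(\gamma_d \rho_{X_n,\mathcal{X}})$, so the box-fill-distance hypothesis of \Cref{prop: mean conver} is met at $s = s_\star$. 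Applying \Cref{prop: mean conver} then yields a bound with acceleration factor $h^{s_\star}$; the discrepancy $h^{s_\star} \leq h^{-1} \cdot h^{1/(4\gamma_d \rho_{X_n,\mathcal{X}})}$ created by the floor contributes only an $h$-independent factor (since $\rho_{X_n,\mathcal{X}}$ is fixed), and can be absorbed into the final constant $C_{n,r,s}$.

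The technical heart of the argument is to show that $C_{r,s_\star} \rho_{X_n,\mathcal{X}}^{s_\star}$ stays bounded as $\rho_{X_n,\mathcal{X}} \to 0$. Returning to the proof of \Cref{prop: mean conver}, the constant $C_{r,s}$ factors into (i) a local polynomial reproduction prefactor whose $s$-growth is controlled via the box-fill-distance constraint $r+2s \leq 1/(\gamma_d \rho_{X_n,\mathcal{X}})$ combined with Stirling's formula, and (ii) a Taylor-remainder factor of the form $\sqrt{\max_{|\bm{\beta}|=2s} \|\partial^{\bm{\beta}}_{\mathbf{y}} k_e\|_{L^\infty(\mathcal{X}\times\mathcal{X})}}$, which under the stated derivative bound is at most $C_k^{s}\sqrt{(2s)!}$. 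Multiplying by $\rho_{X_n,\mathcal{X}}^{s_\star}$ and applying Stirling to $(2s_\star)!$, together with the ceiling $r + 2s_\star \leq 1/(\gamma_d \rho_{X_n,\mathcal{X}})$, majorises the product by a geometric-series expression of the schematic form
\[
C_{r,s_\star}\rho_{X_n,\mathcal{X}}^{s_\star} \; \leq \; A(r,d) \cdot \bigl[\, 2 d^{1/2} \gamma_d \, e^{4 d \gamma_d + 1} \, \rho_{X_n,\mathcal{X}} \,\bigr]^{s_\star},
\]
up to at most polynomial-in-$s_\star$ factors. The specific constant $e^{4 d \gamma_d + 1}$ arises here from Stirling applied to $(2s_\star)!$ together with the $[\gamma_d(r+2s)]^{2s}$-type reproduction growth, and the second clause $\rho_{X_n,\mathcal{X}} \leq 1/(2 d^{1/2}\gamma_d e^{4 d \gamma_d + 1})$ of the hypothesis is precisely what is needed to force the base above to be at most $1/2$, yielding a uniform bound on $C_{r,s_\star}\rho_{X_n,\mathcal{X}}^{s_\star}$.

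Combining the two pieces yields the claim, with $C_{n,r,s}$ absorbing the uniform bound on $C_{r,s_\star} \rho_{X_n,\mathcal{X}}^{s_\star}$, the $h^{-1}$ slack from the floor, and the $r$- and $n$-dependent prefactors inherited from \Cref{prop: mean conver}. The main obstacle will be step (ii): one must trace through the proof of \Cref{prop: mean conver} carefully enough to extract an explicit enough form of $C_{r,s}$, identify the $2s$-th kernel derivative norm as the sole source of factorial growth, and match the numerical constants so that the specific $e^{4 d \gamma_d + 1}$ appearing in the hypothesis on $\rho_{X_n,\mathcal{X}}$ is exactly what Stirling produces. Once $C_{r,s_\star} \rho_{X_n,\mathcal{X}}^{s_\star}$ has been tamed, the rest — verifying the two box-fill-distance constraints, the choice of $s_\star$, and the floor bookkeeping — is routine.
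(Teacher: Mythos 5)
Your overall strategy is the same as the paper's: invoke \Cref{prop: mean conver} with a $\rho_{X_n,\mathcal{X}}$-dependent smoothness level, use the derivative-growth hypothesis to cancel the factorial in the Taylor-remainder constant, bound $1/\sqrt{(2s)!}$ via Stirling, and use the second clause on $\rho_{X_n,\mathcal{X}}$ to control the resulting base. However, there is a genuine gap in your handling of the exponent. You set $s_\star = \lfloor 1/(4\gamma_d\rho_{X_n,\mathcal{X}})\rfloor$, so $s_\star \leq 1/(4\gamma_d\rho_{X_n,\mathcal{X}})$ and hence $h^{s_\star} \geq h^{1/(4\gamma_d\rho_{X_n,\mathcal{X}})}$ for $h \in (0,1]$ --- the inequality goes the wrong way. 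Your patch $h^{s_\star} \leq h^{-1}\, h^{1/(4\gamma_d\rho_{X_n,\mathcal{X}})}$ cannot be absorbed into an $h$-independent constant: $h^{-1}$ diverges as $h \to 0$, and no constant $C$ satisfies $h^{s_\star} \leq C\, h^{1/(4\gamma_d\rho_{X_n,\mathcal{X}})}$ uniformly in $h$ when $s_\star$ is strictly smaller than the target exponent. As written, your argument only delivers the weaker rate $h^{\,1/(4\gamma_d\rho_{X_n,\mathcal{X}}) - 1}$.

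The fix, which is what the paper does, is to take $s^\star$ as large as the box-fill constraint of \Cref{prop: mean conver} permits, namely $s^\star = \lfloor -r/2 + 1/(2\gamma_d\rho_{X_n,\mathcal{X}})\rfloor$; the first clause $\rho_{X_n,\mathcal{X}} \leq 1/(2\gamma_d(r+1))$ then guarantees $s^\star \geq 1/(4\gamma_d\rho_{X_n,\mathcal{X}})$, i.e.\ the chosen smoothness \emph{overshoots} the target exponent. One then keeps $h$ inside the $s$-dependent expression, writing the bound in the form $e^{2d\gamma_d r}\bigl(\tfrac{1}{2}C_k d^{1/2} e^{4d\gamma_d+1} h / s\bigr)^{s}$, and observes that $t \mapsto (a/t)^t$ is \emph{decreasing} on $[1/(4\gamma_d\rho_{X_n,\mathcal{X}}),\infty)$ precisely because of the second clause on $\rho_{X_n,\mathcal{X}}$; replacing $s^\star$ by the smaller value $1/(4\gamma_d\rho_{X_n,\mathcal{X}})$ therefore only increases the bound and yields exactly the stated rate. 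So the role of the clause $\rho_{X_n,\mathcal{X}} \leq 1/(2d^{1/2}\gamma_d e^{4d\gamma_d+1})$ is to secure this monotonicity, not (as you describe) to force a geometric base below $1/2$ after first separating off $h^{s_\star}$. A secondary, more cosmetic point: in \Cref{prop: mean conver} the Taylor-remainder constant is $C_{\mathbf{x}}^{(2s)} = ((2s)!)^{-1}\sup_{\mathbf{z}}\sum_{|\bm{\beta}|=2s}\partial^{\bm{\beta}}_{\mathbf{z}}k_e$, so the hypothesis gives $C_k^{(2s)} \leq C_k^{2s}$ directly and the surviving $\sqrt{(2s)!}$ sits in the \emph{denominator} of $C_{r,s}$ (whence $1/\sqrt{(2s)!} \leq (e/(2s))^s$); your description places the factorial in the numerator.
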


\noindent The derivative growth condition in the statement of \Cref{cor: spectral} holds for most popular smooth kernels $k_e$, including the Gaussian kernel.
The order of acceleration is now determined by the box fill distance, which reflects the general phenomenon that ``more samples are required to exploit smoothness'' \citep{cabannes2023many}.

To assess the sharpness of our results we first consider the problem of approximating derivatives using finite differences; a setting where extrapolation methods are routinely used \citep[see Section~6.7 of][]{brezinski2013extrapolation}:

\begin{example}[Higher-order convergence for finite difference approximation]
\label{ex: finite diff}
Consider numerical differentiation of a suitably regular function $\psi : \mathbb{R} \rightarrow \mathbb{R}$.
The \emph{central difference method}
$$
f(x) \coloneqq \frac{ \psi(t+x) - \psi(t-x) }{ 2x } , \qquad x > 0,
$$
is a second-order approximation to $\psi'(t)$ for a given $t \in \mathbb{R}$.
To make use of our results we set $b(x) = x^2$, from \eqref{eq: cov model}, and suppose that $\psi(t+x) = c_0 + c_1 x + c_2 x^2 + c_3(x) x^3 $ for some $c_0,c_1,c_2 \in \mathbb{R}$ and some $x$-dependent coefficient $c_3(x)$.
The normalised error is
$$
e(x) = \frac{ f(x) - f(0) }{ b(x) } 
= \frac{ \psi(t+x) - \psi(t - x) }{ 2 x \cdot x^2 } - \frac{ \psi'(t) }{ x^2 }
= \frac{ c_3(x)-c_3(-x) }{ 2 } , 
$$
so that the assumptions of \Cref{prop: mean conver} are satisfied when $x \mapsto c_3(x)$ and $x \mapsto c_3(-x)$ are elements of $\mathcal{H}_{k_e}(\mathcal{X})$, and $k_e \in C^{2s}(\mathcal{X} \times \mathcal{X})$
(the latter condition can be satisfied for example by taking $k_e$ to be either a Mat\'{e}rn kernel or a Wendland kernel with appropriate smoothness level; see \Cref{app: kernels}).
As a test problem, consider $\psi(t) = \sin(10t) + 1_{t > 0} t^{s+4}$, with $\psi'(0) = 10$ the value to be estimated; in \Cref{app: check assum} we verify that our assumption on the form of $\psi$ is satisfied.
The sample size $n = 5$ was fixed and the initial design $X_n = \{0.2,0.4,0.6,0.8,1\}$ was scaled by a factor $h$ to obtain a range of designs $X_n^h \subset (\mathbf{0},h\mathbf{1}]$.
In these experiments we work in 100 digits of numerical precision, so that rounding error can be neglected.

Results for $s=2$ are reported in \Cref{fig: convergence rates}, with the \emph{absolute error} $|f(0) - m_n^h[f](0)|$ plotted as a function of $h$ in the left panel.
These results reveal that the orders of acceleration predicted by our analysis are achieved, despite the sample size $n$ being less than that required to fulfill the box fill distance requirement in \Cref{prop: mean conver}.
The \ac{GRE} method demonstrated accuracy comparable to Richardson's extrapolation method (and superior to other classical extrapolation methods) when the kernel was chosen to match the smoothness of the task at hand.
Interestingly, the most accurate extrapolation was provided by \ac{GRE} with the Gaussian kernel, despite this kernel being too smooth for the task at hand.
The coverage of \ac{GRE} credible intervals was also investigated, with the \emph{relative error}
$(f(0) - m_n^h[f](0)) / \sqrt{k_n[f](0,0)}$ 
plotted as a function of $h$ in the right panel.
It was found that credible intervals are asymptotically conservative in the case where a kernel with finite smoothness was used, in the sense that the relative error appeared to vanish in the $h \rightarrow 0$ limit.
However, in the case of the Gaussian kernel the credible intervals appeared to be asymptotically calibrated, in the sense that the relative error appeared to converge to a finite value ($\approx 3$) in the $h \rightarrow 0$ limit.
Theoretical analysis of the \ac{GRE} credible intervals is provided in \Cref{subsec: UQ}.
\end{example}

\begin{figure}[t!]
    \centering
    \includegraphics[width = 0.5\textwidth]{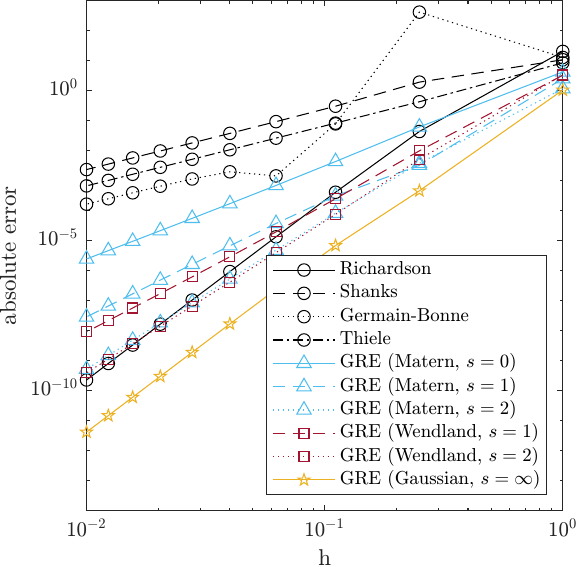}
    \includegraphics[width = 0.48\textwidth]{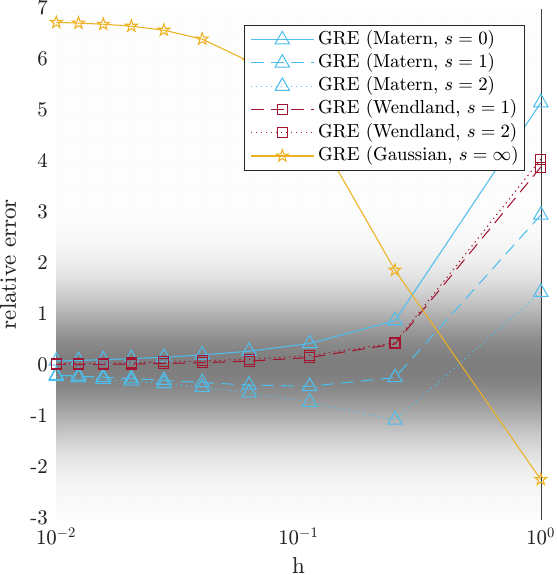}
    \caption{Accelerating the central difference method; \Cref{ex: finite diff}.
    The left panel presents the absolute error $|f(0) - m_n^h[f](0)|$, while the right panel presents the relative error $(f(0) - m_n^h[f](0)) / \sqrt{k_n[f](0,0)}$.
    Classical extrapolations methods (black circles) were compared to our Gauss--Richardson Extrapolation (GRE) method, with either a Mat\'{e}rn (blue triangles), Wendland (red squares), or Gaussian (yellow stars) kernel.
    The true smoothness in this case is $s = 2$, while the legend indicates the level of smoothness assumed by the kernel.
    Kernel length-scale parameters were set to $\ell = 1$ and the scale estimator $\sigma_n^2[f]$ proposed in \Cref{subsec: UQ} was used.
    Shaded regions in the right panel correspond to the density function of the standard normal.
    }
    \label{fig: convergence rates}
\end{figure}

Though they accurately describe the convergence acceleration provided by the \ac{GRE} method, there are at least two apparent drawbacks with \Cref{prop: mean conver,cor: spectral}.
The first is that these results require the error bound $b$ to be a polynomial; this is an intrinsic part of our proof strategy, which is based on local polynomial reproduction, and cannot easily be relaxed. 
However, for applications in which a non-polynomial error bound $\tilde{b}$ naturally arises, we may still be able to construct a polynomial error bound $b \in \pi_r(\mathcal{X})$ for some $r$ that satisfies $\tilde{b}(\mathbf{x}) \leq b(\mathbf{x})$ for all $\mathbf{x} \in \mathcal{X}$ and enables the conclusion of \Cref{prop: mean conver} to be applied.
The second limitation is that, for many iterative numerical methods that produce a convergent sequence of approximations to the continuum quantity $f(\mathbf{0})$ of interest, there is not always the notion of a continuum of discretisation parameters $\mathbf{x}$ that can be exploited in the \ac{GRE} framework.
This second issue can be elegantly addressed using the notion of an $s$-smooth extension, which we introduce next.

\subsection{The Generality of Continua}
\label{subsec: continuous and discrete}

Several numerical methods do not admit a continuum of discretisation parameters $\mathbf{x}$ that can be exploited in the \ac{GRE} method.
For example, the conjugate gradient algorithm for approximating the solution to a linear system of equations produces a convergent sequence of approximations, but is in no sense continuously indexed.
The aim of this section is to demonstrate that iterative methods, which produce a sequence of approximations converging to a limiting quantity of interest, do in fact fall within our framework.
The idea, roughly speaking, is to construct a function $f$ whose values $f(x_n)$ on a convergent sequence, such as $x_n = 1/n$, coincide with with the approximation produced after $n$ iterations of the numerical method.
The challenge is to show that such a function $f$ exists with sufficient regularity that the results of \Cref{subsec: faster convergence} can be applied.
Our main tool is the idea of an $p$-smooth extension, which is the content of \Cref{prop: discrete}.
Let $\min(\mathbf{z}) \coloneqq \min\{z_1,\dots,z_d\}$ for $\mathbf{z} \in \mathbb{R}^d$.

\begin{proposition}[$p$-smooth extension] \label{prop: discrete}
Suppose that $C^p(\mathcal{X}) \subset \mathcal{H}_{k_e}(\mathcal{X})$ for some $p \in \mathbb{N}$.
Let $(\mathbf{x}_n)_{n \in \mathbb{N}} \subset \mathcal{X} \setminus \{\mathbf{0}\}$ be such that $\mathbf{x}_{n+1} < \mathbf{x}_n$ componentwise and $\mathbf{x}_n \rightarrow \mathbf{0}$.
Let $(y_n)_{n \in \mathbb{N}}$ be a convergent sequence with limit $y_\infty$, such that the normalised errors $e_n \coloneqq (y_n - y_\infty) / b(\mathbf{x}_n)$ satisfy $|e_n - e_{n+1}| / \min (\mathbf{x}_n - \mathbf{x}_{n+1})^p \rightarrow 0$.
Then there exists a function $f$ such that $f(\mathbf{0}) = y_\infty$, $f(\mathbf{x}_n) = y_n$ for each $n \in \mathbb{N}$, and $|f|_{\mathcal{H}_k(\mathcal{X})} < \infty$.
\end{proposition}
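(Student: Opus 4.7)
The plan is to seek $f$ in the conjugate form $f(\mathbf{x}) = y_\infty + b(\mathbf{x}) e(\mathbf{x})$ for some $e \in C^p(\mathcal{X})$ satisfying $e(\mathbf{x}_n) = e_n$. Since $b(\mathbf{0}) = 0$, this automatically yields $f(\mathbf{0}) = y_\infty$ and $f(\mathbf{x}_n) = y_\infty + b(\mathbf{x}_n) e_n = y_n$. Moreover the normalised error $(f(\mathbf{x}) - f(\mathbf{0}))/b(\mathbf{x})$ coincides with $e(\mathbf{x})$ on $\mathcal{X} \setminus \{\mathbf{0}\}$, so by the semi-norm formula recalled in \Cref{subsec: justify prior} together with the hypothesis $C^p(\mathcal{X}) \subset \mathcal{H}_{k_e}(\mathcal{X})$ we obtain $|f|_{\mathcal{H}_k(\mathcal{X})} = \|e\|_{\mathcal{H}_{k_e}(\mathcal{X})} < \infty$. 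The problem thus reduces to constructing a $C^p$ function on $\mathcal{X}$ that interpolates the data $\{(\mathbf{x}_n, e_n)\}_{n \in \mathbb{N}}$.

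First I would show that $(e_n)_{n \in \mathbb{N}}$ is Cauchy. Writing $|e_n - e_{n+1}| = \epsilon_n \min(\mathbf{x}_n - \mathbf{x}_{n+1})^p$ with $\epsilon_n \to 0$, and using $\min(\mathbf{x}_n - \mathbf{x}_{n+1}) \leq x_{1,n} - x_{1,n+1}$ together with boundedness of $\mathcal{X}$, the series $\sum_n |e_n - e_{n+1}|$ is dominated by a finite telescoping sum in the first coordinate, so $e_n \to e_\infty$ for some $e_\infty \in \mathbb{R}$. I would then reduce to a one-dimensional construction by setting $e(\mathbf{x}) := \tilde e(x_1)$, where $\tilde e : [0, M_1] \to \mathbb{R}$ with $M_1 := \sup_{\mathbf{x} \in \mathcal{X}} x_1$ is a $C^p$ function chosen so that $\tilde e(x_{1,n}) = e_n$ and $\tilde e(0) = e_\infty$; the sequence $(x_{1,n})$ is strictly decreasing to $0$ by the componentwise hypothesis. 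Concretely, fix a $C^p$ transition function $\phi : [0,1] \to [0,1]$ with $\phi(0) = 0$, $\phi(1) = 1$ and $\phi^{(k)}(0) = \phi^{(k)}(1) = 0$ for $k = 1, \ldots, p$ (for instance the Hermite smoothstep polynomial of degree $2p+1$), and define
\begin{equation*}
\tilde e(x) := e_{n+1} + (e_n - e_{n+1}) \, \phi\!\left(\frac{x - x_{1,n+1}}{x_{1,n} - x_{1,n+1}}\right), \qquad x \in [x_{1,n+1}, x_{1,n}],
\end{equation*}
together with $\tilde e(0) := e_\infty$ and an arbitrary $C^p$ extension past $x_{1,1}$. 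The prescribed vanishing of $\phi^{(k)}$ at $0$ and $1$ ensures that $\tilde e$ is automatically $C^p$ at every interior knot $x_{1,n}$.

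The main obstacle is verifying $C^p$ regularity at $x = 0$. Direct differentiation on $[x_{1,n+1}, x_{1,n}]$ yields
\begin{equation*}
|\tilde e^{(k)}(x)| \;\leq\; \|\phi^{(k)}\|_{L^\infty([0,1])} \cdot \frac{|e_n - e_{n+1}|}{(x_{1,n} - x_{1,n+1})^k}, \qquad 1 \leq k \leq p.
\end{equation*}
Using $\min(\mathbf{x}_n - \mathbf{x}_{n+1}) \leq x_{1,n} - x_{1,n+1}$ the right-hand side is bounded by $\|\phi^{(k)}\|_{L^\infty} \cdot \epsilon_n \cdot (x_{1,n} - x_{1,n+1})^{p-k}$, which tends to $0$ as $n \to \infty$ since $\epsilon_n \to 0$ and $(x_{1,n} - x_{1,n+1})^{p-k}$ is bounded (and even vanishes for $k < p$). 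Hence $\tilde e^{(k)}(x) \to 0$ as $x \to 0^+$ for every $1 \leq k \leq p$, and combined with $\tilde e(x) \to e_\infty = \tilde e(0)$ the classical iterated one-sided extension theorem for derivatives yields $\tilde e \in C^p([0, M_1])$ with all derivatives vanishing at $0$. Consequently $e(\mathbf{x}) = \tilde e(x_1)$ belongs to $C^p(\mathcal{X})$, and the construction of $f$ is complete.
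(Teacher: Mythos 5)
Your proposal is correct and follows the same overall strategy as the paper: decompose $f(\mathbf{x}) = y_\infty + b(\mathbf{x})e(\mathbf{x})$, observe that it suffices to build a $C^p$ interpolant $e$ of the normalised errors $e_n$, glue consecutive values together with a smooth transition function whose derivatives vanish at the endpoints, and use the hypothesis $|e_n - e_{n+1}|/\min(\mathbf{x}_n - \mathbf{x}_{n+1})^p \to 0$ to show all derivatives up to order $p$ tend to zero at the origin. The one genuine difference is your reduction to a univariate construction: you set $e(\mathbf{x}) = \tilde e(x_1)$ and interpolate only along the (strictly decreasing) first coordinates, whereas the paper builds a multivariate $e$ directly, using a $d$-fold product of transition functions on the nested box ``annuli'' $[\mathbf{0},\mathbf{x}_n]\setminus[\mathbf{0},\mathbf{x}_{n+1}]$. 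Your route is arguably cleaner — it sidesteps the multivariate gluing entirely and the bound $\min(\mathbf{x}_n - \mathbf{x}_{n+1}) \leq x_{1,n} - x_{1,n+1}$ feeds directly into the derivative estimate — at the cost of producing an interpolant that is constant in the remaining coordinates, which is perfectly acceptable since the proposition only requires interpolation at the points $\mathbf{x}_n$. Your treatment of the Cauchy property of $(e_n)$ and of $C^p$ regularity at $x=0$ (via the iterated one-sided derivative-limit argument) is more explicit than the paper's and fills in details the paper leaves implicit.
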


\noindent A polynomial expansion can be used to establish the preconditions of \Cref{prop: discrete}, as we illustrate in the following result:

\begin{corollary}[Sufficient conditions for $p$-smooth extension in $d = 1$] \label{cor: second order expand}
    Let $(x_n,y_n)_{n \in \mathbb{N}} \subset (0,\infty) \times \mathbb{R}$ be such that $x_n$ converges monotonically to 0, with 
    $(x_n^{p+1} - x_{n+1}^{p+1}) (x_n - x_{n+1})^{-p} \rightarrow 0$, $x_n^{p+2}(x_n - x_{n-1})^{-p} \rightarrow 0$
    and $y_n = y_\infty + C_1 x_n^{r} + C_2 x_n^{r + p + 1} + O(x_n^{r + p + 2})$ for some constants $y_\infty, C_1, C_2 \in \mathbb{R}$.
    Let $b(x) = x^{r}$.
    Then the preconditions of \Cref{prop: discrete} are satisfied.
\end{corollary}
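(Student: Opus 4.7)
The plan is to verify the two preconditions of \Cref{prop: discrete} directly from the Taylor-style expansion of $y_n$ in the hypothesis. The containment $C^p(\mathcal{X}) \subset \mathcal{H}_{k_e}(\mathcal{X})$ is assumed as the standing smoothness condition on $k_e$ (it is the only precondition that is about the kernel rather than the sequence), so I only need to check: (a) $(y_n)_{n \in \mathbb{N}}$ converges, and (b) $|e_n - e_{n+1}| / (x_n - x_{n+1})^p \to 0$, where $e_n = (y_n - y_\infty)/b(x_n) = (y_n - y_\infty)/x_n^r$. Item (a) is essentially free: since $x_n \downarrow 0$ and $r \geq 1$ (forced by the condition $b(0) = 0$ built into the set-up), the expansion $y_n = y_\infty + C_1 x_n^r + C_2 x_n^{r+p+1} + O(x_n^{r+p+2})$ immediately yields $y_n \to y_\infty$.

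For (b), the first step is to divide the expansion through by $b(x_n) = x_n^r$ to get the clean identity
\begin{equation*}
e_n = C_1 + C_2 x_n^{p+1} + O(x_n^{p+2}).
\end{equation*}
Taking consecutive differences cancels the constant $C_1$ and leaves
\begin{equation*}
e_n - e_{n+1} = C_2 \bigl(x_n^{p+1} - x_{n+1}^{p+1}\bigr) + O(x_n^{p+2}),
\end{equation*}
where I absorb the $O(x_{n+1}^{p+2})$ contribution into $O(x_n^{p+2})$ using monotonicity $x_{n+1} \leq x_n$.

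Next I divide by $(x_n - x_{n+1})^p$ and use the triangle inequality to obtain
\begin{equation*}
\frac{|e_n - e_{n+1}|}{(x_n - x_{n+1})^p} \;\leq\; |C_2| \cdot \frac{x_n^{p+1} - x_{n+1}^{p+1}}{(x_n - x_{n+1})^p} \;+\; O\!\left(\frac{x_n^{p+2}}{(x_n - x_{n+1})^p}\right).
\end{equation*}
The first term vanishes as $n \to \infty$ by the first hypothesis $(x_n^{p+1} - x_{n+1}^{p+1})(x_n - x_{n+1})^{-p} \to 0$, and the second term vanishes by the second hypothesis (reading the index shift $x_n^{p+2}(x_n - x_{n-1})^{-p} \to 0$ as its natural shifted counterpart $x_n^{p+2}(x_n - x_{n+1})^{-p} \to 0$ after relabelling $n \mapsto n+1$). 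With both preconditions of \Cref{prop: discrete} thus verified, its conclusion gives the required extension $f$, and the corollary follows.

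The only subtle point, and the thing I would double-check, is the book-keeping of the $O(\cdot)$ remainders: I must make sure the hidden constant in $O(x_n^{r+p+2})$ from the hypothesis survives the division by $x_n^r$ uniformly in $n$ (which it does, since $r$ is a fixed constant independent of $n$), and that the combined $O(x_n^{p+2}) + O(x_{n+1}^{p+2})$ after differencing remains $O(x_n^{p+2})$ (which uses only $x_{n+1} \leq x_n$). Nothing here is hard; the only conceptual content is recognising that the two stated quotient limits are exactly what is needed to kill the leading correction term $C_2 x_n^{p+1}$ and the Taylor remainder, respectively, in the normalised-error differences.
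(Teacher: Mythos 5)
Your proof is correct and follows essentially the same route as the paper's: divide the expansion by $b(x_n) = x_n^r$ to get $e_n = C_1 + C_2 x_n^{p+1} + O(x_n^{p+2})$, difference consecutive terms so that $C_1$ cancels, and invoke the two quotient hypotheses to kill the $C_2$ term and the remainder respectively. You are in fact slightly more careful than the paper about the index-shift bookkeeping for the remainder term and about checking the remaining preconditions of \Cref{prop: discrete} explicitly (your aside that $r \geq 1$ is forced is not quite right --- only $r > 0$ is needed --- but this does not affect the argument).
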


\noindent The proof of both \Cref{prop: discrete} and \Cref{cor: second order expand} can be found in \Cref{subsec: proof of continuous extend results}.
The conditions on the sequence $(x_n)_{n \in \mathbb{N}}$ in \Cref{cor: second order expand} are satisfied by, for example, sequences of the form $x_n = \frac{1}{n}$ and $x_n = \lambda^{-n}$ for any $\lambda > 1$, which are the sort of expressions that routinely appear in error bounds. 
The overall approach is illustrated in \Cref{ex: Romberg}, where a GP analogue of the classical Romberg method for numerical integration is derived.

\begin{example}[GP Romberg methods] \label{ex: Romberg}
Romberg methods for numerical integration are classically obtained via Richardson extrapolation of the trapezoidal rule \citep[][Section~6.7]{brezinski2013extrapolation}; it is interesting to ask if a similar feat can be achieved with \ac{GRE}.
Let $\psi \in C^{2m+2}([0,1])$ and consider the trapezoidal rule
$
y_n \coloneqq \frac{1}{n} [ \frac{\psi(0)}{2} + \psi\left( \frac{1}{n} \right) + \dots + \psi\left( \frac{n-1}{n} \right) + \frac{\psi(1)}{2} ] 
$.
The Euler--Maclaurin summation formula implies that the error of the trapezoidal rule can be expressed as 
$$
y_n - \int_0^1 \psi(t) \; \mathrm{d}t = \sum_{i=1}^m \frac{B_{2i}}{(2i)!} x_n^{2i} \left( \psi^{(2i-1)}(1) - \psi^{(2i-1)}(0) \right) + \frac{B_{2m+2}}{(2m+2)!}  x_n^{2m+2} \psi^{(2m+2)}(\beta_n)
$$
for some $\beta_n \in [0,1]$, where $x_n = \frac{1}{n}$ and $B_k$ are the Bernoulli numbers.
As a test problem, consider $\psi(t) = \sin(10t) + t^2$, for which we can apply \Cref{cor: second order expand} with $b(x) = x^2$, $r = 2$ and $p = 3$.
Thus there exists a function $f$ that agrees with the trapezoidal rule on $(x_n)_{n \in \mathbb{N}}$ and satisfies the preconditions of \Cref{prop: mean conver} for a kernel $k_e$ with smoothness up to $s = 2$; see \Cref{app: kernels}.
Empirical results in \Cref{fig: convergence rates trapz} verify that we are indeed able to gain an additional $s=2$ convergence orders over the original trapezoidal rule, akin to Romberg integration, using our \ac{GRE} method.
Here the sample size $n = 5$ was fixed and the initial design $X_n = \{1, \frac{1}{2}, \frac{1}{4}, \frac{1}{8}, \frac{1}{16}\}$ was scaled by a factor $h$ to obtain a range of designs $X_n^h \subset (\mathbf{0},h\mathbf{1}]$.
The accuracy of the \ac{GRE} point estimator and the coverage of the \ac{GRE} credible interval demonstrate similar behaviour to that observed in \Cref{ex: finite diff}.
\end{example}

\begin{figure}[t!]
    \centering
    \includegraphics[width = 0.5\textwidth]{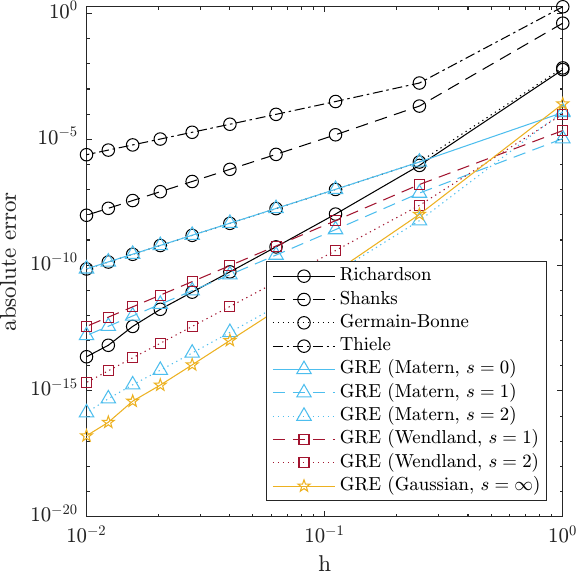}
    \includegraphics[width = 0.49\textwidth]{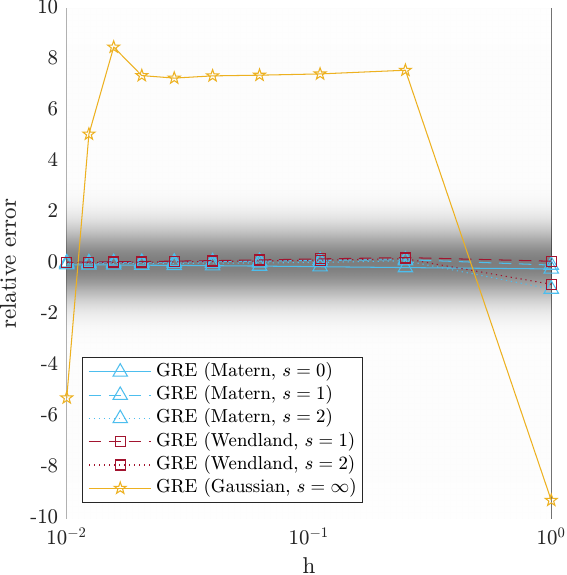}
    \caption{Accelerating the trapezoidal method to obtain a GP Romberg method; \Cref{ex: Romberg}.
    The left panel presents the absolute error $|f(0) - m_n^h[f](0)|$, while the right panel presents the relative error $(f(0) - m_n^h[f](0)) / \sqrt{k_n[f](0,0)}$.
    Classical extrapolations methods (black circles) were compared to our Gauss--Richardson Extrapolation (GRE) method, with either a Mat\'{e}rn (blue triangle), Wendland (red squares), or Gaussian (yellow stars) kernel.
    The true smoothness in this case is $s = 2$, while the legend indicates the level of smoothness assumed by the kernel.
    Kernel length-scale parameters were set to $\ell = 1$.
    Shaded regions in the right panel correspond to the density function of the standard normal.}
    \label{fig: convergence rates trapz}
\end{figure}

These results extend the applicability of \ac{GRE} to settings where components of the discretisation parameter vector $\mathbf{x}$ could take values in any infinite set.
For example, in standard implementations of the finite element method for numerically solving partial differential equations one has a continuous parameter, characterising the width of a triangular mesh, a discrete parameter, characterising the number of cubature nodes used to integrate against each element, and another discrete parameter, specifying the number of iterations of a conjugate gradient method to solve the resulting linear system.
The resulting mixture of continuous and discrete discretisation parameters $\mathbf{x}$ falls within the scope of our \ac{GRE} method.

\subsection{Uncertainty Quantification}
\label{subsec: UQ}

An encouraging observation from \Cref{ex: finite diff,ex: Romberg} was that the \ac{GRE} credible intervals were not asymptotically over-confident as $h \rightarrow 0$.
The aim of this section is to explain how the scale parameter $\sigma^2$ in \eqref{eq: cov model}, which controls the size of credible intervals $C_\alpha[f]$ in \eqref{eq: credible inter}, was actually estimated, and to rigorously prove that asymptotic over-confidence cannot occur when our proposed estimator $\sigma_n^2[f]$ is used.

The most standard approach to kernel parameter estimation is maximum (marginal) likelihood, but in \ac{GRE} we do not have a valid likelihood due to taking the improper $k_0^2 \rightarrow \infty$ limit.
Instead, we motivate a particular estimator $\sigma_n^2[f]$ using asymptotic guarantees for the associated credible interval.
Specifically, we advocate the estimator
\begin{align} 
\sigma_n^2[f] 
\coloneqq \frac{| m_n[f] |_{\mathcal{H}_k(\mathcal{X})}^2 }{n}
= \frac{1}{n} \left[ f(X_n)^\top \mathbf{K}_b^{-1} f(X_n) -  \frac{ ( \mathbf{1}^\top \mathbf{K}_b^{-1} f(X_n) )^2 }{ \mathbf{1}^\top \mathbf{K}_b^{-1} \mathbf{1} } \right] , \label{eq: sigma estimator}
\end{align}
which takes the same form as the maximum likelihood estimator that we would have obtained had we not taken the $k_0^2 \rightarrow \infty$ limit, but with the semi-norm $| m_n[f] |_{\mathcal{H}_k(\mathcal{X})}$ in place of the conventional norm on $\mathcal{H}_k(\mathcal{X})$.
This choice is supported by the following asymptotic result, whose proof is contained in \Cref{app: uq proof}:

\begin{proposition}[Asymptotic over-confidence is prevented] \label{prop: not over-confident}
  In the setting of \Cref{prop: mean conver}, suppose that $s \geq 1$ and that $\lim_{\mathbf{x} \rightarrow \mathbf{0}} b(\mathbf{x})^{-1} (f(\mathbf{x}) - f(\mathbf{0})) \neq 0$ (i.e. we have a sharp error bound).
  Let $m_n^h[f](\mathbf{0})$ and $k_n^h[f](\mathbf{0}, \mathbf{0})$ denote the conditional mean and variance in~\eqref{eq: simplified expressions}, based on data $f(X_n^h)$ and the estimator in~\eqref{eq: sigma estimator}. 
  Then
  \begin{equation*}
    \limsup_{h \to 0} \frac{ \lvert f(\mathbf{0}) - m_n^h[f](\mathbf{0}) \rvert }{  \sqrt{ k_n^h[f](\mathbf{0},\mathbf{0})} } < \infty
  \end{equation*}
  whenever the box fill distance $\rho_{X_n, \mathcal{X}}$ is sufficiently small.
\end{proposition}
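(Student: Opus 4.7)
The plan is to compare the numerator and denominator of the over-confidence ratio by invoking the classical worst-case RKHS bound in combination with the specific form of $\sigma_n^2[f]$. The linear functional $g \mapsto g(\mathbf{0}) - m_n^h[g](\mathbf{0})$ is invariant under addition of constants, because the GRE weights $\mathbf{K}_b^{-1} \mathbf{1} / (\mathbf{1}^\top \mathbf{K}_b^{-1} \mathbf{1})$ sum to unity, so it descends to a bounded functional on the quotient of $\mathcal{H}_k(\mathcal{X})$ by its null space of constants. Its operator norm there equals the flat-prior power function $P_n^h(\mathbf{0}) \coloneqq (\mathbf{1}^\top \mathbf{K}_b^{-1} \mathbf{1})^{-1/2}$, giving
\[
|f(\mathbf{0}) - m_n^h[f](\mathbf{0})| \; \leq \; |f|_{\mathcal{H}_k(\mathcal{X})} \cdot P_n^h(\mathbf{0}) .
\]
The same factor $P_n^h(\mathbf{0})$ appears in the denominator, since \eqref{eq: simplified expressions} gives $k_n^h[f](\mathbf{0}, \mathbf{0}) = \sigma_n^2[f]\, P_n^h(\mathbf{0})^2$ while \eqref{eq: sigma estimator} gives $\sigma_n^2[f] = |m_n^h[f]|_{\mathcal{H}_k(\mathcal{X})}^2 / n$. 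After cancellation the problem reduces to proving
\[
\liminf_{h \to 0} |m_n^h[f]|_{\mathcal{H}_k(\mathcal{X})} > 0, \qquad \text{since then} \qquad \frac{|f(\mathbf{0}) - m_n^h[f](\mathbf{0})|}{\sqrt{k_n^h[f](\mathbf{0}, \mathbf{0})}} \; \leq \; \frac{\sqrt{n}\, |f|_{\mathcal{H}_k(\mathcal{X})}}{|m_n^h[f]|_{\mathcal{H}_k(\mathcal{X})}} .
\]

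For the remaining lower bound I would use the identification of the seminorm from \Cref{subsec: justify prior}, $|g|_{\mathcal{H}_k(\mathcal{X})} = \|e_g\|_{\mathcal{H}_{k_e}(\mathcal{X})}$ with $e_g(\mathbf{x}) = (g(\mathbf{x}) - g(\mathbf{0}))/b(\mathbf{x})$, together with the reproducing property of $\mathcal{H}_{k_e}(\mathcal{X})$ applied at the first design point $h\mathbf{x}_1$:
\[
|m_n^h[f]|_{\mathcal{H}_k(\mathcal{X})} \; = \; \|e_{m_n^h[f]}\|_{\mathcal{H}_{k_e}(\mathcal{X})} \; \geq \; \frac{|e_{m_n^h[f]}(h\mathbf{x}_1)|}{\sqrt{k_e(h\mathbf{x}_1, h\mathbf{x}_1)}} .
\]
Because $m_n^h[f]$ interpolates $f$ at $X_n^h$, direct substitution gives
\[
e_{m_n^h[f]}(h\mathbf{x}_1) \; = \; \frac{f(h\mathbf{x}_1) - m_n^h[f](\mathbf{0})}{b(h\mathbf{x}_1)} \; = \; e_f(h\mathbf{x}_1) + \frac{\delta_h}{b(h\mathbf{x}_1)}, \qquad \delta_h \coloneqq f(\mathbf{0}) - m_n^h[f](\mathbf{0}) .
\]
Continuity of $e_f \in \mathcal{H}_{k_e}(\mathcal{X})$ delivers $e_f(h\mathbf{x}_1) \to c_0 \coloneqq e_f(\mathbf{0})$, where $c_0 \neq 0$ is precisely the sharp error bound hypothesis.

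The final step is to show the correction $\delta_h / b(h\mathbf{x}_1) \to 0$. Here I would invoke \Cref{prop: mean conver}, whose hypotheses coincide with those of the present proposition, to obtain $|\delta_h| \leq C h^s \|b\|_{L^\infty(\mathcal{X}_h)}$ as soon as the box fill distance is sufficiently small. Because $b \in \pi_r(\mathcal{X})$ vanishes only at the origin, comparing lowest-degree terms shows that $\|b\|_{L^\infty(\mathcal{X}_h)}/b(h\mathbf{x}_1)$ is bounded as $h \to 0$ (selecting $\mathbf{x}_1$ so that the leading homogeneous part of $b$ does not vanish at $\mathbf{x}_1$), so $\delta_h/b(h\mathbf{x}_1) = O(h^s) \to 0$ because $s \geq 1$. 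Chaining the bounds yields
\[
\liminf_{h \to 0} |m_n^h[f]|_{\mathcal{H}_k(\mathcal{X})} \; \geq \; \frac{|c_0|}{\sqrt{k_e(\mathbf{0}, \mathbf{0})}} \; > \; 0 ,
\]
and the proposition follows. The step I anticipate needing the most care is this last polynomial-decay comparison; for the monomial error bounds $b(\mathbf{x}) = x_1^{r_1} \cdots x_d^{r_d}$ typical in applications the ratio $\|b\|_{L^\infty(\mathcal{X}_h)}/b(h\mathbf{x}_1)$ is $h$-independent and the bound is immediate, but a fully general treatment requires a mild genericity condition on the interaction between the design and the zero set of the leading homogeneous part of $b$.
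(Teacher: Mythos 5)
Your proposal is correct and follows essentially the same route as the paper's proof: the worst-case-error/power-function bound combined with $\sigma_n^2[f] = |m_n^h[f]|_{\mathcal{H}_k(\mathcal{X})}^2/n$ reduces the claim to a lower bound on $|m_n^h[f]|_{\mathcal{H}_k(\mathcal{X})}$, which is then obtained exactly as in the paper via the reproducing-property bound at a design point, the interpolation identity $e_{m_n^h[f]}(h\mathbf{x}_1) = e_f(h\mathbf{x}_1) + \delta_h/b(h\mathbf{x}_1)$, continuity of $e_f$ together with the sharpness hypothesis, and \Cref{prop: mean conver} to control $\delta_h$. The only divergence is your closing caveat about $\|b\|_{L^\infty(\mathcal{X}_h)}/b(h\mathbf{x}_1)$: the paper simply asserts this ratio is bounded ``$b$ being a polynomial,'' whereas you correctly note that in $d>1$ this requires the design point to avoid the zero set of the lowest-degree homogeneous part of $b$ — a genuine (if minor) gap in the paper's argument that your genericity condition repairs.
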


\noindent In other words, the width $\sqrt{k_n[f](\mathbf{0},\mathbf{0})}$ of the credible interval cannot vanish asymptotically faster than the actual absolute error $|f(\mathbf{0}) - m_n[f](\mathbf{0})|$.
Though this result does not guarantee that credible intervals are the `right size' \textit{per se}, there is no randomness in the data-generating process $f(\mathbf{x})$ and thus standard statistical notions of coverage, or `right size', cannot be directly applied \citep[see][]{karvonen2020maximum}.
In practice, we have already seen empirical evidence that the credible sets \eqref{eq: credible inter} are appropriately conservative; an arguably predictable consequence of the conservative \ac{GP} prior discussed in \Cref{subsec: justify prior}.
Note that the conclusion of \Cref{prop: not over-confident} also holds when the stronger hypotheses of \Cref{cor: spectral} are assumed.
However, the result assumes that a kernel with appropriate smoothness is used; it does not explain the behaviour of \ac{GRE} with the Gaussian kernel observed in \Cref{ex: finite diff,ex: Romberg}, since in that case the Gaussian kernel was formally misspecified.

Assured that our credible intervals are in a sense meaningfully related to the actual error, we can now proceed to exploit this measure of uncertainty for experimental design.

\subsection{Optimal Experimental Design}
\label{subsec: design}

One of the main engineering challenges associated with the simulation of continuum mathematical or physical phenomena is the numerical challenge of simultaneously controlling all sources of discretisation error, to ensure the output $f(\mathbf{x})$ remains close in some sense to $f(\mathbf{0})$, the continuum quantity of interest.
In practice, one might explore the sensitivity of the simulator output $f(\mathbf{x})$ to small changes in each discretisation parameter $x_i$ in turn, to heuristically identify a global setting $\mathbf{x}_{\text{hi-fi}}$ which is then fixed for the lifetime in which the simulator is used.
It seems remarkable that more principled methodology has not yet been developed, and we aim to fill this gap by formulating \emph{optimal experimental design} within the \ac{GRE} framework.

The accuracy of the point estimator \eqref{eq: simplified expressions} will depend crucially on the locations at which the \ac{GP} has been trained.
\Cref{subsec: UQ} established that the conditional variance is meaningfully related to estimation accuracy, with the advantage that it can be explicitly calculated.
This motivates the following cost-constrained optimisation problem
\begin{align}
    \argmax_{X \subset \mathcal{D}} \mathbf{1}^\top \mathbf{K}_b^{-1} \mathbf{1}
    \qquad \text{s.t.} \qquad  
    \sum_{\mathbf{x} \in X} c(\mathbf{x}) \leq C  , \label{eq: OED2}
\end{align}
where $\mathcal{D} \subseteq \mathcal{X}$ denotes the set of feasible simulations being considered, $\mathbf{K}_b$ is the matrix with entries $k_b(\mathbf{x}_i,\mathbf{x}_j)$, $\mathbf{x}_i,\mathbf{x}_j \in X$, the map $c : \mathcal{D} \rightarrow \mathbb{R}$ quantifies the cost associated with obtaining simulator output $f(\mathbf{x})$, and $C$ denotes the total computational budget.
This numerical analysis-informed objective $\mathbf{1}^\top \mathbf{K}_b^{-1} \mathbf{1}$ is inversely proportional to the \ac{GRE} posterior variance \eqref{eq: simplified expressions} when the scale parameter $\sigma$ is fixed, rather than estimated (since \emph{a priori} we do not suppose data have been obtained from which $\sigma$ could be estimated).
This optimisation does not enforce a particular training sample size $n$, it just constrains the total computational cost.
As such, \eqref{eq: OED2} represents a challenging optimisation problem, with both the number $n$ of experiments in the optimal design, and the optimal experiments $X = \{\mathbf{x}_i\}_{i=1}^n$ themselves, to be determined.
To proceed, we consider a finite set $\mathcal{D}$ of candidate experiments and then use brute force to search for an optimal design restricted to this candidate set.

\begin{example}[Optimal experimental design in $d=1$]
\label{ex: design 2}
    Consider a first-order numerical method with linear cost, so that $b(x) = x$ and $c(x) = x^{-1}$, an example of which would be the classical forward Euler method.
    For illustration we take $k_e$ to be either a Mat\'{e}rn kernel ($s = 0$) or the Gaussian kernel ($s = \infty$), in each case with length-scale $\ell = 1$ fixed.
    The total computational budget $C$ was varied and optimal designs $X$ were computed with elements constrained to a size 20 grid $\mathcal{D}$; results are shown in \Cref{fig: illustrating design 2}.
    In the case of a rough kernel, like the Mat\'{e}rn kernel, a greedy/exploitative strategy of assigning all compute power to the highest resolution experiment seems optimal.
    Since we are working only with a discrete set of experiments, there is a small residual computational budget that is allocated to one or two further cheap experiments.
    For the Gaussian kernel, the optimal strategy is less greedy, with optimal designs involving more experiments, indicating that the greater smoothness is being leveraged to improve the accuracy of \ac{GRE}.
\end{example}

\begin{figure}[t!]
\centering
\includegraphics[width = 0.49\textwidth]{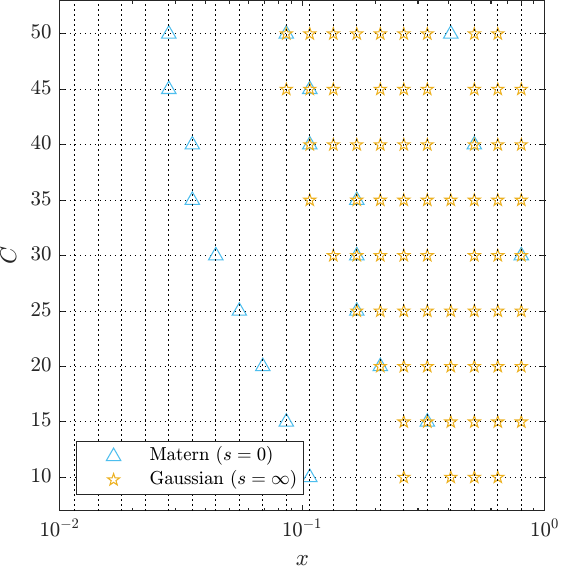}
\includegraphics[width = 0.5\textwidth,clip,trim = 0.7cm 0.6cm 0.8cm 0.7cm]{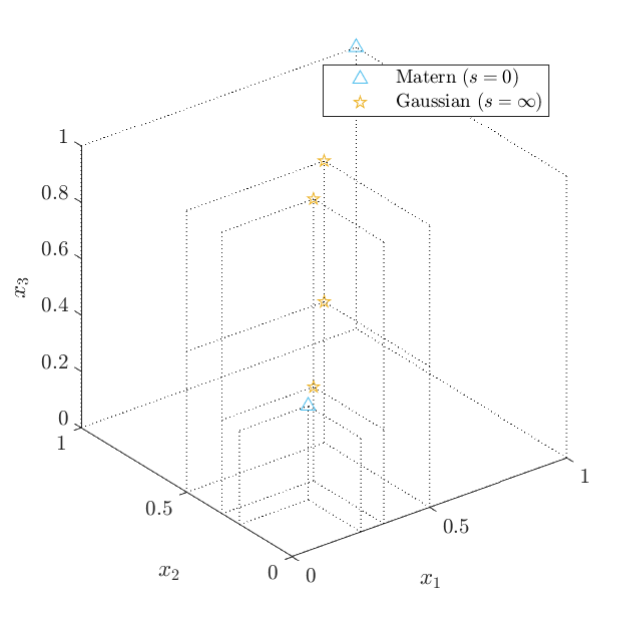}
\caption{Optimal experimental designs were computed, for varying total computational budgets $C$, using either a Mat\'{e}rn (blue triangles; $s = 0$) or Gaussian (yellow stars; $s = \infty$) kernel.
Left: The setting of \Cref{ex: design 2}, with candidate states shown as vertical dotted lines on the plot.
Right:  An illustration of experimental design in dimension $d = 3$, with dotted lines used to indicates the coordinates of the states that were selected.
}
\label{fig: illustrating design 2}
\end{figure}

In practice a small number of preliminary simulations should be used to estimate appropriate length-scale parameters $\bm{\ell}$ for the covariance kernel.
Such parameter estimation becomes more critical in the multivariate setting, illustrated in the right panel of \Cref{fig: illustrating design 2}, since the simulator output $f(\mathbf{x})$ may be more or less sensitive to different components of $\mathbf{x}$; in \Cref{sec: applications} a practical workflow is presented.

\begin{remark}[Trivial solution for iterative methods]
In \Cref{subsec: continuous and discrete} we discussed the scenario where data are generated along a sequence $(\mathbf{x}_n)_{n \in \mathbb{N}}$ by an iterative method, which first produces $f(\mathbf{x}_1), \dots, f(\mathbf{x}_{n-1})$ en route to producing the final output $f(\mathbf{x}_n)$.
In this scenario the cost of computing $f(\mathbf{x}_1),\dots,f(\mathbf{x}_n)$ is simply $c(\mathbf{x}_n)$, in which case computing as many iterations as possible is optimal in the sense of \eqref{eq: OED2}.
\end{remark}

The methodology just presented systematises the often \emph{ad-hoc} process of selecting appropriate fidelities on which simulator output is computed, in a manner that is specifically tailored to improving the accuracy of our \ac{GRE} method.
Sequential experimental design strategies can also be developed, but were not pursued.
The remainder of this section deals with three important generalisations of the \ac{GRE} method; the case where convergence orders are unknown and must be estimated (\Cref{subsec: estimate order}), the case of multivariate simulator output (\Cref{subsec: multivar output}), and the case where the simulator contains additional degrees of freedom (\Cref{subsec: extra parameters}).

\subsection{Extension to Unknown Convergence Order}
\label{subsec: estimate order}

The practical application of extrapolation methods does not necessarily require access to an explicit error bound, as several procedures have been developed to automatically identify a suitable method from a collection of extrapolation methods (which could correspond to different assumed convergence orders, or different classes of extrapolation method).
A representative approach, called \emph{automatic selection} \citep{delahaye1981automatic}, is based on the idea that small changes $S_{n+1} - S_n$ between consecutive iterates is a useful proxy for the convergence rate of an extrapolation method $(S_n)_{n \in \mathbb{N}}$.
Another approach is to linearly combine estimates produced by a collection of extrapolation methods, called a \emph{composite sequence approach} \citep{brezinski1985composite}.
From our statistical standpoint, these methods bear a respective semblance to model selection and model averaging.
Pursuing a statistical perspective on extrapolation, here we consider maximum (marginal) likelihood as a default for selecting an appropriate \ac{GP} prior model for \ac{GRE}.
The $k_0^2 \rightarrow \infty$ limit taken in \Cref{subsec: fitting the GP} means that we do not have a proper likelihood, so instead we identify and maximise an appropriate \emph{quasi} likelihood.
Our justification is twofold, namely (1) our quasi likelihood is directly analogous to the standard \ac{GP} likelihood, and (2) we provide analysis below that demonstrates the consistency of maximum quasi likelihood for estimation of convergence order in the \ac{GRE} framework.

To formulate the main result of this section, we suppose we have a vector $\mathbf{r} \in \mathbb{R}^p$ that parametrises the error bound $b_{\mathbf{r}} : \mathcal{X} \rightarrow [0,\infty)$, with the interpretation that increasing the value of any of the components of $\mathbf{r}$ corresponds to faster convergence of the error bound $b_{\mathbf{r}}(\mathbf{x})$ to $\mathbf{0}$ as $\mathbf{x} \rightarrow \mathbf{0}$.
Specifically, we call a class of error bounds \emph{monotonically parametrised} if, for all $\mathbf{r}_1 < \mathbf{r}_2$ we have
$$
\inf_{\mathbf{r} \leq \mathbf{r}_1} \lim_{\mathbf{x} \rightarrow \mathbf{0}} \frac{b_{\mathbf{r}_2}(\mathbf{x})}{b_{\mathbf{r}}(\mathbf{x})} = 0 .
$$
This is not a restriction \emph{per se}, as we are free to choose how $b_{\mathbf{r}}$ is parametrised, but an assumption of this kind is required to enable the following result to be rigorously stated.
Examples of monotonically parametrised error bounds include $b_{\mathbf{r}}(\mathbf{x}) = x_1^{r_1} + \dots + x_d^{r_d}$ and $b_{\mathbf{r}}(\mathbf{x}) = x_1^{r_1} \cdots x_d^{r_d}$, which are the sort of expressions that routinely appear in error bounds.
The proof of the following result can be found in \Cref{app: quasi likelihood sec}:

\begin{proposition}[Estimation using maximum quasi likelihood]
\label{prop: kernel parameter estimation}
Let $X_n^h = \{h \mathbf{x} : \mathbf{x} \in X_n\}$.
Suppose that $f \in \mathcal{H}_k(\mathcal{X})$ holds when $k$ in \eqref{eq: cov model} is based on the monotonically parametrised bound $b_{\mathbf{r}_0}(\mathbf{x})$ for some $\mathbf{r}_0 \geq \mathbf{0}$.
Let $\mathbf{K}_{b_{\mathbf{r}},h}$ denote the matrix with entries $k_{b_{\mathbf{r}}}(h \mathbf{x}_i , h \mathbf{x}_j)$, where the dependence of this matrix on both $h$ and $\mathbf{r}$ has now been emphasised, relative to the notation $\mathbf{K}_b$ introduced in \Cref{subsec: fitting the GP}.
Then any maximiser $\mathbf{r}_n^h[f] \in \argmax_{\mathbf{r} \geq \mathbf{0}} \mathcal{L}_n^h(\mathbf{r})$ of the log-quasi (marginal) likelihood
\begin{align}
  \mathcal{L}_n^h(\mathbf{r}) & := - f(X_n^h)^\top \mathbf{K}_{b_{\mathbf{r}},h}^{-1} f(X_n^h) +  \frac{ ( \mathbf{1}^\top \mathbf{K}_{b_{\mathbf{r}},h}^{-1} f(X_n^h) )^2 }{ \mathbf{1}^\top \mathbf{K}_{b_{\mathbf{r}},h}^{-1} \mathbf{1} } - \log \det \mathbf{K}_{b_{\mathbf{r}},h} \label{eq: QML}
\end{align}
satisfies $\liminf_{h \to 0} \mathbf{r}_n^h[f] \geq \mathbf{r}_0$.
\end{proposition}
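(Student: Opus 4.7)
The plan is to decompose $\mathcal{L}_n^h(\mathbf{r})$ into an $h$-uniformly-bounded seminorm term and a log-determinant term that dominates as $h \to 0$, favouring larger $\mathbf{r}$. The proof is by contradiction: if some component of $\mathbf{r}_n^h[f]$ stays strictly below the corresponding component of $\mathbf{r}_0$ along a subsequence $h_k \to 0$, raising that component to $r_{0,j}$ yields a strictly larger log-quasi-likelihood, contradicting maximality.

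Algebraically, the first step is to factor $\mathbf{K}_{b_{\mathbf{r}},h} = D_{\mathbf{r},h} \mathbf{K}_{e,h} D_{\mathbf{r},h}$, where $D_{\mathbf{r},h} = \mathrm{diag}(b_{\mathbf{r}}(h\mathbf{x}_i))$ and $\mathbf{K}_{e,h}$ has entries $k_e(h\mathbf{x}_i,h\mathbf{x}_j)$. This yields
$$\log\det \mathbf{K}_{b_{\mathbf{r}},h} = 2 \sum_{i=1}^n \log b_{\mathbf{r}}(h\mathbf{x}_i) + \log\det \mathbf{K}_{e,h},$$
with the last summand independent of $\mathbf{r}$. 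Comparing with~\eqref{eq: sigma estimator}, the first two terms of $\mathcal{L}_n^h$ equal $-|m_n^h[f]|^2_{\mathcal{H}_{k_{\mathbf{r}}}(\mathcal{X})}$, the squared seminorm of the conditional mean. For any $\mathbf{r} \leq \mathbf{r}_0$, rewriting the decomposition $f = f(\mathbf{0}) + b_{\mathbf{r}_0}\,e$ (guaranteed by $f \in \mathcal{H}_{k_{\mathbf{r}_0}}(\mathcal{X})$) as $f = f(\mathbf{0}) + b_{\mathbf{r}} \cdot (b_{\mathbf{r}_0}/b_{\mathbf{r}})e$ places $f$ in $\mathcal{H}_{k_{\mathbf{r}}}(\mathcal{X})$, provided $\mathcal{H}_{k_e}(\mathcal{X})$ is closed under multiplication by ratios of polynomial bounds. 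The minimum-seminorm property then gives the $h$-uniform bound $|m_n^h[f]|^2_{\mathcal{H}_{k_{\mathbf{r}}}(\mathcal{X})} \leq |f|^2_{\mathcal{H}_{k_{\mathbf{r}}}(\mathcal{X})} < \infty$.

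For the contradiction, suppose there are a coordinate $j$, an $\epsilon > 0$, and a subsequence $h_k \to 0$ with $r_{n,j}^{h_k} \leq r_{0,j} - \epsilon$. Define $\tilde{\mathbf{r}}^{(k)}$ by capping $\mathbf{r}_n^{h_k}$ componentwise at $\mathbf{r}_0$ and then raising the $j$-th coordinate to $r_{0,j}$, so $\tilde{\mathbf{r}}^{(k)} \leq \mathbf{r}_0$ and $\tilde{r}^{(k)}_j = r_{0,j} > r_{n,j}^{h_k}$. By the monotonic parametrisation assumption applied in the $j$-th coordinate, $b_{\tilde{\mathbf{r}}^{(k)}}(h_k\mathbf{x}_i)/b_{\mathbf{r}_n^{h_k}}(h_k\mathbf{x}_i) \to 0$ as $h_k \to 0$, so the log-determinant gap $-2\sum_i \log[b_{\tilde{\mathbf{r}}^{(k)}}(h_k\mathbf{x}_i)/b_{\mathbf{r}_n^{h_k}}(h_k\mathbf{x}_i)]$ diverges to $+\infty$, while the seminorm term for $\tilde{\mathbf{r}}^{(k)}$ remains uniformly bounded by the previous step. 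Hence $\mathcal{L}_n^{h_k}(\tilde{\mathbf{r}}^{(k)}) > \mathcal{L}_n^{h_k}(\mathbf{r}_n^{h_k})$ for large $k$, contradicting the maximality of $\mathbf{r}_n^{h_k}$ and establishing $\liminf_{h \to 0} \mathbf{r}_n^h[f] \geq \mathbf{r}_0$.

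The main obstacle is making the comparison argument robust when $\mathbf{r}_n^{h_k}$ has components exceeding $\mathbf{r}_0$: in that case the capping construction is nontrivial, and care is needed to ensure (a) the ratio $b_{\tilde{\mathbf{r}}^{(k)}}/b_{\mathbf{r}_n^{h_k}}$ still tends to $0$ in the relevant limit, so the log-determinant gap diverges, and (b) the seminorm for $\tilde{\mathbf{r}}^{(k)}$ is uniformly bounded as the other components of $\tilde{\mathbf{r}}^{(k)}$ range over $[\mathbf{0}, \mathbf{r}_0]$. Both rely on a careful reading of the monotonic parametrisation assumption, which is stated as an infimum of limits, plus the closure of $\mathcal{H}_{k_e}(\mathcal{X})$ under multiplication by polynomial ratios together with a continuity argument for the RKHS norm in $\mathbf{r}$.
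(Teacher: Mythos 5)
Your overall strategy---factor $\mathbf{K}_{b_{\mathbf{r}},h}$ so that $\log\det\mathbf{K}_{b_{\mathbf{r}},h}$ becomes $2\sum_i \log b_{\mathbf{r}}(h\mathbf{x}_i)$ plus an $\mathbf{r}$-independent term, recognise the first two terms of $-\mathcal{L}_n^h(\mathbf{r})$ as the squared seminorm $|m_n^h[f]|_{\mathcal{H}_k(\mathcal{X})}^2$, and let the divergent log-ratio of bounds overwhelm a bounded seminorm term---is exactly the paper's. But the execution has a genuine gap, which you partly acknowledge yourself. By choosing the comparison point $\tilde{\mathbf{r}}^{(k)}$ (the maximiser capped at $\mathbf{r}_0$), you are forced to bound the seminorm term uniformly for $\mathbf{r}$ ranging over $[\mathbf{0},\mathbf{r}_0]$, which requires $f \in \mathcal{H}_{k_{\mathbf{r}}}(\mathcal{X})$ for \emph{all} $\mathbf{r} \leq \mathbf{r}_0$. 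That is not among the hypotheses and does not follow from them: it amounts to assuming $\mathcal{H}_{k_e}(\mathcal{X})$ is closed under multiplication by $b_{\mathbf{r}_0}/b_{\mathbf{r}}$, which is false in general (multiplying an element of a smooth RKHS by, say, $x^{1/2}$ need not keep it in the space). Separately, when components of $\mathbf{r}_n^{h_k}$ exceed those of $\mathbf{r}_0$, the log-determinant gap between $\tilde{\mathbf{r}}^{(k)}$ and $\mathbf{r}_n^{h_k}$ involves ratios tending to $0$ in some coordinates and to $\infty$ in others, so it need not diverge in the direction you want; you name this as ``the main obstacle'' but do not resolve it.

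Both problems disappear if you compare against $\mathbf{r}_0$ itself, which is what the paper does. Set $\mathcal{J}_n^h(\mathbf{r}) \coloneqq \mathcal{L}_n^h(\mathbf{r}) - \mathcal{L}_n^h(\mathbf{r}_0)$. Using the trivial bound (seminorm term) $\geq 0$ at the generic $\mathbf{r}$, and the minimal-norm bound $|m_n^h[f]|_{\mathcal{H}_k(\mathcal{X})}^2 \leq \|\bar f\|_{\mathcal{H}_{k_{b_{\mathbf{r}_0}}}(\mathcal{X})}^2$ only at $\mathbf{r}_0$---the one point where membership of $f$ in the RKHS is actually assumed---one obtains $\mathcal{J}_n^h(\mathbf{r}) \leq 2\sum_{i=1}^n \log\bigl[ b_{\mathbf{r}_0}(h\mathbf{x}_i)/b_{\mathbf{r}}(h\mathbf{x}_i) \bigr] + \|\bar f\|_{\mathcal{H}_{k_{b_{\mathbf{r}_0}}}(\mathcal{X})}^2$, with the second term $h$-independent. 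Monotonic parametrisation sends the right-hand side to $-\infty$ uniformly over $\mathbf{r} \in [\mathbf{0},\mathbf{r}_0 - \bm{\epsilon}]$, and since $\mathcal{J}_n^h(\mathbf{r}_0) = 0$ the maximiser must eventually leave that set. No capping construction and no closure property of the RKHS are needed.
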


\noindent The first two terms in \eqref{eq: QML} correspond to the (square of the) semi-norm $|m_n^h[f]|_{\mathcal{H}_k(\mathcal{X})}$, which is the analogue of the usual $\|m_n^h[h]\|_{\mathcal{H}_k(\mathcal{X})}$ term that we would appear in the likelihood had we not taken the $k_0^2 \rightarrow \infty$ limit; this justifies the interpretation of \eqref{eq: QML}, up to constants, as a quasi-likelihood. 
The one-sided conclusion of \Cref{prop: kernel parameter estimation} may be surprising at first, but this is in fact the strongest result that can be expected.
Indeed, the statement that $f(\mathbf{x}) - f(\mathbf{0}) = O(b_{\mathbf{r}_0}(\mathbf{x}))$ does not rule out the possibility that the error $f(\mathbf{x}) - f(\mathbf{0})$ decays \emph{faster} than $b_{\mathbf{r}_0}(\mathbf{x})$, and in this case we would expect the estimator $\mathbf{r}_n^h[f]$ to adapt to the actual convergence order.
The experiments that we report in \Cref{sec: applications} used maximum quasi (marginal) likelihood whenever convergence orders and/or kernel length-scale parameters were estimated.

\subsection{Generalisation to Multidimensional Output}
\label{subsec: multivar output}

Until this point we have considered the continuum quantity of interest $f(\mathbf{0})$ to be scalar-valued.
Oftentimes, however, we are interested in quantities $\{f(\mathbf{0},\mathbf{t})\}_{\mathbf{t} \in \mathcal{T}}$ that are vector- or function-valued depending on the nature of the index set $\mathcal{T}$.
The E-algorithm that we described in \Cref{sec: introduction} has been extended to finite-dimensional vector-valued output; see Chapter 4 of \citet{brezinski2013extrapolation} for detail.
A possible advantage of the \ac{GP}-based approach taken in \ac{GRE} is that it does not impose any mathematical structure on $\mathcal{T}$ beyond this being a set, making extension of the methodology to function-valued output straight-forward.

To extend our methodology to multivariate output, let $f : \mathcal{X} \times \mathcal{T} \rightarrow \mathbb{R}$ be such that $\{f(\mathbf{0},\mathbf{t})\}_{\mathbf{t} \in \mathcal{T}}$ is the continuum quantity of interest and $f(\mathbf{x},\mathbf{t})$ is a numerical approximation to $f(\mathbf{0},\mathbf{t})$.
For example, $f(0,t)$ may represent the solution to an ordinary differential equation at time $t$, while $f(x,t)$ may represent an approximation to this solution obtained using a Runge--Kutta method, with $x$ being the error tolerance of the Runge--Kutta method.
To improve presentation we will assume that $f(\mathbf{x},\mathbf{t}) - f(\mathbf{0},\mathbf{t}) = O(b(\mathbf{x}))$ \emph{uniformly} over $\mathbf{t} \in \mathcal{T}$, but $\mathbf{t}$-dependent error bounds could also be considered with additional notational overhead.
Our original covariance function \eqref{eq: cov model} can be generalised to
\begin{align}
    k((\mathbf{x},\mathbf{t}),(\mathbf{x}',\mathbf{t}')) = \sigma^2 \left[ k_0^2 + b(\mathbf{x}) b(\mathbf{x}') k_e(\mathbf{x},\mathbf{x}') \right] k_{\mathcal{T}}(\mathbf{t},\mathbf{t}') , \qquad \mathbf{x}, \mathbf{x}' \in \mathcal{X}, \; \mathbf{t},\mathbf{t}' \in \mathcal{T} , \label{eq: cov model 2}
\end{align}
where, to exploit tractable computation that results from this tensor product kernel, we have assumed a tensor product kernel and will assume that data $X_n = \{(\mathbf{x}_i,\mathbf{t}_j)\}_{i=1}^{n_1}{}_{j=1}^{n_2}$ are obtained on a Cartesian grid ($n = n_1 n_2$).
That is, with the data appropriately ordered we have the Kronecker decomposition $\mathbf{K} = \mathbf{K}_{\mathcal{X}} \otimes \mathbf{K}_{\mathcal{T}}$, where 
$\mathbf{K}_{\mathcal{X}}$ is the matrix with entries $k_0^2 + b(\mathbf{x}_i) b(\mathbf{x}_j) k_e(\mathbf{x}_i,\mathbf{x}_j)$, and $\mathbf{K}_{\mathcal{T}}$ is the matrix with entries $k_{\mathcal{T}}(\mathbf{t}_i,\mathbf{t}_j)$.
Then analogous calculations to those detailed in \Cref{app: flat prior limit}, which we present in \Cref{app: multivar calcs}, show that for values of $\mathbf{t},\mathbf{t}'$ contained in the dataset, the conditional mean and covariance function in the $k_0 \rightarrow \infty$ limit are
\begin{align*}
    m_n[f](\mathbf{x},\mathbf{t}) & = \left\{ \mathbf{k}_b(\mathbf{x})^\top \mathbf{K}_b^{-1} + [1 - \mathbf{k}_b(\mathbf{x})^\top \mathbf{K}_b^{-1} \mathbf{1}] \frac{\mathbf{1}^\top \mathbf{K}_b^{-1}}{\mathbf{1}^\top \mathbf{K}_b^{-1} \mathbf{1}} \right\} \otimes \left[ k_{\mathcal{T}}(\mathbf{t}) \mathbf{K}_{\mathcal{T}}^{-1} \right] f(X_n) \\
    k_n[f]((\mathbf{x},\mathbf{t}),(\mathbf{x}',\mathbf{t}')) & = \sigma_n^2[f] \Bigg\{ k_b(\mathbf{x},\mathbf{x}') k_{\mathcal{T}}(\mathbf{t},\mathbf{t}') - [\mathbf{k}_{\mathcal{T}}(\mathbf{t})^\top \mathbf{K}_{\mathcal{T}}^{-1} \mathbf{k}_{\mathcal{T}}(\mathbf{t}')] \times \\
    & \left. \qquad \qquad \left[ \mathbf{k}_b(\mathbf{x})^\top \mathbf{K}_b^{-1} \mathbf{k}_b(\mathbf{x}') - \frac{[\mathbf{k}_b(\mathbf{x})^\top \mathbf{K}_b^{-1} \mathbf{1} - 1] [\mathbf{k}_b(\mathbf{x}')^\top \mathbf{K}_b^{-1} \mathbf{1} - 1]^\top }{ \mathbf{1}^\top \mathbf{K}_b^{-1} \mathbf{1} } \right]  \right\} ,
\end{align*}
where $\mathbf{k}_{\mathcal{T}}(\mathbf{t})$ is the vector with entries $k_{\mathcal{T}}(\mathbf{t}_i,\mathbf{t})$.
For values of $\mathbf{t},\mathbf{t}'$ not contained in the training dataset, the conditional covariance does not have a finite limit; a proper prior should be used if off-grid prediction in the $\mathbf{t}$-domain is required.
Further details on the multivariate setting are deferred to \Cref{sec: applications}, where the approach is explored in the context of predicting temporal output from a cardiac model.

\subsection{Incorporating Additional Degrees of Freedom}
\label{subsec: extra parameters}

The final methodological extension that we consider is the case where $f_{\bm{\theta}}(\mathbf{x})$ additionally depends on one or more degrees of freedom $\bm{\theta} \in \Theta$; a setting where emulation or \ac{MFM} methods are routinely used (c.f. \Cref{sec: introduction}).
The proposed \ac{GRE} method can be applied in this context by viewing $f_{\bm{\theta}}(\mathbf{0})$ as a simulator with multidimensional output $\{ f(\mathbf{0},\bm{\theta}) \}_{\bm{\theta} \in \Theta}$ and then applying the methodology described in \Cref{subsec: multivar output} with $\bm{\theta}$, rather than $\mathbf{t}$, indexing the output of this extended model.
Since the required calculations are identical, we do not dwell any further on this point.

\bigskip

This completes our exposition of the \ac{GRE} method.
Next we next turn to a cardiac modelling case study, where the usefulness of the methodology is evaluated.

\section{Case Study: Cardiac Modelling}
\label{sec: applications}

The cardiac model $f_{\bm{\theta}}(\mathbf{x})$ that we consider in this section is a detailed numerical simulation of a single heart beat\footnote{The simulation is usually run until a steady-state is reached before reading off quantities of interest, at a substantial increase to the overall computational cost. 
For the present purpose we removed components from the model that required multiple heart beats to reach a steady state, and simulated only a single heart beat.} \citep{strocchi2023cell}. 
The simulation is rooted in finite element methods that require both a spatial ($x_1$) and a temporal ($x_2$) discretisation level to be specified; of these, the spatial discretisation is the most critical, due to the $O(x_1^{-3})$ cost associated with the construction of a suitable triangulation of the time-varying 3-dimensional volume of the heart; see \Cref{fig:BCs}.
The computational cost $c(\mathbf{x})$ is measured in real computational time (seconds) and comprises the \emph{setup time}, \emph{assembly time} (the time taken to assemble linear systems of equations), and the \emph{solver time} (the time taken to solve linear systems of equations), with assembly time the main contributor to total computational cost.
To achieve a clinically-acceptable level of accuracy, it is typical for a simulation $f_{\bm{\theta}}(\mathbf{x}_{\text{default}})$ to be performed with  $\mathbf{x}_{\text{default}} \approx (0.4 \text{ mm}, 2 \text{ ms})$, at a cost $c(\mathbf{x}_{\text{default}}) \approx 1.5 \times 10^4$ seconds (around $\approx 4$ hours) for a single heart beat\footnote{Simulations for this case study were performed on ARCHER2, a UK national super computing service (\url{https://www.archer2.ac.uk/}).
Each simulation involved 512 CPUs operating in parallel, so that simulation of one heart beat using setting $\mathbf{x}_{\text{default}}$ required $\approx 4 \times 512$ CPU hours in total.}.
This poses severe challenges to the scientific use of such models, with super-computing resources required to ascertain whether there are values of scientific parameters $\bm{\theta}$ for which observed data are consistent with model output \citep{strocchi2023cell}.
These challenges directly motivated the development of \ac{GRE}, and the remainder of this paper is dedicated to exploring the value of extrapolation methods in this context. 
Extrapolation of the cardiac model output represents a much greater challenge compared to extrapolation for the examples considered in \Cref{sec: methods}, due to the nonlinear physics being simulated.
Since our focus in this paper is not on inference for $\bm{\theta}$, these degrees of freedom were fixed to physically-realistic values based on previous analyses \citep{strocchi2020simulating,strocchi2023cell}, with all further details on the construction of the cardiac model reserved for \Cref{app: cardiac}.

\Cref{subsec: workflow} sets out a practical workflow for using the \ac{GRE} method, that focuses on the multidimensional setting where both convergence orders and kernel length-scale parameters are to be estimated.
The performance of \ac{GRE} is then investigated for both scalar-valued (\Cref{subsec: scalar cardiac}) and multivariate (\Cref{subsec: vector cardiac}) continuum quantities of interest.

\begin{figure}[t!]
    \centering
    \includegraphics[width=0.27\textwidth]{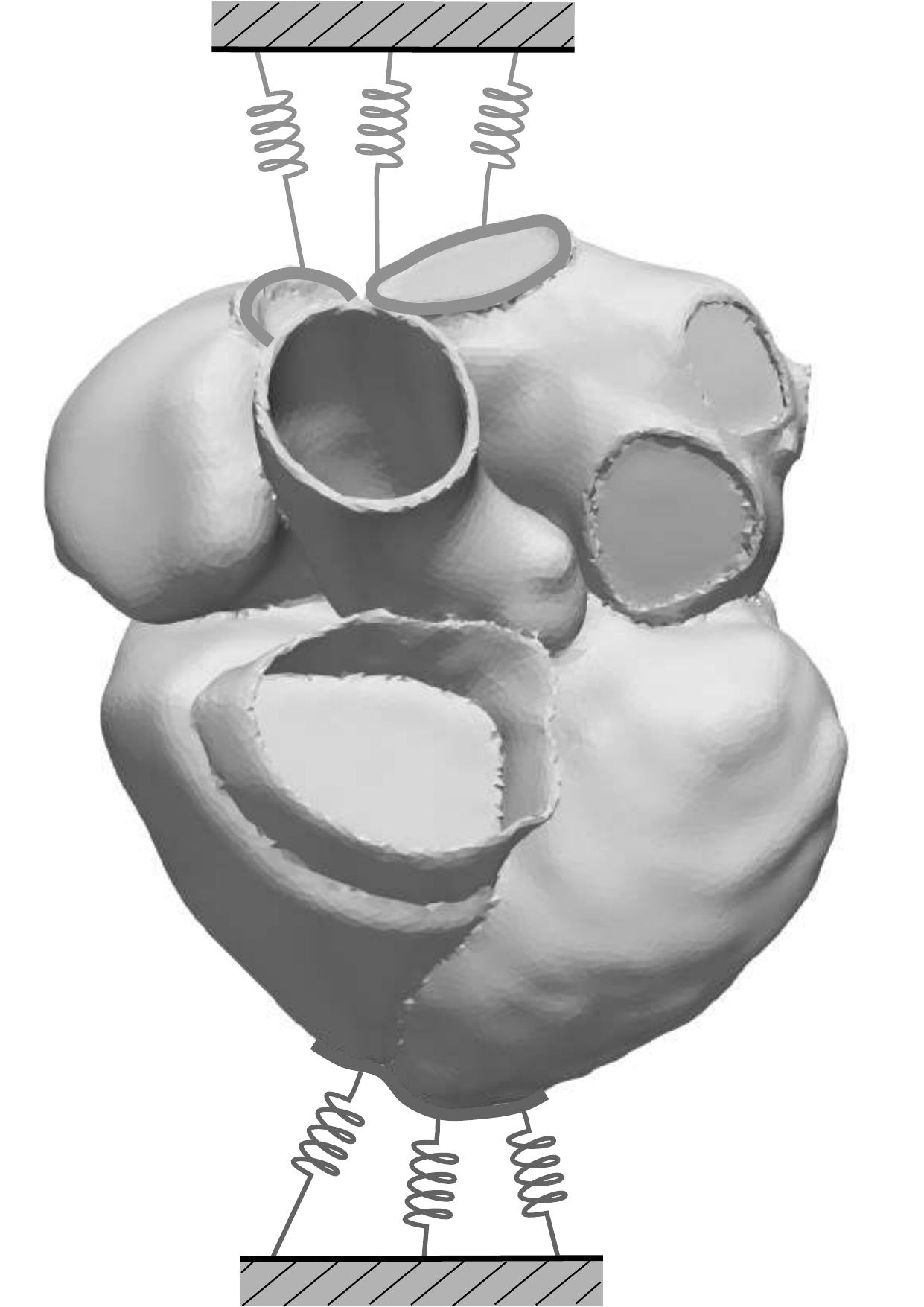}
    \includegraphics[width=0.7\textwidth,clip,trim = 0cm 0cm 0cm 1cm]{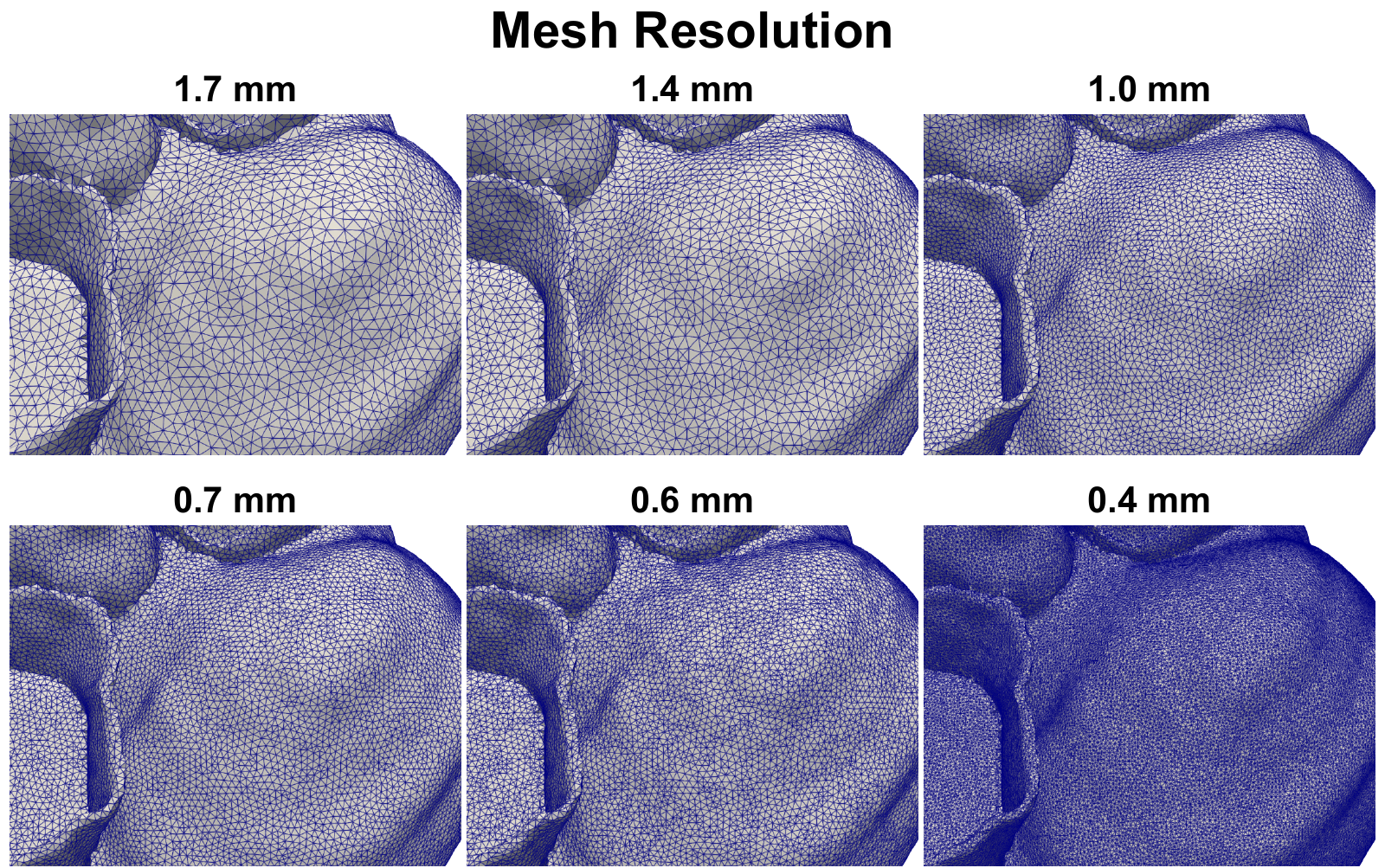}
    \caption{Cardiac model:  Left: Schematic indicating the veins and the apical region where spring boundary conditions were applied.
    Right: A subset of the mesh resolutions used in this case study.
    The finest resolution required $3 \times 10^7$ finite elements to be used.}
    \label{fig:BCs}
\end{figure}

\subsection{A Proposed General Workflow}
\label{subsec: workflow}

The sophistication of the cardiac model renders analytical derivation of convergence orders essentially impossible, so to proceed these orders must be estimated.
However, the computational cost of simulating from the model means that data from which convergence orders can be estimated are necessarily limited.
This motivates us to propose the following pragmatic workflow, which we present for a general model $f(\mathbf{x})$ and which scales in a reasonable way with the number $d$ of components of $\mathbf{x}$ that can be varied.
This workflow requires the user to specify a \ac{LOFI} setting $\mathbf{x}_{\text{lo-fi}}$ as a starting point, together with a means to predict the computational cost $c(\mathbf{x})$ of simulating $f(\mathbf{x})$, and a total computational budget $C$:

\begin{enumerate}
    \item For each fidelity parameter $x_i$, $i = 1,\dots,d$:
    \begin{enumerate}
        \item Simulate $f(\mathbf{x})$ for a range of values of $x_i$, with all of the other components $\mathbf{x}$ held fixed to their values in $\mathbf{x}_{\text{lo-fi}}$.
        \item Fit a univariate numerical analysis-informed \ac{GP} model (\ref{eq: gp post mean}, \ref{eq: gp post cov}) to these data, assuming an error bound of the form $b(x_i) = x_i^{r_i}$, where the scale estimate $\hat{\sigma}_i$ from \Cref{subsec: UQ} is used, and where the convergence order $r_i$, the kernel smoothness $s_i$, and the kernel length-scale parameter $\ell_i$ are simultaneously estimated using quasi maximum likelihood, as explained in \Cref{subsec: estimate order}.
    \end{enumerate}
    \item Construct a tensor product covariance model $k_e(\mathbf{x},\mathbf{x}') = k_e(x_1,x_1' ; \ell_1) \dots k_e(x_d,x_d' ; \ell_d)$
    and posit the overall error bound $b(\mathbf{x}) = \hat{\sigma}_1 x_1^{r_1} + \dots + \hat{\sigma}_d x_d^{r_d}$.  Then perform experimental design as described in \Cref{subsec: design}, with computational budget $C$.
    Denote the optimal design $X_n$.
    \item Simulate $f(\mathbf{x})$ for each $\mathbf{x} \in X_n$ and return the \ac{GRE} conditional mean \eqref{eq: simplified expressions} as the final approximation to $f(\mathbf{0})$.
\end{enumerate}

\noindent Several remarks are in order:
First, it is assumed that the Step 1 incurs negligible cost relative to the total computational budget; the precise interpretation of this assumption will necessarily be context-dependent.
Second, the additive form for $b(\mathbf{x})$ is appropriately conservative, in the sense that \emph{all} components of $\mathbf{x}$ must be small to control this bound.  
One could go further and compare the performance of \acp{GP} based on alternative form of $b(\mathbf{x})$, for example with interaction terms included, selecting among such models using maximum quasi likelihood, but for the present purposes the additive form of $b(\mathbf{x})$ is preferred since it is compatible with the independent estimation of convergence orders $r_i$ in Step 1.
Third, the independent estimation of $(r_i,s_i,\ell_i)$ for each $i = 1,\dots,d$ can be performed using brute-force search over a 3-dimensional grid to maximise the quasi-likelihood, whereas simultaneous estimation of all kernel parameters would be both statistically and computationally difficult.
The full workflow is demonstrated on our cardiac case study, next.

\begin{figure}[t!]
    \centering
    \includegraphics[width = 0.47\textwidth]{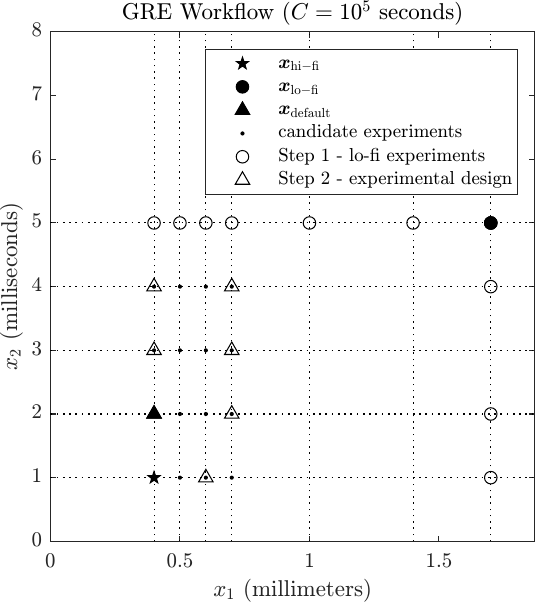}
    \includegraphics[width = 0.51\textwidth]{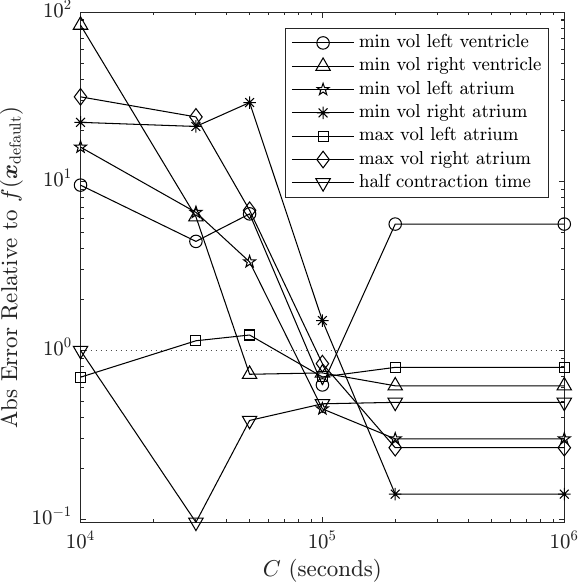}
    \caption{Scalar quantities of interest from the cardiac model.
    Left: The workflow, illustrated. In Step 1, the effect of varying each component of $\mathbf{x}$ in turn is explored, with all other components fixed equal to their value in $\mathbf{x}_{\text{lo-fi}}$.
    This facilitates the construction of a multivariate Gaussian process model for use in Step 2, where experimental design is performed (here shown for a computational budget of $C = 10^5$ seconds).
    For assessment purposes we aim to predict $f(\mathbf{x}_{\text{hi-fi}})$ as a ground truth, but in practice the goal is to predict $f(\mathbf{0})$.
    Right:  For each of the 7 scalar quantities of interest associated with the cardiac model we display the ratio of the absolute error $|f(\mathbf{x}_{\text{hi-fi}}) - m_n[f](\mathbf{x}_{\text{hi-fi}})|$ of the \ac{GRE} method and the absolute error $|f(\mathbf{x}_{\text{hi-fi}}) - f(\mathbf{x}_{\text{default}}) |$ of the default approximation, as a function of the total computational budget $C$.
    }
    \label{fig: workflow}
\end{figure}

\subsection{Approximation of Scalar Quantities of Interest}
\label{subsec: scalar cardiac}

The first part of our case study concerned the approximation of physiologically-interpretable scalar-valued quantities of interest.
These were; the minimum volume of the left and right ventricles and atria, the maximum volume during ventricular contraction for the left and right atria, and the time taken for the ventricles to contract in total capacity by one half; a total of 7 test problems for \ac{GRE}.

Though the computational time $c(\mathbf{x}_{\text{default}})$ is substantial, in this case study parallel computation resources can be exploited.
The main computational constraint that we work under here is that we will only run experiments for which $c(\mathbf{x}) \leq c(\mathbf{x}_{\text{default}})$ within our \ac{GRE} method.
Since the continuum limit $f(\mathbf{0})$ is intractable, we additionally computed a reference solution $f(\mathbf{x}_{\text{hi-fi}})$ with $\mathbf{x}_{\text{hi-fi}} = (0.4 \text{ mm}, 1 \text{ ms})$ and in what follows we assess how well the \ac{GRE} point estimate $m_n[f](\mathbf{x}_{\text{hi-fi}})$ approximates $f(\mathbf{x}_{\text{hi-fi}})$.
The central question here is whether the workflow proposed in \Cref{subsec: workflow} can provide more accurate approximation of $f(\mathbf{x}_{\text{hi-fi}})$ compared to $f(\mathbf{x}_{\text{default}})$, and if so what computational budget is required. 
To the best of our knowledge there do not exist comparable methodologies for this task; methods such as emulation and \ac{MFM} are not applicable when $\bm{\theta}$ is fixed, and classical extrapolation methods were not developed with multivariate $\mathbf{x}$ in mind.

The workflow is illustrated in the left panel of \Cref{fig: workflow}.
The \ac{LOFI} setting was $\mathbf{x}_{\text{lo-fi}} = (1.7 \text{ mm},5 \text{ ms})$. 
The convergence orders $r_1$, $r_2$ were selected from $\{0.5,1,2\}$, the smoothness parameters $s_1$, $s_2$ were selected from $\{0,1,2\}$, and the length-scales $\ell_1$, $\ell_2$ were selected using grid search, all estimated simultaneously using maximum quasi-likelihood.
Experimental designs were computed based on a candidate set of experiments, each of which incurs a cost no greater than $c(\mathbf{x}_{\text{default}})$, indicated by dots in the left panel of \Cref{fig: workflow}.
Rather than estimate the computational times, for this case study all candidate experiments were performed at the outset and their times recorded. 
Results for the 7 test problems are shown in the right panel of \Cref{fig: workflow}, where it is observed that the \ac{GRE} point estimator provides a generally better approximation to $f(\mathbf{x}_{\text{hi-fi}})$ compared to $f(\mathbf{x}_{\text{default}})$ when the computational budget $C$ reaches or exceeds $10^5$.
The optimal design for approximating the minimum volume of the left ventricle is depicted in the left hand panel of \Cref{fig: workflow} for a computational budget $C = 10^5$; the design supplements $\mathbf{x}_{\text{default}}$ with 6 additional simulations of lower cost, analogous to a classical extrapolation method but here generalised to the multivariate context.
Note that for $C$ exceeding $2 \times 10^5$ the optimal design becomes saturated, containing all experiments in the candidate set.
That \ac{GRE} should perform worse than $f(\mathbf{x}_{\text{defult}})$ at small computational budgets is not surprising given that all convergence orders $r_i$, smoothnesses $s_i$, and length-scales $\ell_i$ are estimated from the small \ac{LOFI} training dataset, and these values largely determine the output of \ac{GRE} in the absence of a sufficient number of experiments in the training dataset $X_n$.
However, for a sufficiently large computational budget, it is encouraging to see that information from the experiments in $X_n$, each of which cost no greater than $c(\mathbf{x}_{\text{default}})$, is exploited in \ac{GRE} to achieve more accurate estimation for 6 of the 7 scalar quantities of interest.

\subsection{Approximation of Temporal Model Output}
\label{subsec: vector cardiac}

\begin{figure}[t!]
    \centering
    \includegraphics[width = \textwidth]{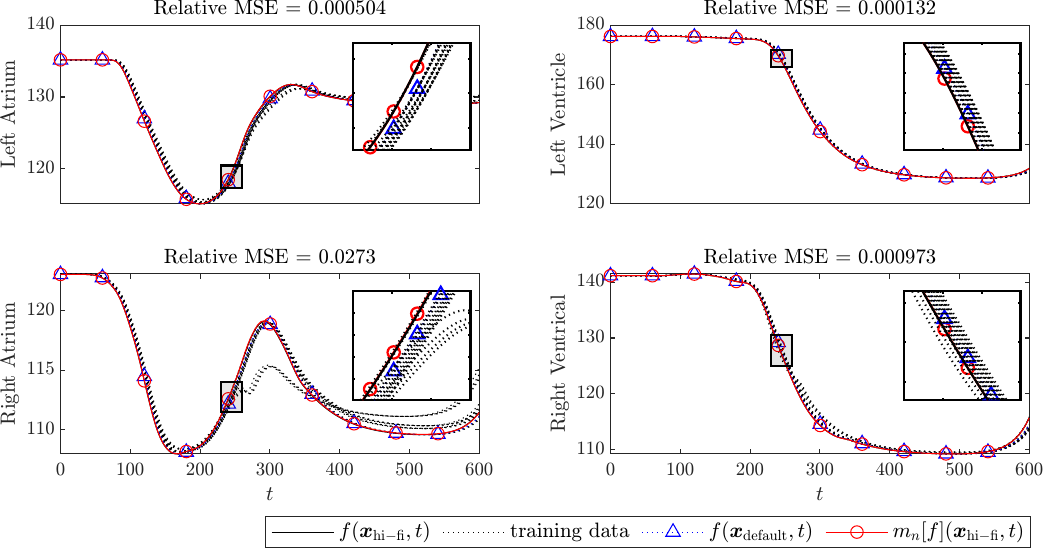}
    \caption{Temporal quantities of interest from the cardiac model.
             For each of the 4 temporal quantities of interest associated with the cardiac model we display the approximations produced at the different spatial and temporal resolutions in the training dataset $X_n$, together with the ground truth $f(\mathbf{x}_{\text{hi-fi}})$ (solid black lines), the default approximation $f(\mathbf{x}_{\text{default}},t)$ (blue triangles), and the approximation $m_n[f](\mathbf{x}_{\text{hi-fi}},t)$ from Gauss--Richardson Extrapolation (GRE; red circles).
             The ratio of the mean square error $\int [ f(\mathbf{x}_{\text{hi-fi}},t) - m_n[f](\mathbf{x}_{\text{hi-fi}},t) ]^2 \; \mathrm{d}t$ of the GRE method and the mean square error $\int [ f(\mathbf{x}_{\text{hi-fi}},t) - f(\mathbf{x}_{\text{default}},t) ]^2 \; \mathrm{d}t$ of the default method is also reported.}
    \label{fig: temporal results}
\end{figure}

The scalar quantities of interest considered in \Cref{subsec: scalar cardiac} are summary statistics obtained from 4-dimensional temporal model output of the form $\mathbf{f}(\mathbf{x},t)$, where here $t$ is a time index ranging from 0 to 600 ms and the components of $\mathbf{f}$ refer to the volumes of the atria and ventricles.
It is therefore interesting to investigate whether these temporal outputs can be directly approximated, providing 4 test problems for the methodology described in \Cref{subsec: multivar output}.
Here for simplicity we fixed the discrete values $s_1$, $s_2$, $r_1$, and $r_2$, the median of the values estimated in \Cref{subsec: scalar cardiac}, and we fixed the continuous values $\hat{\sigma}_1$, $\hat{\sigma}_2$, $\ell_1$, and $\ell_2$ to the mean of the values estimated in \Cref{subsec: scalar cardiac}.
The length-scale for the kernel $k_{\mathcal{T}}$ was set equal to the length of the time series itself.
The computational budget was fixed to $C = 2 \times 10^5$, so that our experimental design is saturated, but recall that no individual experiment in this design had cost exceeding $c(\mathbf{x}_{\text{default}})$.
Full results are displayed in \Cref{fig: temporal results}.
In each case the approximation produced by \ac{GRE} achieves lower mean square error relative to $f(\mathbf{x}_{\text{default}},t)$.
Taken together with the results in \Cref{subsec: scalar cardiac}, these results are an encouraging and pave the way for subsequent investigations and applications of \ac{GRE}.

\section{Discussion}
\label{sec: discuss}

This paper introduced a probabilistic perspective on extrapolation, presenting a framework in which classical extrapolation methods from numerical analysis and modern multi-fidelity modelling are unified.
One approach was developed in detail, which we termed \acf{GRE}.
The \ac{GRE} method facilitates simultaneous convergence acceleration and uncertainty quantification, and unlocks experimental design functionality for optimisation over the set of fidelities at which simulation is performed.
The end result is a methodology that allows a practitioner to arrive, in a principled manner, at fidelities $\{\mathbf{x}_i\}_{i=1}^n$ such that the associated simulator outputs $\{f(\mathbf{x}_i)\}_{i=1}^n$ can be combined to produce an approximation to the continuum quantity $f(\mathbf{0})$ that is typically more accurate than a single \ac{HIFI} simulation run at a comparable computational cost.
A cardiac modelling case study provided an initial positive proof-of-concept, but further case studies -- involving different types of computer model -- will be required to comprehensively assess \ac{GRE}; we aim to undertake domain-specific investigations in future work.

Several methodological extensions to this work can be envisaged, such as considering alternative regression models to \acp{GP}, developing theory and methodology for the more challenging cases where the regression model is misspecified and computational costs needs to be predicted, and extending the experimental design methodology to include additional degrees of freedom $\bm{\theta}$, which are often present in a mathematical model.
In addition, and more speculatively, it would be interesting to explore modern computational tasks, such as the super-resolution task in deep learning, for which extrapolation methods have yet to be exploited.

\paragraph{Acknowledgements}

The authors wish to thank Onur Teymur and Jere Koskela for early discussions of this project.
CJO was supported by EP/W019590/1 and a Leverhulme Prize.
TK was supported by the Research Council of Finland postdoctoral researcher grant number 338567, \textit{Scalable, Adaptive and Reliable Probabilistic Integration}. ALT was supported by EP/X01259X/1 and EP/R014604/1. CJO and ALT would like to thank the Isaac Newton Institute for Mathematical Sciences, Cambridge, for support and hospitality during the programme {\em Mathematical and Statistical Foundation of Future Data-Driven Engineering} where work on this paper was undertaken. 
MS and SAN were supported by the Wellcome/EPSRC Centre for Medical Engineering (WT203148/Z/16/Z). 
SAN was supported by NIH R01-HL152256, ERC PREDICT-HF 453 (864055), BHF (RG/20/4/34803), and EPSRC (EP/P01268X/1, EP/X03870X/1).

\newpage
\appendix

\part{} 
\parttoc 

\section{Kernels and Smoothness Spaces}
\label{app: kernels}

This appendix contains definitions for the kernels $k_e : \mathcal{X} \times \mathcal{X} \rightarrow \mathbb{R}$, on a bounded set $\mathcal{X} \subset \mathbb{R}^d$, referred to in the main text, together with details about the smoothness of elements in the Hilbert spaces $\mathcal{H}_{k_e}(\mathcal{X})$ that are reproduced.
All kernels $k_e$ that we discussed take the radial form 
$$
k_e(\mathbf{x},\mathbf{y}) = \phi\left( d_{\bm{\ell}}(\mathbf{x},\mathbf{y}) \right), \qquad d_{\bm{\ell}}(\mathbf{x},\mathbf{y}) =  \left( \sum_{i=1}^d \frac{(x_i - y_i)^2}{\ell_i^2} \right)^{1/2} , 
$$
where the radial function $\phi : [0,\infty) \rightarrow \mathbb{R}$ and the length-scale parameters $\bm{\ell} \in (0,\infty)^d$ are to be specified.

\paragraph{M\'{a}tern Kernels}

The M\'{a}tern family of kernels is defined via the radial function
$$
\phi_\nu(z) = \exp\left( - \sqrt{2s+1} \; z \right) \frac{s!}{(2s)!} \sum_{i=0}^s \frac{(s+i)!}{i! (s-i)!} \left( 2 \sqrt{2s+1} \; z \right)^{s-i} 
$$
for which $k_e \in C^{2s}(\mathcal{X} \times \mathcal{X})$ \citep[][Section 2.7]{stein1999interpolation}. 
The kernel $k_e$ reproduces (up to an equivalent norm) the Sobolev space \smash{$H^{s + \frac{d+1}{2}}(\mathcal{X})$} \citep[][Corollary 10.48]{wendland2004scattered}.
The elements of \smash{$H^{s + \frac{d+1}{2}}(\mathcal{X})$} are functions whose mixed weak partial derivatives up to order $s + \frac{d+1}{2}$ exist as elements of $L^2(\mathcal{X})$, and so in particular \smash{$C^{s + \frac{d+1}{2}}(\mathcal{X}) \subset H^{s + \frac{d+1}{2}}(\mathcal{X})$}.
Conversely, from the Sobolev embedding theorem, \smash{$H^{s + \frac{d+1}{2}}(\mathcal{X}) \subset C^s(\mathcal{X})$}.
Thus \smash{$C^{s + \frac{d+1}{2}}(\mathcal{X}) \subset \mathcal{H}_{k_e}(\mathcal{X}) \subset C^s(\mathcal{X})$}.
It may be helpful to point out that, somewhat confusingly, in most of the statistical literature the M\'{a}tern kernel is defined in terms of a `smoothness parameter' $\nu \coloneqq s + \frac{1}{2}$ \citep{porcu2023mat}.
From \Cref{subsec: faster convergence} of this paper onward, ``Mat\'{e}rn ($s = m$)'' refers to $\phi_\nu$ with $\nu = m + \frac{1}{2}$.
As an example, the Mat\'{e}rn-$\frac{5}{2}$ kernel that appears in \Cref{fig: illustrating samples} would correspond to $s = 2$ in our framework.

\paragraph{Wendland Kernels}

Let $z_+^m \coloneqq \max(0,z)^m$ and $(\mathcal{I} \varphi)(z) \coloneqq \int_z^\infty t \varphi(t) \; \mathrm{d}t$.
The Wendland family of kernels is defined via the radial function
$$
\phi_{d,s}(z) = \mathcal{I}^s \varphi_{\lfloor \frac{d}{2} \rfloor + s + 1} (z), \qquad \varphi_m(z) = (1-z)_+^m
$$
for $s \in \mathbb{N}_0$, for which $k_e \in C^{2s}(\mathcal{X} \times \mathcal{X})$; see Theorem 9.13 of \citet{wendland2004scattered}.
The kernel $k_e$ reproduces (up to an equivalent norm) the Sobolev space \smash{$H^{s + \frac{d+1}{2}}(\mathcal{X})$}, at least when $s \geq 1$ (for $s=0$ we need $d \geq 3$); see Theorem 10.35 of \citet{wendland2004scattered}. 
Thus again \smash{$C^{s + \frac{d+1}{2}}(\mathcal{X}) \subset \mathcal{H}_{k_e}(\mathcal{X}) \subset C^s(\mathcal{X})$}.
Wendland kernels are sometimes preferred to Mat\'{e}rn kernels due to their compact support \citep[but they are not the only available alternative with compact support;][]{porcu2023mat}.

\paragraph{Gaussian Kernels}

More generally, if the kernel $k_e$ satisfies $k_e \in C^{2s}(\mathcal{X} \times \mathcal{X})$ with $\mathcal{X}$ an open subset of $\mathbb{R}^d$, then $\mathcal{H}_{k_e}(\mathcal{X}) \subset C^s(\mathcal{X})$; see Theorem 10.45 of \citet{wendland2004scattered}.
In particular the Gaussian kernel, defined by the radial function
$$
\phi(z) = \exp\left(-z^2\right) , 
$$
has continuous derivatives of all orders, so elements of the associated Hilbert space are $C^\infty(\mathcal{X})$.

\section{Proofs of Results in the Main Text}

This appendix contains proofs for all theoretical results presented in the main text.

\subsection{Sample Path Properties of Numerical Analysis-Informed GPs}
\label{subsec: sample path}

This appendix presents sufficient conditions under which sample paths $g$ from the numerical analysis-informed \ac{GP} in \eqref{eq: cov model} satisfy, with probability one, $g(\mathbf{x}) - g(\mathbf{0}) = O(b(\mathbf{x}))$ in the $\mathbf{x} \rightarrow \mathbf{0}$ limit. 
To be precise, we establish that there is a  \emph{version} $g$ of $\mathcal{GP}(0,k)$ for which, with probability one, $g(\mathbf{x}) - g(\mathbf{0}) = O(b(\mathbf{x}))$.  
Recall that a stochastic process $h$ is said to be a \emph{version} of $g$ if $h(\mathbf{x}) = g(\mathbf{x})$ with probability one, for all $\mathbf{x} \in \mathcal{X} \setminus \{\mathbf{0}\}$.
This technical consideration arises simply because stochastic processes can be altered on null sets, affecting their convergence properties while leaving their distribution unchanged.

\begin{proposition}[Sample paths of numerical analysis-informed \acp{GP}]
In the setting of \Cref{subsec: prior}, assume that $k_e$ is H\"{o}lder continuous for some exponent; i.e. there exist $C, \alpha > 0$ such that $|k_e(\mathbf{x},\mathbf{x}) - k_e(\mathbf{x},\mathbf{x}')| \leq C \|\mathbf{x} - \mathbf{x}'\|^{\alpha} 
$ for all $\mathbf{x},\mathbf{x}' \in \mathcal{X}$, and that the set $\mathcal{X}$ is bounded.
Then there is a version $g$ of $\mathcal{GP}(0,k)$ for which, with probability one, $g(\mathbf{x}) - g(\mathbf{0}) = O(b(\mathbf{x}))$ in the $\mathbf{x} \rightarrow \mathbf{0}$ limit.
\end{proposition}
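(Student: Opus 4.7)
The plan is to exploit the additive-plus-multiplicative structure of the covariance \eqref{eq: cov model} by constructing an explicit version of the \ac{GP} out of independent components. Let $\xi \sim N(0,\sigma^2 k_0^2)$ and let $\tilde{h}$ be a centred \ac{GP} on $\mathcal{X}$ with covariance $\sigma^2 k_e$, independent of $\xi$, defined on a common probability space. Set
\begin{align*}
g(\mathbf{x}) \coloneqq \xi + b(\mathbf{x})\, \tilde{h}(\mathbf{x}), \qquad \mathbf{x} \in \mathcal{X}.
\end{align*}
A direct check, using $b(\mathbf{0}) = 0$, shows that $g$ is centred Gaussian with covariance $\sigma^2[k_0^2 + b(\mathbf{x})b(\mathbf{x}')k_e(\mathbf{x},\mathbf{x}')] = k(\mathbf{x},\mathbf{x}')$, so $g$ has the law of $\mathcal{GP}(0,k)$. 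Since $g(\mathbf{x}) - g(\mathbf{0}) = b(\mathbf{x})\,\tilde{h}(\mathbf{x})$, the claim is reduced to showing that $\tilde{h}$ admits a version that is almost surely bounded on some neighbourhood of $\mathbf{0}$; it suffices to show that $\tilde{h}$ has a continuous version on (the closure of) $\mathcal{X}$, since any continuous function on a compact set is bounded.

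To produce a continuous version of $\tilde{h}$ I would invoke Kolmogorov's continuity criterion. The increment variance is
\begin{align*}
\mathbb{E}\bigl[(\tilde{h}(\mathbf{x}) - \tilde{h}(\mathbf{x}'))^2\bigr] = \sigma^2\bigl[k_e(\mathbf{x},\mathbf{x}) - 2k_e(\mathbf{x},\mathbf{x}') + k_e(\mathbf{x}',\mathbf{x}')\bigr] ,
\end{align*}
and this can be split as $[k_e(\mathbf{x},\mathbf{x}) - k_e(\mathbf{x},\mathbf{x}')] + [k_e(\mathbf{x}',\mathbf{x}') - k_e(\mathbf{x}',\mathbf{x})]$, where both brackets are bounded by $C\|\mathbf{x}-\mathbf{x}'\|^{\alpha}$ — the first directly from the stated Hölder hypothesis, the second from the same hypothesis (with the roles of $\mathbf{x}$ and $\mathbf{x}'$ exchanged) together with the symmetry $k_e(\mathbf{x},\mathbf{x}') = k_e(\mathbf{x}',\mathbf{x})$. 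Hence $\mathbb{E}[(\tilde{h}(\mathbf{x}) - \tilde{h}(\mathbf{x}'))^2] \leq 2C\sigma^2\|\mathbf{x}-\mathbf{x}'\|^{\alpha}$, and Gaussianity upgrades this to $\mathbb{E}[|\tilde{h}(\mathbf{x}) - \tilde{h}(\mathbf{x}')|^{2p}] \leq c_p (2C\sigma^2)^p \|\mathbf{x}-\mathbf{x}'\|^{p\alpha}$ for every $p \in \mathbb{N}$. Choosing $p > d/\alpha$ makes the exponent strictly exceed $d$, which verifies the hypotheses of Kolmogorov's criterion on the bounded (hence, after closure, compact) parameter set $\mathcal{X}$, yielding a continuous version of $\tilde{h}$.

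Replacing $\tilde{h}$ by this continuous version throughout the construction produces a version of $g$ with the required sample path property: continuity of $\tilde{h}$ at $\mathbf{0}$ gives $|\tilde{h}(\mathbf{x})| \leq M$ on some neighbourhood of $\mathbf{0}$ for a finite (random) constant $M$, so $|g(\mathbf{x}) - g(\mathbf{0})| = b(\mathbf{x})|\tilde{h}(\mathbf{x})| \leq M\, b(\mathbf{x})$ as $\mathbf{x} \to \mathbf{0}$.

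The main obstacle is a mild one: the stated Hölder condition is one-sided, controlling $k_e(\mathbf{x},\mathbf{x}) - k_e(\mathbf{x},\mathbf{x}')$ rather than the full difference $k_e(\mathbf{x},\mathbf{x}) - 2k_e(\mathbf{x},\mathbf{x}') + k_e(\mathbf{x}',\mathbf{x}')$ that governs the increment variance. The key step is therefore to recognise that symmetry of $k_e$, combined with two applications of the one-sided bound, recovers the full second-difference estimate needed for Kolmogorov. A minor technicality is that $\mathcal{X}$ is only assumed bounded, not compact; boundedness suffices to cover $\overline{\mathcal{X}}$ with finitely many dyadic boxes, on each of which the standard form of Kolmogorov's theorem applies, and the continuity of $k_e$ on the diagonal (implied by the Hölder condition) ensures $\tilde{h}$ extends continuously to $\overline{\mathcal{X}}$.
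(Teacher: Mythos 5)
Your proof is correct, but it takes a genuinely different route from the paper's. The paper works backwards from a generic sample $g \sim \mathcal{GP}(0,k)$: it observes that the normalised error $e(\mathbf{x}) = b(\mathbf{x})^{-1}(g(\mathbf{x})-g(\mathbf{0}))$ is itself a centred Gaussian process with covariance proportional to $k_e$ on $\mathcal{X}\setminus\{\mathbf{0}\}$, and then proves the stronger statement $\mathbb{E}[\sup_{\mathcal{X}\setminus\{\mathbf{0}\}}|e|]<\infty$ via Dudley's entropy-integral theorem, bounding covering numbers for the canonical pseudometric $\mathsf{d}(\mathbf{x},\mathbf{x}')^2 = k_e(\mathbf{x},\mathbf{x}) - 2k_e(\mathbf{x},\mathbf{x}')+k_e(\mathbf{x}',\mathbf{x}')$ using exactly the symmetrised H\"{o}lder estimate $\mathsf{d}(\mathbf{x},\mathbf{x}')\le\sqrt{2C}\,\|\mathbf{x}-\mathbf{x}'\|^{\alpha/2}$ that you derive. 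You instead build the process forwards as $\xi + b\,\tilde h$ with independent components and apply the Kolmogorov--Chentsov criterion to $\tilde h$. The two decompositions are essentially inverse to one another (your $\tilde h$ plays the role of the paper's $e$), and the key analytic input --- converting the one-sided H\"{o}lder hypothesis into a two-sided increment-variance bound via symmetry of $k_e$ --- is the same in both; what differs is the chaining tool. Kolmogorov's criterion is the more elementary device and delivers continuity, hence the local boundedness near $\mathbf{0}$ that the $O(b(\mathbf{x}))$ claim actually requires, whereas Dudley delivers global almost-sure boundedness of the normalised error over all of $\mathcal{X}\setminus\{\mathbf{0}\}$, which is more than needed. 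The one soft spot in your write-up is invoking Kolmogorov's theorem on a general bounded index set rather than a cube; you flag this yourself, and it is covered by the standard totally-bounded-metric-space form of Kolmogorov--Chentsov (or by chaining directly, as the paper in effect does), so it is a presentational rather than a mathematical gap.
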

\begin{proof}
Let $e(\mathbf{x}) := b(\mathbf{x})^{-1} (g(\mathbf{x}) - g(\mathbf{0}))$ for $\mathbf{x} \in \mathcal{X} \setminus \{\mathbf{0}\}$, which is well-defined since $b(\mathbf{x}) > 0$ for all $\mathbf{x} \in \mathcal{X} \setminus \{\mathbf{0}\}$. 
From \eqref{eq: cov model}, if $g \sim \mathcal{GP}(0,k)$ then the distribution of $e$ is $\mathcal{GP}(0,k_e|_{\mathcal{X} \setminus \{\mathbf{0}\}})$, where $k_e|_{\mathcal{X} \setminus \{\mathbf{0}\}}(\mathbf{x},\mathbf{x}') = k_e(\mathbf{x},\mathbf{x}')$ for all $\mathbf{x},\mathbf{x}' \in \mathcal{X} \setminus \{\mathbf{0}\}$.
Our task is equivalent to establishing that there is a version of $e$ for which $e(\mathbf{x})$ is almost surely bounded as $\mathbf{x} \rightarrow \mathbf{0}$, but we will in fact establish the stronger result that there is a version of $e$ for which $\mathbb{P}( \sup_{\mathbf{x} \in \mathcal{X} \setminus \{\mathbf{0}\}} |e(\mathbf{x})| < \infty ) = 1$.
From Markov's inequality, this follows if $\mathbb{E}[\sup_{\mathbf{x} \in \mathcal{X} \setminus \{\mathbf{0}\}} |e(\mathbf{x})|] < \infty$, and in fact it is sufficient to show that
\begin{align}
\mathbb{E}\left[ \sup_{\mathbf{x} \in \mathcal{X} \setminus \{\mathbf{0}\}} e(\mathbf{x}) \right] < \infty . \label{eq: sufficient path}
\end{align}
Indeed, for any $\mathbf{x}_0 \in \mathcal{X} \setminus \{\mathbf{0}\}$, we have 
\begin{align*}
    |e(\mathbf{x})| & \leq |e(\mathbf{x}_0)| + |e(\mathbf{x}) - e(\mathbf{x}_0)| \\
    & \leq |e(\mathbf{x}_0)| + \sup_{\mathbf{x}' \in \mathcal{X} \setminus \{\mathbf{0}\}} [ e(\mathbf{x}') - e(\mathbf{x}_0) ] - \inf_{\mathbf{x}' \in \mathcal{X} \setminus \{\mathbf{0}\}} [ e(\mathbf{x}') - e(\mathbf{x}_0) ] 
\end{align*}
from which it follows that
\begin{align*}
    \mathbb{E}\left[ \sup_{\mathbf{x} \in \mathcal{X} \setminus \{\mathbf{0}\}} |e(\mathbf{x})| \right] & \leq \mathbb{E}[|e(\mathbf{x}_0)|] + \mathbb{E}\left[\sup_{\mathbf{x}' \in \mathcal{X} \setminus \{\mathbf{0}\}} e(\mathbf{x}') - e(\mathbf{x}_0) \right] - \mathbb{E}\left[\inf_{\mathbf{x}' \in \mathcal{X} \setminus \{\mathbf{0}\}} e(\mathbf{x}') - e(\mathbf{x}_0) \right] \\
    & = \mathbb{E}[|e(\mathbf{x}_0)|] + 2 \mathbb{E}\left[\sup_{\mathbf{x}' \in \mathcal{X} \setminus \{\mathbf{0}\}} e(\mathbf{x}') \right] ,
\end{align*}
where $\mathbb{E}[|e(\mathbf{x}_0)|] < \infty$ since $e(\mathbf{x}_0)$ is Gaussian, and where the final equality followed from $\mathbb{E}[e(\mathbf{x}_0)] = 0$ and symmetry of the \ac{GP}.

Our main tools to establish \eqref{eq: sufficient path} are entropy numbers and Dudley's theorem.
The H\"{o}lder condition ensures that the induced pseudometric $\mathsf{d} : \mathcal{X} \times \mathcal{X} \rightarrow [0,\infty)$ defined via
$\mathsf{d}(\mathbf{x},\mathbf{x}')^2 \coloneqq k_e(\mathbf{x},\mathbf{x}) - 2 k_e(\mathbf{x},\mathbf{x}') + k_e(\mathbf{x}',\mathbf{x}')$
satisfies $\mathsf{d}(\mathbf{x},\mathbf{x}') \leq \sqrt{2C} \|\mathbf{x} - \mathbf{x}'\|^{\alpha/2}$ for all $\mathbf{x},\mathbf{x}' \in \mathcal{X}$.
Let $B_{\epsilon,\mathsf{d}}(\mathbf{x}) \coloneqq \{\mathbf{x}' \in \mathcal{X} \setminus \{\mathbf{0}\} : \mathsf{d}(\mathbf{x},\mathbf{x}') < \epsilon\}$ denote an open $\mathsf{d}$-ball of radius $\epsilon$ centred at $\mathbf{x}$, and let $N(\mathcal{X} \setminus \{\mathbf{0}\},\mathsf{d};\epsilon)$ denote the \emph{entropy number}; the minimal number of open $\mathsf{d}$-balls of radius $\epsilon$ required to cover $\mathcal{X} \setminus \{\mathbf{0}\}$.
The boundedness condition on $\mathcal{X}$ ensures that the entropy number is well-defined.
Dudley's theorem states that, in our context, there is a version $e$ of $\mathcal{GP}(0,k_e)$ such that
\begin{align}
\mathbb{E}\left[ \sup_{\mathbf{x} \in \mathcal{X}  \setminus \mathbf{0}} e(\mathbf{x}) \right] \leq 24 \int_0^\infty \sqrt{\log N(\mathcal{X} \setminus \{\mathbf{0}\},\mathsf{d};\epsilon)} \; \mathrm{d}\epsilon ; \label{eq: dudley}
\end{align}
see Theorem 11.17 in \citet{ledoux1991probability}.
Our task is now to establish that the integral in \eqref{eq: dudley} is finite, and to this end we make use of a simple upper bound on the entropy number for $\mathsf{d}$ in terms of the entropy number for the usual Euclidean distance $\mathsf{e}(\mathbf{x},\mathbf{x}') \coloneqq \|\mathbf{x} - \mathbf{x}'\|$ by noting that
\begin{align*}
    B_{\epsilon,\mathsf{d}}(\mathbf{x}) & = \{\mathbf{x}' \in \mathcal{X} \setminus \{\mathbf{0}\} : \mathsf{d}(\mathbf{x},\mathbf{x}') < \epsilon\} \\
    & \supseteq \{\mathbf{x}' \in \mathcal{X} \setminus \{\mathbf{0}\} : \sqrt{2C} \|\mathbf{x} - \mathbf{x}'\|^{\alpha/2} < \epsilon\}
    = B_{(\epsilon/C)^{1/\alpha}, \mathsf{e}}(\mathbf{x}) ,
\end{align*}
from which it follows that $N(\mathcal{X} \setminus \{\mathbf{0}\},\mathsf{d};\epsilon) \leq N(\mathcal{X} \setminus \{\mathbf{0}\} ,\mathsf{e};(\epsilon/C)^{1/\alpha})$.
Since $\mathcal{X}$ is bounded, the entropy number for the Euclidean distance can be bounded by considering the number of Euclidean balls of radius $(\epsilon/C)^{1/\alpha}$ that are needed to cover a sufficiently large cube in $\mathbb{R}^d$, from which we obtain a bound of the form
$$
N\left(\mathcal{X} \setminus \{\mathbf{0}\},\mathsf{e}; \left(\frac{\epsilon}{C} \right)^{1/\alpha} \right) \leq  \max\left\{ 1 , \tilde{C} \left( \frac{C}{\epsilon} \right)^{d/\alpha} \right\} 
$$
for some constant $\tilde{C}$, from which the finiteness of the integral in \eqref{eq: dudley} can be established.
Indeed, letting $\epsilon_0 = \tilde{C}^{\alpha/d} C$, we have the bound
\begin{align*}
    \int_0^\infty \sqrt{\log N(\mathcal{X},\mathsf{d};\epsilon)} \; \mathrm{d}\epsilon & \leq \int_0^\infty \sqrt{ \max\left\{ 0 , \frac{d}{\alpha} \log\left( \frac{\epsilon_0}{\epsilon} \right)  \right\} } \; \mathrm{d}\epsilon \\
    & = \sqrt{\frac{d}{\alpha}} \int_0^{\epsilon_0} \sqrt{\log\left( \frac{\epsilon_0}{\epsilon} \right)} \; \mathrm{d}\epsilon = \sqrt{\frac{\pi d}{\alpha}} \frac{\epsilon_0}{2}
    < \infty ,
\end{align*}
as required.
\end{proof}

\noindent The H\"{o}lder and boundedness assumptions in \Cref{prop: discrete} are weak and hold for all of the examples in this paper that we considered.

\subsection{Derivation of the `Objective' Prior Limit}
\label{app: flat prior limit}

This appendix contains standard calculations that can be found in references such as  \citet{karvonen2018bayes}, but we include them here to keep the paper self-contained. 
Let $f \sim \mathcal{GP}(0,k)$ where we fix finite values of $\sigma^2, k_0^2 > 0$ in the specification of $k$ in \eqref{eq: cov model}.
The distribution of $f$ conditional upon the components $f(X_n)$ takes the familiar form
\begin{align}
m_n[f](\mathbf{x}) & = \mathbf{k}(\mathbf{x})^\top \mathbf{K}^{-1} f(X_n),  \label{eq: standard GP mean} \\
k_n[f](\mathbf{x},\mathbf{x}') & = k(\mathbf{x},\mathbf{x}') - \mathbf{k}(\mathbf{x})^\top \mathbf{K}^{-1} \mathbf{k}(\mathbf{x}') ; \label{eq: standard GP cov}
\end{align}
see Chapter 2 of \citet{rasmussen2006gaussian}.
Here $k(\mathbf{x},\mathbf{x}') = \sigma^2 \{k_0^2 + k_b(\mathbf{x},\mathbf{x}') \}$, $\mathbf{k}(\mathbf{x}) = \sigma^2 \{ k_0^2 \mathbf{1} + \mathbf{k}_b(\mathbf{x}) \}$, and $\mathbf{K} = \sigma^2 \{ k_0^2 \mathbf{1} \mathbf{1}^\top + \mathbf{K}_b \}$.
Next we use the Woodbury matrix inversion identity to deduce that
\begin{align}
\mathbf{K}^{-1} = \sigma^{-2} (k_0^2 \mathbf{1} \mathbf{1}^\top + \mathbf{K}_b)^{-1} & = \sigma^{-2} \{ \mathbf{K}_b^{-1} - \mathbf{K}_b^{-1} \mathbf{1} ( k_0^{-2} + \mathbf{1}^\top \mathbf{K}_b^{-1} \mathbf{1} )^{-1} \mathbf{1}^\top \mathbf{K}_b^{-1} \} . \label{eq: after Woodbury}
\end{align}
Plugging this into \eqref{eq: standard GP mean} and \eqref{eq: standard GP cov}, we obtain 
\begin{align}
    m_n[f](\mathbf{x}) & = \{k_0^2 \mathbf{1} + \mathbf{k}_b(\mathbf{x})\}^\top \{ \mathbf{K}_b^{-1} - \mathbf{K}_b^{-1} \mathbf{1} ( k_0^{-2} + \mathbf{1}^\top \mathbf{K}_b^{-1} \mathbf{1} )^{-1} \mathbf{1}^\top \mathbf{K}_b^{-1} \} f(X_n), \nonumber \\ 
    \frac{k_n[f](\mathbf{x},\mathbf{x}')}{\sigma^2} & = k_0^2 + k_b(\mathbf{x},\mathbf{x}') - (k_0^2 \mathbf{1} + \mathbf{k}_b(\mathbf{x}))^\top \left\{ 
\mathbf{K}_b^{-1} - \frac{ \mathbf{K}_b^{-1} \mathbf{1}  \mathbf{1}^\top \mathbf{K}_b^{-1} }{ ( k_0^{-2} + \mathbf{1}^\top \mathbf{K}_b^{-1} \mathbf{1} ) } \right\} (k_0^2 \mathbf{1} + \mathbf{k}_b(\mathbf{x}')) . \label{eq: plug in k}
\end{align}
Then, for small $k_0^{-2}$, we have from a Taylor expansion that
\begin{align}
    (k_0^{-2} + \mathbf{1}^\top \mathbf{K}_b^{-1} \mathbf{1})^{-1} = \frac{1}{ \mathbf{1}^\top \mathbf{K}_b^{-1} \mathbf{1} }  - \frac{k_0^{-2}}{ (\mathbf{1}^\top \mathbf{K}_b^{-1} \mathbf{1})^2 } + \frac{k_0^{-4}}{ (\mathbf{1}^\top \mathbf{K}_b^{-1} \mathbf{1})^3 } + O(k_0^{-6})  \label{eq: Taylor for denominator}
\end{align}
so that
\begin{align*}
m_n[f](\mathbf{x}) ={}&  \frac{\mathbf{1}^\top \mathbf{K}_b^{-1} f(X_n) }{ \mathbf{1}^\top \mathbf{K}_b^{-1} \mathbf{1} } + \mathbf{k}_b(\mathbf{x})^\top \mathbf{K}_b^{-1} \left\{ f(X_n) -  \left( \frac{\mathbf{1}^\top \mathbf{K}_b^{-1} f(X_n) }{ 
  \mathbf{1}^\top \mathbf{K}_b^{-1} \mathbf{1} } \right) \mathbf{1} \right\} + O(k_0^{-2}),  \\
k_n[f](\mathbf{x},\mathbf{x}') ={}& \sigma^2 \left\{ k_b(\mathbf{x},\mathbf{x}') - \mathbf{k}_b(\mathbf{x})^\top \mathbf{K}_b^{-1} \mathbf{k}_b(\mathbf{x}') + \frac{ [ \mathbf{k}_b(\mathbf{x})^\top \mathbf{K}_b^{-1} \mathbf{1} - 1 ] [ \mathbf{k}_b(\mathbf{x}')^\top \mathbf{K}_b^{-1} \mathbf{1} - 1 ]^\top }{ \mathbf{1}^\top \mathbf{K}_b^{-1} \mathbf{1} } \right\} \\
&+ O(k_0^{-2}) , 
\end{align*}
which gives the stated limiting result.

\subsection{Technical Results on Polynomial Reproduction}

The proof that we present for \Cref{prop: mean conver} is based on local polynomial reproduction, and we aim for a result similar to Theorem 11.13 in \citet{wendland2004scattered}.
That result, however, aims for applicability to general domains and relies on an \emph{interior cone condition} to ensure that fill distances based on balls can be meaningfully related to the approximation task.
In our case, where the domain is axis-aligned, simpler and sharper results may be obtained by performing an alternative analysis based directly on boxes instead.
Furthermore, we require a result for non-isotropic kernels, while Theorem 11.13 in \citet{wendland2004scattered} assumes an isotropic kernel.

Our starting points is the following fundamental result on polynomial functions and scattered data:

\begin{lemma}[Lemma 1 in \citet{madych1992bounds}] \label{lem: technical}
Let $\gamma_1 = 2$ and $\gamma_d = 2d(1+\gamma_{d-1})$ for $d > 1$.
Let $\ell \in \mathbb{N}$ with $q \geq \gamma_d(\ell+1)$.
Let $\mathcal{X} = [\mathbf{a}, \mathbf{a} + \lambda \mathbf{1}]$ for some $\mathbf{a} \in \mathbb{R}_+^d$ and $\lambda > 0$.
Divide $\mathcal{X}$ into $q^d$ identical subcubes.
If $X_n \subset \mathcal{X}$ is a set of $n \geq q^d$ points such that each subcube contains at least one of these points, then for all $p \in \pi_\ell(\mathbb{R}^d)$,
$$
\|p\|_{L^\infty(\mathcal{X})} \leq e^{2 d \gamma_d (\ell + 1)} \|p\|_{L^\infty(X_n)} .
$$
\end{lemma}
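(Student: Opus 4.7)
The plan is to proceed by induction on $d$, with the recursive structure of $\gamma_d = 2d(1+\gamma_{d-1})$ mirroring the inductive argument. The base case $d=1$ assumes $q \geq 2(\ell+1)$ equal subintervals of width $h = \lambda/q$, each containing at least one data point. First I would select $\ell+1$ consecutive subintervals (possible because $q \geq 2(\ell+1) \geq \ell+1$) and label the data points contained in them as $x_1 < \cdots < x_{\ell+1}$. These points satisfy $|x_i - x_j| \geq (|i-j|-1)h$, so the Lagrange interpolation formula $p(x) = \sum_{j=1}^{\ell+1} p(x_j) L_j(x)$ with $L_j(x) = \prod_{i \neq j}(x-x_i)/(x_j-x_i)$ bounds $\|p\|_{L^\infty(\mathcal{X})}$ by $\|p\|_{L^\infty(X_n)}$ times the Lebesgue constant $\Lambda = \sup_{x \in \mathcal{X}} \sum_j |L_j(x)|$. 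Bounding $|x - x_i| \leq \lambda$ in each numerator and $|x_j - x_i| \geq (|i-j|-1)h$ in each denominator, then applying Stirling's formula to the resulting factorial ratios, should yield $\Lambda \leq e^{2\gamma_1(\ell+1)} = e^{4(\ell+1)}$.

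For the inductive step, I would fix the inductive hypothesis in dimension $d-1$ with constant $e^{2(d-1)\gamma_{d-1}(\ell+1)}$ and partition $\mathcal{X}$ into $q^d$ subcubes grouped into $q^{d-1}$ rods along the $x_d$ direction. The strategy is to combine the $(d-1)$-dimensional hypothesis applied slice-by-slice with the $1$-dimensional result applied rod-by-rod. Concretely, I would fix $\mathbf{y}' \in [\mathbf{a}', \mathbf{a}' + \lambda \mathbf{1}']$ (primes denote dropping the last coordinate) and consider the univariate polynomial $\tilde{p}(t) = p(\mathbf{y}', t)$ of degree $\leq \ell$; a pigeonhole argument on the $q$ subcubes within each rod furnishes, for each rod, a data point whose first $d-1$ coordinates lie in a prescribed $(d-1)$-dim subregion and whose last coordinate lies in a prescribed $1$-dim subinterval. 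This simultaneous density in both projections is what allows the two inductive hypotheses to be chained, first picking up a factor $e^{2\gamma_1(\ell+1)}$ from the $x_d$-direction and then $e^{2(d-1)\gamma_{d-1}(\ell+1)}$ from the $(d-1)$-dim hypothesis on the slice.

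The main obstacle is the careful bookkeeping required to make the pigeonhole argument produce exactly the constant $\gamma_d = 2d(1+\gamma_{d-1})$ rather than something larger. Intuitively, the factor $(1 + \gamma_{d-1})$ ensures that after reserving $\gamma_{d-1}(\ell+1)$ subintervals in each direction to satisfy the $(d-1)$-dim density condition, there remain at least $\ell+1$ along each axis to apply the $1$D bound; and the factor $2d$ arises because this selection must be made along each of the $d$ coordinate axes in turn, with a safety factor of $2$ in each direction to guarantee that a good slice exists. I expect the bulk of the work to lie in converting this pigeonhole intuition into a clean coordinate-by-coordinate induction whose multiplicative constants combine to yield precisely $e^{2d\gamma_d(\ell+1)}$ rather than a looser bound.
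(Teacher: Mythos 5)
You should note at the outset that the paper does not prove this lemma: it is imported verbatim from \citet{madych1992bounds}, so there is no in-paper argument to compare against. Your plan --- induction on $d$ driven by one-dimensional Lagrange interpolation, with the recursion $\gamma_d = 2d(1+\gamma_{d-1})$ tracking the constants --- is in the spirit of the original proof, but as written it has two genuine gaps. The first is in the base case: taking the $\ell+1$ nodes from \emph{consecutive} subintervals gives the separation bound $|x_i-x_j|\ge(|i-j|-1)h$, which is zero for adjacent indices, so the denominators $\prod_{i\ne j}|x_j-x_i|$ in the Lagrange basis are not bounded away from zero and the Lebesgue constant is unbounded; no application of Stirling can rescue a product containing a factor that may vanish. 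Moreover, even after repairing the adjacency problem, bounding each numerator by $\lambda=qh$ while the denominators remain of order $h^{\ell}$ yields $\Lambda\lesssim q^{\ell}2^{\ell}/\ell!$, and $q$ is only bounded \emph{below} by $\gamma_d(\ell+1)$, so this is not of the required form $e^{O(\ell+1)}$. The correct selection spreads the $\ell+1$ nodes across the whole interval, one per block of $\lfloor q/(\ell+1)\rfloor\ge 2$ consecutive subintervals (taken, say, from the first subinterval of each block), which gives $|x_i-x_j|\gtrsim |i-j|\,\lambda/(\ell+1)$ for \emph{all} pairs simultaneously and hence $\Lambda\le e^{2\gamma_1(\ell+1)}$; this is exactly what the factor $2$ in $\gamma_1=2$ buys, and you invoke it only in the inductive step where it is not the issue.

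The second and more serious gap is that the inductive step cannot be executed as described. To Lagrange-interpolate $\tilde p(t)=p(\mathbf{y}',t)$ you need values of $p$ at points lying \emph{on the line} $\{(\mathbf{y}',t): t\}$, and to invoke the $(d-1)$-dimensional hypothesis for $\mathbf{y}'\mapsto p(\mathbf{y}',t_0)$ you need data points lying \emph{on the hyperplane} $\{x_d=t_0\}$. The data are merely scattered, one per subcube, and generically no two of them share any coordinate, so neither hypothesis is satisfied and the two lower-dimensional bounds cannot be chained; the pigeonhole observation that each rod contains points with prescribed projections does not produce points on a common line or slice. What is missing is a mechanism for transferring a bound at a scattered data point to a bound at a nearby point of the relevant line or slice --- in the Madych--Nelson argument this role is played by a Chebyshev-type extrapolation estimate controlling how much a degree-$\ell$ polynomial that is small near a point can grow away from it --- and it is the cost of this transfer, not the bookkeeping of constants, that forces the rapid growth of $\gamma_d$. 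As it stands, your sketch identifies the shape of the recursion but not the idea that makes the induction close.
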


Now let $\mathcal{X}$ be a domain and $X_n = \{\mathbf{x}_1,\dots,\mathbf{x}_n\} \subset \mathcal{X}$.
The elementary tool that we use is the \emph{sampling operator}
\begin{align*}
T_{X_n}: \pi_\ell(\mathcal{X}) & \rightarrow \mathbb{R}^n  \\
p & \mapsto (p(\mathbf{x}_1),\dots,p(\mathbf{x}_n)) \nonumber
\end{align*}
where $\pi_\ell(\mathcal{X})$ is equipped with the norm $\|\cdot\|_{L^\infty(\mathcal{X})}$.
The set of pointwise evaluation functionals on $X_n$ is called a \emph{norming set} for $\pi_\ell(\mathcal{X})$ if the sampling operator $T_{X_n}$ is injective into $\mathbb{R}^n$.
For shorthand, we simply call such $X_n$ a norming set.
The terminology comes from the fact that such a $T_{X_n}$ can be used to induce a norm on $\pi_\ell(\mathcal{X})$ using the norm on $\mathbb{R}^n$.
For further details on sampling operators and norming sets, see Section 4 of \citet{mhaskar2001spherical}.
For a continuous linear function $T : U \rightarrow V$ between normed spaces $U$ and $V$, we let $\|T\|$ denote the operator norm $\sup_{u \neq 0} \|T(u)\|_V / \|u\|_U$.
In what follows, \Cref{prop: norming set,prop: bounded weights,prop: poly repro} provide analogues of Theorems 3.4, 3.8, and 11.21 of \citet{wendland2004scattered} that are adapted to the case of the box fill distance (as opposed to the ball fill distance) to obtain sharper constants for use in the present context.

\begin{proposition}[Properties of the sampling operator] \label{prop: norming set}
Let $\mathcal{X} = [\mathbf{a}, \mathbf{a} + \lambda \mathbf{1}]$ for some $\mathbf{a} \in \mathbb{R}_+^d$ and $\lambda > 0$.
Let $\ell \in \mathbb{N}$.
Suppose that $\rho_{X_n,\mathcal{X}} \leq \lambda / (\gamma_d (\ell + 1))$.
Then $X_n$ is a norming set and $\|T_{X_n}^{-1}\| \leq e^{2d \gamma_d (\ell+1)}$.
\end{proposition}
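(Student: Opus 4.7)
The plan is to reduce the statement to Lemma A.2 (the Madych--Nelson bound). First, set $q = \gamma_d(\ell+1) \in \mathbb{N}$ and partition $\mathcal{X}$ into $q^d$ identical closed subcubes, each a translate of $[\mathbf{0},(\lambda/q)\mathbf{1}]$ of side $\lambda/q \geq \rho_{X_n,\mathcal{X}}$. The first step is to argue that every such subcube must contain at least one point of $X_n$: if some subcube were disjoint from $X_n$, then since $X_n$ is finite, the subcube could be enlarged to a box of the form $[\mathbf{y}, \mathbf{y} + (\lambda/q + \varepsilon)\mathbf{1}]$ still contained in $\mathcal{X}$ and still disjoint from $X_n$, contradicting the definition of $\rho_{X_n,\mathcal{X}}$ as the supremum over sizes of empty boxes in $\mathcal{X}$.

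With the covering condition of Lemma A.2 verified (and $n \geq q^d$ forced automatically by having a distinct point in each subcube), that lemma immediately delivers the sampling inequality
\[
\|p\|_{L^\infty(\mathcal{X})} \leq e^{2d\gamma_d(\ell+1)}\, \|p\|_{L^\infty(X_n)} \qquad \text{for all } p \in \pi_\ell(\mathbb{R}^d).
\]
Next, equip $\mathbb{R}^n$ with the $\ell^\infty$ norm, so that $\|T_{X_n}(p)\|_\infty = \|p\|_{L^\infty(X_n)}$. Injectivity of $T_{X_n}$ is then immediate: $T_{X_n}(p) = \mathbf{0}$ makes the right-hand side vanish, forcing $p \equiv 0$ on $\mathcal{X}$ and hence $p = 0$ as a polynomial. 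Therefore $X_n$ is a norming set, and $T_{X_n}^{-1}$ is well-defined on the range of $T_{X_n}$. The operator-norm bound follows by writing, for arbitrary nonzero $p \in \pi_\ell(\mathcal{X})$,
\[
\|T_{X_n}^{-1}(T_{X_n}(p))\|_{L^\infty(\mathcal{X})} = \|p\|_{L^\infty(\mathcal{X})} \leq e^{2d\gamma_d(\ell+1)}\, \|T_{X_n}(p)\|_\infty ,
\]
and taking the supremum over the range of $T_{X_n}$.

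The only genuine obstacle is the technicality that the hypothesis is stated as $\rho_{X_n,\mathcal{X}} \leq \lambda/(\gamma_d(\ell+1))$ rather than with strict inequality; this is why the small-perturbation argument in the first paragraph is required to rule out an empty subcube of side exactly $\lambda/q$. Once that edge case is handled, every remaining step is a routine transcription of Lemma A.2 into the language of the sampling operator $T_{X_n}$.
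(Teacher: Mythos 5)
Your proof is correct and takes essentially the same route as the paper's: both reduce directly to the Madych--Nelson bound (Lemma~\ref{lem: technical}) with $q = \gamma_d(\ell+1)$, deduce injectivity from the resulting sampling inequality, and read off the operator-norm bound. Your perturbation argument handling the non-strict inequality $\rho_{X_n,\mathcal{X}} \leq \lambda/q$ (so that a closed subcube of side exactly $\lambda/q$ cannot be empty) is a detail the paper's proof silently glosses over, and is a welcome addition.
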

\begin{proof}
If $T_{X_n}$ is not injective then, since $T_{X_n}$ is linear, there must be a $0 \neq p \in \pi_\ell(\mathcal{X})$ for which $T_{X_n}(p) = 0$.
Let $0 \neq p \in \pi_\ell(\mathbb{R}^d)$.
Our first task is to show that $T_{X_n}(p) \neq 0$.
Since $\rho_{X_n,\mathcal{X}} \leq \lambda / (\gamma_d (\ell + 1))$, the conditions of \Cref{lem: technical} are satisfied with $q = \gamma_d (\ell+1)$.
Thus the conclusion of \Cref{lem: technical} holds, namely that
\begin{align*}
\|T_{X_n}(p)\|_\infty = \|p\|_{L^\infty(X_n)} \geq e^{-2d\gamma_d(\ell+1)} \|p\|_{L^\infty(\mathcal{X})} > 0 .
\end{align*}
Thus $T_{X_n}(p) \neq 0$, showing that $T_{X_n}$ must be injective and $X_n$ must be a norming set.
Finally, we have also shown that
\begin{align*}
\|T_{X_n}^{-1}\| 
= \sup_{p \neq 0} \frac{\|p\|_{L^\infty(\mathcal{X})}}{\|T_{X_n}(p)\|} 
\leq \sup_{p \neq 0} \frac{\|p\|_{L^\infty(\mathcal{X})}}{\|T_{X_n}(p)\|_\infty} 
\leq e^{2d \gamma_d (\ell+1)} ,
\end{align*}
as claimed.
\end{proof}

Using a norming set $X_n$, we can establish a global form of polynomial reproduction, as stated in the next result:

\begin{proposition}[Global polynomial reproduction] \label{prop: bounded weights}
In the setting of \Cref{prop: norming set}, for each $\mathbf{x} \in \mathcal{X}$ there exists $\mathbf{u}(\mathbf{x}) \in \mathbb{R}^n$ such that
\begin{itemize}
\item $p(\mathbf{x}) = \sum_{i=1}^n u_i(\mathbf{x}) p(\mathbf{x}_i)$ for all $p \in \pi_\ell(\mathcal{X})$,
\item $\sum_{i=1}^n |u_i(\mathbf{x})| \leq e^{2d \gamma_d (\ell+1)}$ . 
\end{itemize}
\end{proposition}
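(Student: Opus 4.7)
The plan is to obtain the weights $\mathbf{u}(\mathbf{x})$ by a standard Hahn--Banach argument applied to the sampling operator $T_{X_n}$, whose invertibility on $\pi_\ell(\mathcal{X})$ and operator-norm bound were just established in the preceding proposition. The key observation is that point evaluation $\delta_{\mathbf{x}} : p \mapsto p(\mathbf{x})$ is a bounded linear functional on $(\pi_\ell(\mathcal{X}), \|\cdot\|_{L^\infty(\mathcal{X})})$ of norm at most $1$, and we want to represent it on the data side as integration against a signed vector of weights with controlled $\ell^1$ norm.

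First I would fix $\mathbf{x} \in \mathcal{X}$ and define the functional
\[
\Lambda_{\mathbf{x}} \;\colon\; T_{X_n}(\pi_\ell(\mathcal{X})) \;\subset\; (\mathbb{R}^n,\|\cdot\|_\infty) \;\longrightarrow\; \mathbb{R}, \qquad \Lambda_{\mathbf{x}}(\mathbf{v}) \;\coloneqq\; \bigl(T_{X_n}^{-1}\mathbf{v}\bigr)(\mathbf{x}).
\]
This is well-defined since $T_{X_n}$ is injective by the previous proposition. Its norm satisfies
\[
|\Lambda_{\mathbf{x}}(\mathbf{v})| \;\leq\; \|T_{X_n}^{-1}\mathbf{v}\|_{L^\infty(\mathcal{X})} \;\leq\; \|T_{X_n}^{-1}\|\,\|\mathbf{v}\|_\infty \;\leq\; e^{2d\gamma_d(\ell+1)}\,\|\mathbf{v}\|_\infty.
\]

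Next I would apply the Hahn--Banach theorem to extend $\Lambda_{\mathbf{x}}$ to a linear functional $\tilde{\Lambda}_{\mathbf{x}}$ on all of $(\mathbb{R}^n,\|\cdot\|_\infty)$ while preserving its operator norm. Since the dual of $(\mathbb{R}^n,\|\cdot\|_\infty)$ is $(\mathbb{R}^n,\|\cdot\|_1)$, there exists a unique vector $\mathbf{u}(\mathbf{x}) \in \mathbb{R}^n$ such that $\tilde{\Lambda}_{\mathbf{x}}(\mathbf{v}) = \sum_{i=1}^n u_i(\mathbf{x})\,v_i$ for every $\mathbf{v} \in \mathbb{R}^n$, and such that
\[
\sum_{i=1}^n |u_i(\mathbf{x})| \;=\; \|\tilde{\Lambda}_{\mathbf{x}}\| \;=\; \|\Lambda_{\mathbf{x}}\| \;\leq\; e^{2d\gamma_d(\ell+1)}.
\]
Finally, for any $p \in \pi_\ell(\mathcal{X})$, taking $\mathbf{v} = T_{X_n}(p) = (p(\mathbf{x}_1),\dots,p(\mathbf{x}_n))$ yields
\[
p(\mathbf{x}) \;=\; \Lambda_{\mathbf{x}}(T_{X_n}(p)) \;=\; \tilde{\Lambda}_{\mathbf{x}}(T_{X_n}(p)) \;=\; \sum_{i=1}^n u_i(\mathbf{x})\,p(\mathbf{x}_i),
\]
which is exactly the reproduction identity required.

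There is no real obstacle here beyond invoking Hahn--Banach and identifying $(\ell^\infty)^* = \ell^1$ in finite dimensions; the entire content of the estimate is inherited from the norm bound on $T_{X_n}^{-1}$ supplied by the previous proposition. The one subtlety worth checking is simply that the $\ell^1$-norm of the Hahn--Banach extension coincides with the operator norm of the original functional on a subspace of $(\mathbb{R}^n,\|\cdot\|_\infty)$, which is immediate from the finite-dimensional duality theorem.
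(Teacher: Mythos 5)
Your argument is correct and is essentially the paper's own proof: both define the functional $\mathbf{v} \mapsto (T_{X_n}^{-1}\mathbf{v})(\mathbf{x})$ on the range of the sampling operator, extend it norm-preservingly via Hahn--Banach, and read off the weights from the representing vector, with the $\ell^1$ bound inherited from $\|T_{X_n}^{-1}\|$. If anything, you are slightly more careful than the paper in making explicit that the norm on $\mathbb{R}^n$ is $\|\cdot\|_\infty$ so that the dual norm giving $\sum_i |u_i(\mathbf{x})|$ is $\|\cdot\|_1$.
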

\begin{proof}
Define a function $f : \text{range}(T_{X_n}) \rightarrow \mathbb{R}$ by $f(\mathbf{v}) = T_{X_n}^{-1}(\mathbf{v})(\mathbf{x})$.
From the conclusion of \Cref{prop: norming set} we have $\|f\| \leq \|T_{X_n}^{-1}\| \leq e^{2d \gamma_d (\ell+1)}$.
By the Hahn--Banach theorem, $f$ has a norm-preserving extension $f_{\text{ext}}$ to $\mathbb{R}^n$.
Since $f_{\text{ext}}$ is a linear function on $\mathbb{R}^n$, it can be written as $f_{\text{ext}}(\mathbf{v}) = \langle \mathbf{v} , \mathbf{u} \rangle$ from the Riesz representation theorem where $\|\mathbf{u}\| = \|f_{\text{ext}}\| \leq e^{2d \gamma_d (\ell+1)}$.
Then $p(\mathbf{x}) = f(T_{X_n}(p)) = f_{\text{ext}}(T_{X_n}(p)) = \sum_{i=1}^n u_i p(\mathbf{x}_i)$, as required.
\end{proof}

The application of \Cref{prop: bounded weights} to a collection of smaller hypercubes contained in $\mathcal{X}$ yields the following local form of polynomial reproduction that will be used to prove \Cref{prop: mean conver}:

\begin{proposition}[Local polynomial reproduction] \label{prop: poly repro}
Suppose that $\mathcal{X} = [\mathbf{0}, \gamma \mathbf{1}]$.
Let $\ell \in \mathbb{N}$ and $0 < \lambda \leq \gamma$.
Suppose that $\rho_{X_n,\mathcal{X}} \leq \lambda / (\gamma_d (\ell + 1))$.
Then for all $\mathbf{x} \in \mathcal{X}$ there exist numbers $u_i(\mathbf{x})$ with
\begin{itemize}
\item $\sum_{i=1}^n u_i(\mathbf{x}) p(\mathbf{x}_i) = p(\mathbf{x})$ for all $p \in \pi_\ell(\mathcal{X})$,
\item $\sum_{i=1}^n |u_i(\mathbf{x})| \leq e^{2d \gamma_d (\ell+1)}$ , 
\item $u_i(\mathbf{x}) = 0$ if $\|\mathbf{x}-\mathbf{x}_i\|_\infty > \lambda$ .
\end{itemize}
\end{proposition}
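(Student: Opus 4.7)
The plan is to deduce this local version from the global polynomial reproduction in Proposition~\ref{prop: bounded weights} by a localization argument: for each query point $\mathbf{x}$ I restrict attention to a sub-box of $\mathcal{X}$ of side length $\lambda$ that contains $\mathbf{x}$, and apply the global result to the portion of $X_n$ that falls inside it.

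For fixed $\mathbf{x} \in \mathcal{X}$, first construct an axis-aligned sub-box $\mathcal{X}(\mathbf{x}) := [\mathbf{a}(\mathbf{x}), \mathbf{a}(\mathbf{x}) + \lambda \mathbf{1}] \subseteq \mathcal{X}$ containing $\mathbf{x}$. Coordinate-wise this requires picking $a_i(\mathbf{x}) \in [\max\{0, x_i - \lambda\}, \min\{x_i, \gamma - \lambda\}]$, and this interval is non-empty because $0 \leq x_i \leq \gamma$ and $0 < \lambda \leq \gamma$. The crucial geometric fact is that $\mathcal{X}(\mathbf{x})$ has $\ell_\infty$-diameter exactly $\lambda$, so every $\mathbf{x}_i \in X_n \cap \mathcal{X}(\mathbf{x})$ automatically satisfies $\|\mathbf{x} - \mathbf{x}_i\|_\infty \leq \lambda$. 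Moreover, because any box $[\mathbf{y}, \mathbf{y}+\nu\mathbf{1}]$ contained in $\mathcal{X}(\mathbf{x})$ is also contained in $\mathcal{X}$, the hypothesis $\rho_{X_n,\mathcal{X}} \leq \lambda/(\gamma_d(\ell+1))$ immediately yields $\rho_{X_n \cap \mathcal{X}(\mathbf{x}), \mathcal{X}(\mathbf{x})} \leq \lambda/(\gamma_d(\ell+1))$.

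These are precisely the hypotheses of Proposition~\ref{prop: bounded weights} applied to the sub-box $\mathcal{X}(\mathbf{x})$ (with $\mathbf{a} = \mathbf{a}(\mathbf{x})$ in the notation used there) and the restricted design $X_n \cap \mathcal{X}(\mathbf{x})$; this produces weights $\{\tilde u_i(\mathbf{x})\}$ indexed by the restricted design with $\sum_i |\tilde u_i(\mathbf{x})| \leq e^{2d\gamma_d(\ell+1)}$ and $p(\mathbf{x}) = \sum_i \tilde u_i(\mathbf{x})\, p(\mathbf{x}_i)$ for every $p \in \pi_\ell(\mathcal{X}(\mathbf{x}))$. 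Since any $p \in \pi_\ell(\mathcal{X})$ restricts to an element of $\pi_\ell(\mathcal{X}(\mathbf{x}))$, the reproduction identity is inherited. I then extend to all of $X_n$ by setting $u_i(\mathbf{x}) := \tilde u_i(\mathbf{x})$ when $\mathbf{x}_i \in \mathcal{X}(\mathbf{x})$ and $u_i(\mathbf{x}) := 0$ otherwise; the first two assertions of the proposition carry over verbatim, and the support condition follows because $u_i(\mathbf{x}) \neq 0$ forces $\mathbf{x}_i \in \mathcal{X}(\mathbf{x})$, which in turn forces $\|\mathbf{x} - \mathbf{x}_i\|_\infty \leq \lambda$.

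There is no serious technical obstacle here beyond the bookkeeping of the localization; the assumption $\lambda \leq \gamma$ is used precisely to guarantee that a sub-box of side $\lambda$ can always be positioned inside $\mathcal{X}$ around an arbitrary query point, and the choice to work with boxes rather than balls is what makes the restriction of the box fill distance from $\mathcal{X}$ to $\mathcal{X}(\mathbf{x})$ completely free of constants.
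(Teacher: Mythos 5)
Your proposal is correct and follows essentially the same route as the paper: localise to a sub-box $[\mathbf{a},\mathbf{a}+\lambda\mathbf{1}]\subseteq\mathcal{X}$ containing $\mathbf{x}$, apply the global reproduction result (\Cref{prop: bounded weights}) to $X_n$ restricted to that sub-box, and extend the weights by zero. Your added check that the box fill distance hypothesis transfers to the sub-box (because any box inside the sub-box is also inside $\mathcal{X}$) is a detail the paper leaves implicit, and it is handled correctly.
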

\begin{proof}
Given $\mathbf{x} \in \mathcal{X}$ we can find a box $\mathcal{X}_{\mathbf{x}} \coloneqq [\mathbf{a},\mathbf{a} + \lambda \mathbf{1}] \subset \mathcal{X}$ with $\mathbf{x} \in \mathcal{X}_{\mathbf{x}}$.
Let $Y_n := X_n \cap \mathcal{X}_{\mathbf{x}} = \{\mathbf{y}_1,\dots,\mathbf{y}_m\}$.
The box $\mathcal{X}_{\mathbf{x}}$ and the set $Y_n$ satisfy the conditions of \Cref{prop: bounded weights} and thus there exists $\tilde{u}_i(\mathbf{x})$ such that
\begin{itemize}
\item $p(\mathbf{x}) = \sum_{ i = 1}^m \tilde{u}_i(\mathbf{x}) p(\mathbf{y}_i)$ for all $p \in \pi_\ell(\mathbb{R}^d)$ ,
\item $\sum_{ i = 1 }^m |\tilde{u}_i(\mathbf{x})| \leq e^{2d \gamma_d (\ell+1)}$ .
\end{itemize}
From here we define $u_i(\mathbf{x}) = \tilde{u}_j(\mathbf{x})$ if $\mathbf{x}_i = \mathbf{y}_j$, otherwise we define $u_i(\mathbf{x}) = 0$, ensuring that $u_i(\mathbf{x}) = 0$ if $\|\mathbf{x} - \mathbf{x}_i\|_\infty > \lambda$.
\end{proof}

\subsection{Proof of \Cref{prop: mean conver}}
\label{app: proof thm 1}

This section contains the proof of \Cref{prop: mean conver}.
Before we begin, it is useful to recall that $\mathcal{H}_k(\mathcal{X})$ is endowed with the semi-norm $|f|_{\mathcal{H}_k(\mathcal{X})}$ which can be computed using either of the equivalent formulations $\|\mathbf{x} \mapsto f(\mathbf{x}) - f(\mathbf{0})\|_{\mathcal{H}_{k_b}(\mathcal{X})}$ or $\|\mathbf{x} \mapsto (f(\mathbf{x}) - f(\mathbf{0})) / b(\mathbf{x})\|_{\mathcal{H}_{k_e}(\mathcal{X})}$ \citep[][Theorem~5.16]{Paulsen2016}; here the latter formulation will be used. 
Let $\mathcal{S}(X_n)$ denote the set of linear functionals of the form $\mathsf{s}[f] = \mathbf{w}^\top f(X_n)$ whose weights are normalised such that $\mathbf{1}^\top \mathbf{w} = 1$.

\begin{proposition}
\label{prop: wce realised}
The extrapolated mean estimator $\mathsf{s}[f] = m_n[f](\mathbf{0})$ from \eqref{eq: simplified expressions} minimises the worst case error
$$
\mathrm{wce}(\mathsf{s}) := \sup \left\{ | f(\mathbf{0}) - \mathsf{s}[f] |  \; : \; |f|_{\mathcal{H}_k(\mathcal{X})} \leq 1 \right\} 
$$
among all estimators $\mathsf{s} \in \mathcal{S}(X_n)$.
\end{proposition}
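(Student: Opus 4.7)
The plan is to recognise this as a standard worst-case optimisation over an RKHS-ball, reduced to a constrained quadratic programme by pairing the constraint $\mathbf{1}^\top \mathbf{w} = 1$ with the identity $|f|_{\mathcal{H}_k(\mathcal{X})} = \|\mathbf{x} \mapsto (f(\mathbf{x}) - f(\mathbf{0}))/b(\mathbf{x})\|_{\mathcal{H}_{k_e}(\mathcal{X})}$ noted in the preamble.

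First, I would fix an arbitrary $\mathsf{s}[f] = \mathbf{w}^\top f(X_n) \in \mathcal{S}(X_n)$ and use the normalisation $\mathbf{1}^\top \mathbf{w} = 1$ to write
\begin{equation*}
f(\mathbf{0}) - \mathsf{s}[f] = -\sum_{i=1}^n w_i \bigl(f(\mathbf{x}_i) - f(\mathbf{0})\bigr) = -\sum_{i=1}^n w_i b(\mathbf{x}_i) e(\mathbf{x}_i),
\end{equation*}
where $e(\mathbf{x}) \coloneqq (f(\mathbf{x}) - f(\mathbf{0}))/b(\mathbf{x})$ is the normalised error. By the semi-norm identity recalled above, constraining $|f|_{\mathcal{H}_k(\mathcal{X})} \leq 1$ is equivalent to constraining $\|e\|_{\mathcal{H}_{k_e}(\mathcal{X})} \leq 1$, and the pointwise evaluations $e(\mathbf{x}_i) = \langle e, k_e(\cdot, \mathbf{x}_i) \rangle_{\mathcal{H}_{k_e}(\mathcal{X})}$ are available via the reproducing property (note that the $\mathbf{x}_i$ are distinct from $\mathbf{0}$, so this is well-defined).

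Second, I would apply Cauchy--Schwarz (attained by the Riesz representative), which gives
\begin{equation*}
\mathrm{wce}(\mathsf{s}) = \sup_{\|e\|_{\mathcal{H}_{k_e}} \leq 1} \left| \left\langle e, \sum_{i=1}^n w_i b(\mathbf{x}_i) k_e(\cdot, \mathbf{x}_i) \right\rangle_{\mathcal{H}_{k_e}} \right| = \left\| \sum_{i=1}^n w_i b(\mathbf{x}_i) k_e(\cdot, \mathbf{x}_i) \right\|_{\mathcal{H}_{k_e}} = \sqrt{\mathbf{w}^\top \mathbf{K}_b \mathbf{w}},
\end{equation*}
using $k_b(\mathbf{x}_i,\mathbf{x}_j) = b(\mathbf{x}_i) b(\mathbf{x}_j) k_e(\mathbf{x}_i,\mathbf{x}_j)$ to identify the RKHS norm with the quadratic form $\mathbf{w}^\top \mathbf{K}_b \mathbf{w}$.

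Third, minimising $\mathbf{w}^\top \mathbf{K}_b \mathbf{w}$ subject to $\mathbf{1}^\top \mathbf{w} = 1$ is a convex quadratic programme (recall $\mathbf{K}_b$ is positive definite on $\mathcal{X} \setminus \{\mathbf{0}\}$ since $X_n$ consists of distinct points in that set). Lagrange multipliers yield the stationarity condition $\mathbf{K}_b \mathbf{w} = \lambda \mathbf{1}$, and after enforcing the constraint one obtains the unique minimiser $\mathbf{w}^\star = \mathbf{K}_b^{-1}\mathbf{1} / (\mathbf{1}^\top \mathbf{K}_b^{-1}\mathbf{1})$. Substituting back gives $\mathsf{s}^\star[f] = \mathbf{1}^\top \mathbf{K}_b^{-1} f(X_n) / (\mathbf{1}^\top \mathbf{K}_b^{-1} \mathbf{1})$, which coincides exactly with the formula for $m_n[f](\mathbf{0})$ in~\eqref{eq: simplified expressions}. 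There is no serious obstacle here; the only subtlety is to be careful that $k_b$ is positive \emph{semi-}definite on $\mathcal{X}$ but positive definite on $\mathcal{X} \setminus \{\mathbf{0}\}$, which is precisely what is needed to invert $\mathbf{K}_b$ and to ensure the Riesz argument applies.
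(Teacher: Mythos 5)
Your proposal is correct and follows essentially the same route as the paper: both reduce $\mathrm{wce}(\mathsf{s})^2$ to the quadratic form $\mathbf{w}^\top \mathbf{K}_b \mathbf{w}$ via the RKHS dual-norm characterisation of the error functional, and then solve the constrained quadratic programme with a Lagrange multiplier to obtain $\mathbf{w}^\star = \mathbf{K}_b^{-1}\mathbf{1}/(\mathbf{1}^\top\mathbf{K}_b^{-1}\mathbf{1})$. The only cosmetic difference is that you work with the normalised error in $\mathcal{H}_{k_e}(\mathcal{X})$ whereas the paper computes the Riesz representer in $\mathcal{H}_{k_b}(\mathcal{X})$ and uses $k_b(\mathbf{0},\cdot)\equiv 0$; these are equivalent by the isometry the paper recalls before the proof.
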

\begin{proof}
Consider a general element $\mathsf{s} \in \mathcal{S}(X_n)$, which has the form $\mathsf{s}[f] = \mathbf{w}^\top f(X_n)$ where $\mathbf{1}^\top \mathbf{w} = 1$.
Since the weights $\mathbf{w}$ are normalised, $\mathsf{s}$ is exact on constant functions, and thus $f(\mathbf{0}) - \mathsf{s}[f] = \mathsf{s}[f(\mathbf{0}) - f]$ where $f(\mathbf{0}) - f \in \mathcal{H}_{k_b}(\mathcal{X})$.
It follows that 
$$
\mathrm{wce}(\mathsf{s}) = \sup \big\{ | \mathsf{s}[f] |  \; : \; \|f\|_{\mathcal{H}_{k_b}(\mathcal{X})} \leq 1 \big\} .
$$
It has been assumed that the elements of $X_n$ are distinct.
The Riesz representer of the error functional $f \mapsto f(\mathbf{0}) - \mathbf{w}^\top f(X_n)$ is $k_b(\mathbf{0},\cdot) - \mathbf{w}^\top \mathbf{k}_b(\cdot)$ and, since $\mathrm{wce}(\mathsf{s})$ is the operator norm of the error functional and Hilbert spaces are self-dual, we have that
\begin{align}
\mathrm{wce}(\mathsf{s})^2 = \left\| k_b(\mathbf{0},\cdot) - \mathbf{w}^\top \mathbf{k}_b(\cdot) \right\|_{\mathcal{H}_{k_b}(\mathcal{X})}^2
& = k_b(\mathbf{0},\mathbf{0}) - 2 \mathbf{w}^\top \mathbf{k}_b(\mathbf{0}) + \mathbf{w}^\top \mathbf{K}_b \mathbf{w} \nonumber \\
& = \mathbf{w}^\top \mathbf{K}_b \mathbf{w}  \label{eq: quad form}
\end{align}
where the final equality follows since $k_b(\mathbf{0},\mathbf{x}) = b(\mathbf{0}) b(\mathbf{x}) k_e(\mathbf{0},\mathbf{x}) = 0$ for all $\mathbf{x} \in \mathcal{X}$.
This leaves a quadratic form in $\mathbf{w}$ and we seek the minimum subject to $\mathbf{1}^\top \mathbf{w} = 1$.
If we consider the Lagrangian
\begin{align*}
\mathcal{L}(\mathbf{w},\lambda) = \mathbf{w}^\top \mathbf{K}_b \mathbf{w} + \lambda( \mathbf{1}^\top \mathbf{w} - 1 )
\end{align*}
then we have a critical point when $\partial_{\mathbf{w}} \mathcal{L} = 2 \mathbf{K}_b \mathbf{w} + \lambda \mathbf{1} = 0$, so that $\mathbf{w} = - \frac{\lambda}{2} \mathbf{K}_b^{-1} \mathbf{1}$.
Enforcing the normalisation constraint leads to $\mathbf{w} = \mathbf{K}_b^{-1} \mathbf{1} / (\mathbf{1}^\top \mathbf{K}_b^{-1} \mathbf{1})$, which means that $\mathsf{s}[f] = \mathbf{w}^\top f(X_n) = m_n[f](\mathbf{0})$, as claimed.
\end{proof}

The proof of \Cref{prop: mean conver} below employs the \emph{multi-index} notation, meaning that for $\bm{\beta} \in \mathbb{N}_0^d$ we let $|\bm{\beta}| \coloneqq \beta_1 + \cdots + \beta_d$, $\bm{\beta}! \coloneqq \beta_1 \cdots \beta_d$, and $\mathbf{z}^{\bm{\beta}} \coloneqq z_1^{\beta_1} \cdots z_d^{\beta_d}$ for $\mathbf{z} \in \mathbb{R}^d$.

\begin{proof}[Proof of \Cref{prop: mean conver}]
Assume without loss of generality that $|f|_{\mathcal{H}_k(\mathcal{X})} \neq 0$, since otherwise $m_n^h[f] = 0$ is trivially exact.
Let $\mathbf{x}_i^h = h \mathbf{x}_i$ and $k_b(\mathbf{x},\mathbf{y}) = b(\mathbf{x}) b(\mathbf{y}) k_e(\mathbf{x},\mathbf{y})$ in shorthand.
Introduce the positive semi-definite quadratic form
\begin{align}
Q(\mathbf{u}) & = \sum_{i=1}^n \sum_{j=1}^n u_i u_j k_b(\mathbf{x}_i^h,\mathbf{x}_j^h) \label{eq:  Q fn}
\end{align}
and note from \Cref{prop: wce realised} that
\begin{align} 
\frac{|f(\mathbf{0}) - m_n^h[f](\mathbf{0})|}{|f|_{\mathcal{H}(k)}} \leq \mathrm{wce}(m_n^h[f]) = \min\left\{ Q(\mathbf{u})^{1/2} : \mathbf{u} \in \mathbb{R}^n, \; \sum_{i=1}^n u_i = 1 \right\} . \label{eq: Qn bound}
\end{align}
Thus we can bound the relative error by the square root of $Q(\mathbf{u})$ for any choice of $\mathbf{u} \in \mathbb{R}^n$ for which $\sum_{i=1}^n u_i = 1$.
Our choice of $\mathbf{u}$ will be based on polynomial reproduction, as described next.

From Taylor's theorem, since $k_e(\mathbf{x},\cdot) \in C^{2s}(\mathcal{X}_h)$,
\begin{align}
k_e(\mathbf{x},\mathbf{y}) = \sum_{|\bm{\beta}| < 2s} \frac{\partial_{\mathbf{z}}^{\bm{\beta}} k_e(\mathbf{x},\mathbf{z})|_{\mathbf{z} = \mathbf{x}}}{\bm{\beta}!} (\mathbf{y}-\mathbf{x})^{\bm{\beta}} + R(\mathbf{x},\mathbf{y}) , \qquad | R(\mathbf{x},\mathbf{y}) | \leq C_{\mathbf{x}}^{(2s)} \|\mathbf{x}-\mathbf{y}\|^{2s } \label{eq: Taylor}
\end{align}
for all $\mathbf{x},\mathbf{y} \in \mathcal{X}_h$, where the constants $C_{\mathbf{x}}^{(2s)} \coloneqq ((2s)!)^{-1} \sup_{\mathbf{z} \in \mathcal{X}_h} \sum_{|\bm{\beta}| = 2s} \partial_{\mathbf{z}}^{\bm{\beta}} k_e(\mathbf{x},\mathbf{z})$ are uniformly bounded by $C_k^{(2s)} \coloneqq \sup_{\mathbf{x} \in \mathcal{X}} C_{\mathbf{x}}^{(2s)}$ since $\mathcal{X}$ is compact.
Plugging this into \eqref{eq:  Q fn} gives that
\begin{align*}
Q(\mathbf{u}) & = \sum_{i=1}^n u_i b(\mathbf{x}_i^h) \sum_{j=1}^n u_j b(\mathbf{x}_j^h) \left(  \sum_{|\bm{\beta}| < 2s} \frac{\partial_{\mathbf{z}}^{\bm{\beta}} k_e(\mathbf{x}_i^h,\mathbf{z})|_{\mathbf{z} = \mathbf{x}_i^h}}{\bm{\beta}!} (\mathbf{x}_j^h-\mathbf{x}_i^h)^{\bm{\beta}} + R(\mathbf{x}_i^h,\mathbf{x}_j^h) \right) .
\end{align*}
Since $b(\mathbf{x}_j^h) (\mathbf{x}_j^h - \mathbf{x}_i^h)^{\bm{\beta}}$ is a polynomial in $\mathbf{x}_j^h$ of total degree at most $r + |\bm{\beta}|$ and $|\bm{\beta}| < 2s$, we aim to pick a vector $\mathbf{u}$ for which local (at $\mathbf{0}$) polynomial reproduction on $X_n^h$ occurs up to polynomials of total order $\ell \coloneqq r + (2s-1)$.
Set $\lambda  \coloneqq \gamma_d (\ell + 1) \rho_{X_n^h,\mathcal{X}_h} = \gamma_d (\ell + 1) h \rho_{X_n,\mathcal{X}}$, so that our assumption on the box fill distance implies $0 < \lambda \leq h$, and trivially $\rho_{X_n^h,\mathcal{X}_h} \leq \lambda / (\gamma_d (\ell + 1))$. 
Thus the conditions of \Cref{prop: poly repro} are satisfied, and there exists $\mathbf{u} \in \mathbb{R}^n$ such that
\begin{itemize}
\item $\sum_{i=1}^n u_i p(\mathbf{x}_i^h) = p(\mathbf{0})$ for all $p \in \pi_\ell(\mathcal{X}_h)$ ,
\item $\sum_{i=1}^n |u_i| \leq e^{2d \gamma_d (\ell+1)}$ ,
\item $u_i = 0$ if $\mathbf{x}_i^h \notin [\mathbf{0}, \lambda \mathbf{1}]$ .
\end{itemize}
For this choice of $\mathbf{u}$ it follows from the local polynomial reproduction property that, for $|\bm{\beta}| < 2s$,
\begin{align*}
\sum_{j=1}^n u_j b(\mathbf{x}_j^h) (\mathbf{x}_j^h - \mathbf{x}_i^h)^{\bm{\beta}} = \left. b(\mathbf{x}) (\mathbf{x} - \mathbf{x}_i^h)^{\bm{\beta}} \right|_{\mathbf{x} = \mathbf{0}} = b(\mathbf{0}) (-\mathbf{x}_i^h)^{\bm{\beta}} = \mathbf{0}
\end{align*}
since $b(\mathbf{0}) = 0$, and thus, recalling that $u_i = 0$ whenever $\mathbf{x}_i^h \notin [\mathbf{0}, \lambda \mathbf{1}]$,
\begin{align*}
Q(\mathbf{u}) & = \sum_{i=1}^n u_i b(\mathbf{x}_i^h) \sum_{j=1}^n u_j b(\mathbf{x}_j^h) R(\mathbf{x}_i^h,\mathbf{x}_j^h) 
\leq \|\mathbf{u}\|_1^2  \|b\|_{L^\infty(\mathcal{X}_h)}^2 \sup_{\mathbf{x},\mathbf{y} \in [\mathbf{0},\gamma \mathbf{1}]} |R(\mathbf{x},\mathbf{y})| ,
\end{align*}
where by construction $\|\mathbf{u}\|_1 \leq e^{2d \gamma_d (\ell+1)}$. 
The final term can be bounded using the error estimate in \eqref{eq: Taylor}:
\begin{align*}
\sup_{\mathbf{x},\mathbf{y} \in [\mathbf{0},\gamma \mathbf{1}]} |R(\mathbf{x},\mathbf{y})|
\leq \sup_{\mathbf{x} \in [\mathbf{0},\gamma\mathbf{1}]} C_{\mathbf{x}}^{(2s)} \times \|\mathbf{0} - \lambda \mathbf{1}\|^{2s} 
&\leq C_k^{(2s)} \frac{d^s}{(2s)!} \lambda^{2s} \\
&= C_k^{(2s)} \frac{d^s}{(2s)!} ( \gamma_d (\ell+1) h \rho_{X_n,\mathcal{X}} )^{2s} , 
\end{align*}
leading to the claimed overall bound $\mathrm{wce}(m_n^h[f]) \leq C_{r,s}  h^s \rho_{X_n,\mathcal{X}}^s \|b\|_{L^\infty(\mathcal{X}_h)}$ on the relative error, where $C_{r,s} \coloneqq (C_k^{(2s)})^{1/2} d^{s/2} e^{2 d \gamma_d (\ell + 1)} \gamma_d^s (\ell + 1)^s / \sqrt{(2s)!}$. 
\end{proof}

\subsection{Proof of \Cref{cor: spectral}}
\label{app: proof of spectral}

\begin{proof}[Proof of \Cref{cor: spectral}]
The assumption on the growth of the derivatives of $k_e$ implies that $C_k^{(2s)} \leq C_k^{2s}$ for the assumed constant $C_k$, and we employ this bound throughout.
Now, to more easily track the $s$-dependent constants in the bound of \Cref{prop: mean conver} we employ a simpler upper bound
\begin{align}
C_{r,s} h^s \rho_{X_n,\mathcal{X}}^s & = \frac{ C_k^s d^{s/2} e^{2 d \gamma_d (r + 2s)} \gamma_d^s (r + 2s)^s }{ \sqrt{(2s)!} }  h^s \rho_{X_n,\mathcal{X}}^s 
\leq e^{2d\gamma_d r} \left( \frac{ C_k d^{1/2} e^{4 d \gamma_d + 1} h }{ 2s } \right)^s  \label{eq: simpler bound}
\end{align} 
which holds whenever $\rho_{X_n,\mathcal{X}} \leq 1 / (\gamma_d(r + 2s))$, and where we used the elementary fact $1 / \sqrt{(2s)!} \leq (e/(2s))^s$ to obtain this simpler bound.

The idea of this proof is to pick a particular value of $s \in \mathbb{N}_0$ that will be $\rho_{X_n,\mathcal{X}}$-dependent.
For \eqref{eq: simpler bound} to hold we require $\rho_{X_n,\mathcal{X}} \leq 1 / (\gamma_d (r + 2s))$, and the largest such $s$ for which this requirement is satisfied is
\begin{align}
s^\star \coloneqq \left\lfloor - \frac{r}{2} + \frac{1}{2 \gamma_d \rho_{X_n,\mathcal{X}}} \right\rfloor . \label{eq: largest s}
\end{align}
The assumption $\rho_{X_n,\mathcal{X}} \leq 1 / (2 \gamma_d (r+1))$ implies $s^\star \geq 1 / (4 \gamma_d \rho_{X_n,\mathcal{X}}) \geq 0$, so $s$ is well-defined as an element of $\mathbb{N}_0$.
Further, the assumption $\rho_{X_n,\mathcal{X}} \leq 1 / (2 d^{1/2} \gamma_d e^{4 d \gamma_d + 1} )$ implies $t \mapsto [ ( \frac{1}{2} d^{1/2} e^{4 d \gamma_d + 1} h) / t ]^t$ is decreasing on $t \in [1/(4 \gamma_d \rho_{X_n,\mathcal{X}}),\infty)$ for any $h \in (0,1]$.
Thus we can replace $s$ by $1 / (4 \gamma_d \rho_{X_n,\mathcal{X}})$ in \eqref{eq: simpler bound} to obtain the upper bound
\begin{align*}
C_{r,s} h^s \rho_{X_n,\mathcal{X}}^s & \leq e^{2 d \gamma_d r} \left( \frac{ \frac{1}{2} C_k d^{1/2} e^{4 d \gamma_d + 1} h }{ 1 / (4 \gamma_d \rho_{X_n,\mathcal{X}})} \right)^{1 / (4 \gamma_d \rho_{X_n,\mathcal{X}})} 
= C_{n,r,s} h^{ \frac{1}{ 4 \gamma_d \rho_{X_n,\mathcal{X}} }  }
\end{align*}
where $C_{n,r,s} \coloneqq (2 C_k d^{1/2} \gamma_d e^{4d\gamma_d + 1} \rho_{X_n,\mathcal{X}})^{1 / (4 \gamma_d \rho_{X_n,\mathcal{X}})}$ is a $h$-independent constant.
The assumptions we made above on the box fill distance are satisfied when $\rho_{X_n,\mathcal{X}} \leq \min \{ 1 / (2 \gamma_d (r+1)) , 1 / (2 d^{1/2} \gamma_d e^{4 d \gamma_d + 1} ) \}$.
\end{proof}

\subsection{Verifying the Assumptions for \Cref{ex: finite diff}}
\label{app: check assum}

The aim of this appendix is to verify that the function $\psi(x) = \sin(10x) + 1_{x > 0} x^{s+4}$ satisfies $\psi(x) = c_0 + c_1x + c_2x^2 + c_3(x) x^3 $ for some $c_0,c_1,c_2 \in \mathbb{R}$ and $c_3 \in H^{s+1}(\mathcal{O})$ where $\mathcal{O} = (-\delta,\delta)$, $\delta > 0$, is an open neighbourhood of 0 and we recall that $H^{s+1}(\mathcal{O})$ is the Sobolev space of $s+1$ times weakly differentiable functions on $\mathcal{O}$; see \Cref{app: kernels}.
It then follows from \Cref{app: kernels} that $x \mapsto c_3(x)$ and $x \mapsto c_3(-x)$ are elements of $\mathcal{H}_{k_e}(\mathcal{X})$ whenever $k_e$ is either the M\'{a}tern or Wendland kernel with smoothness $s$ and $\mathcal{X} = [0,\delta)$, since these kernels reproduce (up to an equivalent norm) $H^{s+1}(\mathcal{X})$ in dimension $d=1$.

From linearity, it suffices to establish this fact separately for $\psi_1(x) \coloneqq \sin(10x)$ and $\psi_2(x) \coloneqq 1_{x>0} x^{s+4}$.
For the first term, since the trigonometric functions are real-analytic, we have a convergent power series $\psi_1(x) = \sum_{i=0}^\infty \tilde{c}_i x^i$ for $x \in [-\epsilon,\epsilon]$ for some $\epsilon > 0$.
Thus $\psi_1(x) = \tilde{c}_0 + \tilde{c}_1 x + \tilde{c}_2 x^2 + c_3(x) x^3$ with $c_3(x) = \sum_{i=3}^\infty \tilde{c}_i x^{i-3}$, and our task is to show that this latter series is convergent in a neighbourhood of $x = 0$.
To this end, we restrict attention to $x \in [-\epsilon/2,\epsilon/2]$, and then use the ratio test 
$$
\frac{|\tilde{c}_i x^{i-3}|}{|\tilde{c}_i \epsilon^i|} = \frac{1}{|x|^3} \left|\frac{x}{\epsilon}\right|^i \leq \frac{1}{|x|^3} \frac{1}{2^i} \rightarrow 0 \quad \text{as} \quad i \rightarrow \infty
$$
to deduce absolute convergence of the series, as required.
For the second term, we take $c_0 = c_1 = c_2 = 0$, so that $\psi_2(x) = c_3(x) x^3$ where $c_3(x) = 1_{x > 0} x^{s+1}$, observing that $c_3$ has an $(s+1)$-order weak derivative $c_3^{(s+1)}(x)$ taking the value $0$ on $x \leq 0$ and $(s+1)!$ on $x > 0$, so that $c_3 \in H^{s+1}(\mathcal{O})$.

\subsection{Proof of \Cref{prop: discrete} and \Cref{cor: second order expand}}

\label{subsec: proof of continuous extend results}

\begin{proof}[Proof of \Cref{prop: discrete}]
Our assumption implies $(e_n)_{n \in \mathbb{N}}$ is Cauchy and thus this sequence converges to a limit, denoted $e_\infty \in \mathbb{R}$.
Let $\varphi : [0,\infty) \rightarrow [0,1]$ be a smooth function with $\varphi(0) = 0$, $\varphi = 1$ on $[1,\infty)$, and derivatives uniformly bounded.
Let 
\begin{align*}
  e(\mathbf{x}) \coloneqq
  \begin{cases}
    e_\infty &\text{if } \quad \mathbf{x} = \mathbf{0}, \\ e_{n+1} + ( e_n - e_{n+1} ) \prod_{i=1}^d \varphi\left( \scaleto{ \frac{ x_i - \mathbf{x}_{n+1,i}}{\mathbf{x}_{n,i} - \mathbf{x}_{n+1,i}} }{20pt} \right)  &\text{if }\quad \mathbf{x} \in [\mathbf{0},\mathbf{x}_n] \setminus [\mathbf{0},\mathbf{x}_n], \\ e_1 &\text{if }\quad \mathbf{x} \notin [\mathbf{0},\mathbf{x}_1]
  \end{cases}
\end{align*}
so that $e(\mathbf{x}_n) = e_n$ and $e(\mathbf{0}) = e_\infty$.
Now set $f(\mathbf{x}) = y_\infty + b(\mathbf{x}) e(\mathbf{x})$.
To establish $|f|_{\mathcal{H}_k(\mathcal{X})} < \infty$ it suffices to establish that $\lim_{\mathbf{x} \rightarrow \mathbf{0}} \partial^{\bm{\beta}} e(\mathbf{x})$ exists and is finite for all $|\bm{\beta}| = p$, since then we will have $e \in C^p(\mathcal{X}) \subset \mathcal{H}(k_e)$.
This is indeed the case, since on $[\mathbf{0},\mathbf{x}_n] \setminus [\mathbf{0},\mathbf{x}_n]$ we have that
$$
\partial^{\bm{\beta}} e(\mathbf{x}) = (e_n - e_{n+1}) \prod_{i=1}^d \varphi^{(\beta_i)} \left( \frac{ x_i - \mathbf{x}_{n+1,i}}{\mathbf{x}_{n,i} - \mathbf{x}_{n+1,i}} \right) \frac{1}{(\mathbf{x}_{n,i} - \mathbf{x}_{n+1,i})^{\beta_i}} .
$$
where the $\varphi^{(\beta_i)}$ terms are uniformly bounded and the remaining terms vanish since
$$
|e_n - e_{n+1}| \prod_{i=1}^d \frac{1}{(\mathbf{x}_{n,i} - \mathbf{x}_{n+1,i})^{\beta_i}} \leq \frac{|e_n - e_{n+1}|}{ \min(\mathbf{x}_n - \mathbf{x}_{n+1})^p }
$$
where $p = \beta_1 + \dots + \beta_d$, and this final bound was assumed to vanish.
\end{proof}

\begin{proof}[Proof of \Cref{cor: second order expand}]
    The renormalised error is $e_n = C_1 + C_2 x_n^{p+1} + O(x_n^{p+2})$, so that 
    $$
    \frac{e_n - e_{n+1}}{(x_n - x_{n+1})^p} = C_2 \frac{ (x_n^{p+1} - x_{n+1}^{p+1})}{(x_n - x_{n+1})^p} + \frac{O(x_n^{p+2}) + O(x_{n+1}^{p+2})}{ (x_n - x_{n+1})^p} \rightarrow 0 .
    $$
    Our assumptions on the sequence $(x_n)_{n \in \mathbb{N}}$ imply that both terms vanish in the $n \rightarrow \infty$ limit.
\end{proof}

\subsection{Proof of \Cref{prop: not over-confident}}
\label{app: uq proof}


First we establish the correctness of the algebraic identity \eqref{eq: sigma estimator} given in the main text.
Again, it is useful to recall that $\mathcal{H}_k(\mathcal{X})$ is endowed with the semi-norm $|f|_{\mathcal{H}_k(\mathcal{X})}$ which can be computed using either of the equivalent formulations $\|\mathbf{x} \mapsto f(\mathbf{x}) - f(\mathbf{0})\|_{\mathcal{H}_{k_b}(\mathcal{X})}$ or $\|\mathbf{x} \mapsto (f(\mathbf{x}) - f(\mathbf{0})) / b(\mathbf{x})\|_{\mathcal{H}_{k_e}(\mathcal{X})}$ \citep[][Theorem~5.16]{Paulsen2016}; here the latter formulation will be used. 

\begin{proposition} \label{prop: sigma is norm}
The estimator $\sigma_n^2[f] := \frac{1}{n} |m_n[f]|_{\mathcal{H}_k(\mathcal{X})}^2$ has the explicit form
\begin{align*} 
\sigma_n^2[f] 
= \frac{1}{n} \left[ f(X_n)^\top \mathbf{K}_b^{-1} f(X_n) -  \frac{ ( \mathbf{1}^\top \mathbf{K}_b^{-1} f(X_n) )^2 }{ \mathbf{1}^\top \mathbf{K}_b^{-1} \mathbf{1} }  \right]
\end{align*}
given in \eqref{eq: sigma estimator} of the main text.
\end{proposition}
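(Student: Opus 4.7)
My plan is to compute $|m_n[f]|_{\mathcal{H}_k(\mathcal{X})}^2$ directly using the representation of the seminorm in terms of the $\mathcal{H}_{k_b}(\mathcal{X})$ norm, namely $|g|_{\mathcal{H}_k(\mathcal{X})} = \|g - g(\mathbf{0})\|_{\mathcal{H}_{k_b}(\mathcal{X})}$, which is the formulation highlighted in the setup above. The advantage of this route is that $m_n[f] - m_n[f](\mathbf{0})$ is a finite linear combination of kernel sections $k_b(\mathbf{x}_i, \cdot)$, so its $\mathcal{H}_{k_b}(\mathcal{X})$-norm has an explicit quadratic-form expression via the reproducing property.

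The first step will be to recall from~\eqref{eq: simplified expressions} that $m_n[f](\mathbf{0}) = (\mathbf{1}^\top \mathbf{K}_b^{-1} f(X_n))/(\mathbf{1}^\top \mathbf{K}_b^{-1} \mathbf{1})$, which follows because $k_b(\mathbf{0},\mathbf{x}) = 0$ for all $\mathbf{x}$. Subtracting this from the conditional mean~\eqref{eq: gp post mean} gives
\begin{align*}
m_n[f](\mathbf{x}) - m_n[f](\mathbf{0}) = \mathbf{k}_b(\mathbf{x})^\top \mathbf{w}, \qquad \mathbf{w} \coloneqq \mathbf{K}_b^{-1}\bigl(f(X_n) - m_n[f](\mathbf{0}) \mathbf{1}\bigr),
\end{align*}
so that $m_n[f] - m_n[f](\mathbf{0}) = \sum_{i=1}^n w_i\, k_b(\mathbf{x}_i, \cdot)$ as an element of $\mathcal{H}_{k_b}(\mathcal{X})$.

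The second step applies the reproducing property of $k_b$ to obtain
\begin{align*}
|m_n[f]|_{\mathcal{H}_k(\mathcal{X})}^2 = \|m_n[f] - m_n[f](\mathbf{0})\|_{\mathcal{H}_{k_b}(\mathcal{X})}^2 = \mathbf{w}^\top \mathbf{K}_b \mathbf{w} = \bigl(f(X_n) - m_n[f](\mathbf{0}) \mathbf{1}\bigr)^\top \mathbf{K}_b^{-1} \bigl(f(X_n) - m_n[f](\mathbf{0}) \mathbf{1}\bigr).
\end{align*}
Expanding this quadratic form and substituting the explicit formula for $m_n[f](\mathbf{0})$ yields, after cancellation of the cross term with half of the diagonal term,
\begin{align*}
|m_n[f]|_{\mathcal{H}_k(\mathcal{X})}^2 = f(X_n)^\top \mathbf{K}_b^{-1} f(X_n) - \frac{(\mathbf{1}^\top \mathbf{K}_b^{-1} f(X_n))^2}{\mathbf{1}^\top \mathbf{K}_b^{-1} \mathbf{1}},
\end{align*}
and dividing by $n$ gives the claimed identity.

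There is no real obstacle here; the result is essentially a one-line algebraic identity once the seminorm is interpreted through its $\mathcal{H}_{k_b}(\mathcal{X})$ representative. The only point worth handling carefully is justifying why $m_n[f] - m_n[f](\mathbf{0})$ is a genuine element of $\mathcal{H}_{k_b}(\mathcal{X})$ (as opposed to living only in the extended space corresponding to $k$); this is immediate from the explicit kernel expansion above, which also guarantees that the reproducing-kernel identity $\|\sum_i w_i k_b(\mathbf{x}_i, \cdot)\|_{\mathcal{H}_{k_b}(\mathcal{X})}^2 = \mathbf{w}^\top \mathbf{K}_b \mathbf{w}$ applies without qualification.
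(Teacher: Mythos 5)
Your proposal is correct and follows essentially the same route as the paper: both write $\bar{m}_n[f] = m_n[f] - m_n[f](\mathbf{0}) = \mathbf{k}_b(\cdot)^\top \mathbf{K}_b^{-1}\{f(X_n) - m_n[f](\mathbf{0})\mathbf{1}\}$, evaluate its $\mathcal{H}_{k_b}(\mathcal{X})$-norm via the reproducing property as the quadratic form $\{f(X_n) - m_n[f](\mathbf{0})\mathbf{1}\}^\top \mathbf{K}_b^{-1}\{f(X_n) - m_n[f](\mathbf{0})\mathbf{1}\}$, and expand. The algebra checks out, so no further comment is needed.
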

\begin{proof}
Let $\bar{m}_n[f] = m_n[f] - m_n[f](\mathbf{0})$.
From direct calculation,
\begin{align*}
|m_n[f]|_{\mathcal{H}_k(\mathcal{X})}^2 &= \|\bar{m}_n[f]\|_{\mathcal{H}_{k_b}(\mathcal{X})}^2 \\
  & = \left\| \mathbf{k}_b(\cdot)^\top \mathbf{K}_b^{-1} \left\{ f(X_n) -  \left( \frac{\mathbf{1}^\top \mathbf{K}_b^{-1} f(X_n) }{ \mathbf{1}^\top \mathbf{K}_b^{-1} \mathbf{1} } \right) \mathbf{1} \right\} \right\|_{\mathcal{H}_{k_b}(\mathcal{X})}^2 \nonumber \\
&= \left\{ f(X_n) -  \left( \frac{\mathbf{1}^\top \mathbf{K}_b^{-1} f(X_n) }{ \mathbf{1}^\top \mathbf{K}_b^{-1} \mathbf{1} } \right) \mathbf{1} \right\}^\top \mathbf{K}_b^{-1} \left\{ f(X_n) -  \left( \frac{\mathbf{1}^\top \mathbf{K}_b^{-1} f(X_n) }{ \mathbf{1}^\top \mathbf{K}_b^{-1} \mathbf{1} } \right) \mathbf{1} \right\} \nonumber \\
& = f(X_n)^\top \mathbf{K}_b^{-1} f(X_n) -  \frac{ ( \mathbf{1}^\top \mathbf{K}_b^{-1} f(X_n) )^2 }{ \mathbf{1}^\top \mathbf{K}_b^{-1} \mathbf{1} } ,
\end{align*}
which completes the proof.
\end{proof}

Next we present a general result, which forms the crux of the argument in our proof of \Cref{prop: not over-confident}:

\begin{proposition} \label{prop: UQ}
Assume that $f \in \mathcal{H}_k(\mathcal{X})$.
Then
\begin{align*}
\frac{ |f(\mathbf{0}) - m_n[f](\mathbf{0})| }{ \sqrt{k_n[f](\mathbf{0},\mathbf{0})} }  & \leq \frac{ |f|_{\mathcal{H}_k(\mathcal{X})} }{ \sigma_n[f] } .
\end{align*}
\end{proposition}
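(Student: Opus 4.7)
The plan is to reduce the inequality to an application of \Cref{prop: wce realised}, since the only quantity in the numerator, $|f(\mathbf{0}) - m_n[f](\mathbf{0})|$, is precisely the type of error governed by the worst-case error bound established there. First I would rewrite the conditional standard deviation using the simplified expression in \eqref{eq: simplified expressions}, giving
\begin{align*}
  \sqrt{k_n[f](\mathbf{0},\mathbf{0})} \;=\; \frac{\sigma_n[f]}{\sqrt{\mathbf{1}^\top \mathbf{K}_b^{-1} \mathbf{1}}}.
\end{align*}
This shows the target inequality is equivalent to the assertion that
\begin{align*}
  |f(\mathbf{0}) - m_n[f](\mathbf{0})| \;\leq\; \frac{|f|_{\mathcal{H}_k(\mathcal{X})}}{\sqrt{\mathbf{1}^\top \mathbf{K}_b^{-1} \mathbf{1}}},
\end{align*}
so the whole proposition hinges on controlling the worst-case error of the \ac{GRE} mean in terms of $\mathbf{1}^\top \mathbf{K}_b^{-1} \mathbf{1}$.

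Next I would observe, following the derivation in the proof of \Cref{prop: wce realised}, that the optimal weight vector is $\mathbf{w} = \mathbf{K}_b^{-1} \mathbf{1}/(\mathbf{1}^\top \mathbf{K}_b^{-1} \mathbf{1})$, which coincides with the coefficients defining $m_n[f](\mathbf{0})$. Substituting into the quadratic form \eqref{eq: quad form} gives
\begin{align*}
  \mathrm{wce}(m_n[f])^2 \;=\; \mathbf{w}^\top \mathbf{K}_b \mathbf{w} \;=\; \frac{\mathbf{1}^\top \mathbf{K}_b^{-1} \mathbf{K}_b \mathbf{K}_b^{-1} \mathbf{1}}{(\mathbf{1}^\top \mathbf{K}_b^{-1} \mathbf{1})^2} \;=\; \frac{1}{\mathbf{1}^\top \mathbf{K}_b^{-1} \mathbf{1}}.
\end{align*}
By the definition of the worst-case error, $|f(\mathbf{0}) - m_n[f](\mathbf{0})| \leq |f|_{\mathcal{H}_k(\mathcal{X})} \cdot \mathrm{wce}(m_n[f])$, and combining with the identity just computed gives exactly the reformulated inequality above.

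Putting the pieces together, the conclusion follows by dividing the two displayed bounds. There is essentially no obstacle here beyond bookkeeping: the nontrivial content is already contained in \Cref{prop: wce realised} (for the error bound via worst-case error) and in the algebraic identification of $\sqrt{k_n[f](\mathbf{0},\mathbf{0})}$ with $\sigma_n[f]/\sqrt{\mathbf{1}^\top \mathbf{K}_b^{-1} \mathbf{1}}$. The one subtlety worth flagging is that the semi-norm $|\cdot|_{\mathcal{H}_k(\mathcal{X})}$ vanishes on constants, which is exactly what allows \Cref{prop: wce realised} to apply to normalised linear estimators; constants pass freely through both numerator (since the weights sum to one) and denominator (since $\sigma_n[f]$ is built from the same semi-norm on $m_n[f]$, as verified in \Cref{prop: sigma is norm}), so the two sides of the target inequality are consistently invariant under the addition of a constant to $f$.
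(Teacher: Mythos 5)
Your proposal is correct and follows essentially the same route as the paper's own proof: both identify $m_n[f](\mathbf{0})$ with the normalised weight vector $\mathbf{w} = \mathbf{K}_b^{-1}\mathbf{1}/(\mathbf{1}^\top\mathbf{K}_b^{-1}\mathbf{1})$, evaluate the quadratic form from \Cref{prop: wce realised} to get $\mathrm{wce}(m_n[f])^2 = 1/(\mathbf{1}^\top\mathbf{K}_b^{-1}\mathbf{1}) = k_n[f](\mathbf{0},\mathbf{0})/\sigma_n^2[f]$, and then apply the worst-case-error bound $|f(\mathbf{0}) - m_n[f](\mathbf{0})| \leq \mathrm{wce}(m_n[f])\,|f|_{\mathcal{H}_k(\mathcal{X})}$. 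The remark about invariance under addition of constants is a sensible clarification but does not change the argument.
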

\begin{proof}
From \eqref{eq: quad form} in the the proof of \Cref{prop: wce realised}, we have $\mathrm{wce}(\mathsf{s})^2 = \mathbf{w}^\top \mathbf{K}_b \mathbf{w}$ for any algorithm $\mathsf{s}[f] = \mathbf{w}^\top f(X_n)$ with $\mathbf{1}^\top \mathbf{w} = 1$.
In particular, we can consider the algorithm $s[f] = m_n[f](\mathbf{0})$, which has $\mathbf{w} = \mathbf{K}_b^{-1} \mathbf{1} / (\mathbf{1}^\top \mathbf{K}_n^{-1} \mathbf{1})$, to see that 
\begin{align*}
\mathrm{wce}(\mathsf{s})^2 = \left( \frac{\mathbf{K}_b^{-1} \mathbf{1}}{\mathbf{1}^\top \mathbf{K}_b^{-1} \mathbf{1}} \right)^\top \mathbf{K}_b \left( \frac{\mathbf{K}_b^{-1} \mathbf{1}}{\mathbf{1}^\top \mathbf{K}_b^{-1} \mathbf{1}} \right) = \frac{1}{\mathbf{1}^\top \mathbf{K}_b^{-1} \mathbf{1}} = \frac{k_n[f](\mathbf{0},\mathbf{0})}{\sigma_n^2[f]} .
\end{align*}
This implies that
\begin{align*}
\frac{ |f(\mathbf{0}) - m_n[f](\mathbf{0})| }{ \sqrt{k_n[f](\mathbf{0},\mathbf{0})} }  & \leq \frac{\mathrm{wce}(\mathsf{s}) |f|_{\mathcal{H}_k(\mathcal{X})}}{ \sqrt{k_n[f](\mathbf{0},\mathbf{0})} } \leq \frac{ |f|_{\mathcal{H}_k(\mathcal{X})} }{ \sigma_n[f] } ,
\end{align*}
from which the result is established.
\end{proof}

\begin{proof}[Proof of \Cref{prop: not over-confident}]
   
  For $f \in \mathcal{H}_k(\mathcal{X})$ we have the decomposition $f(\mathbf{x}) = f(\mathbf{0}) + \bar{f}(\mathbf{x})$ with $\bar{f} \in \mathcal{H}_{k_b}(\mathcal{X})$, and $|f|_{\mathcal{H}_k(\mathcal{X})} = \|\bar{f}\|_{\mathcal{H}_{k_b}(\mathcal{X})}$.
  From \Cref{prop: UQ} and $\sigma_n^2[f] = \frac{1}{n} |m_n[f]|_{\mathcal{H}_k(\mathcal{X})}^2$, it follows that
  \begin{equation*}
    \frac{\lvert f(\mathbf{0}) - m_n^h[f](\mathbf{0}) \rvert}{\sqrt{k_n^h[f](\mathbf{0}, \mathbf{0})}} \leq \frac{ n \lvert f \rvert_{\mathcal{H}_{k}(\mathcal{X})} }{ \lvert m_n^h[f] \rvert_{\mathcal{H}_{k}(\mathcal{X})}} = \frac{ n \lVert \bar{f} \rVert_{\mathcal{H}_{k_b}(\mathcal{X})} }{ \lVert \bar{m}_n^h[f] \rVert_{\mathcal{H}_{k_b}(\mathcal{X})}},
  \end{equation*}
  where similarly we have decomposed $m_n^h[f](\mathbf{x}) = m_n^h[f](\mathbf{0}) + \bar{m}_n^h[f](\mathbf{x})$.
  Since $\bar{f}$ and $\bar{m}_n^h[f]$ are elements of $\mathcal{H}_{k_b}(\mathcal{X})$ and $k_b(\mathbf{x}, \mathbf{x}') = b(\mathbf{x}) b(\mathbf{x}') k_e(\mathbf{x}, \mathbf{x}')$, we may write
  \begin{equation*}
    \bar{f}(\mathbf{x}) = b(\mathbf{x}) e(\mathbf{x}) \quad \text{ and } \quad \bar{m}_n^h[f](\mathbf{x}) = b(\mathbf{x}) \tilde{e}_h(\mathbf{x}) 
  \end{equation*}
  for certain $e,\tilde{e}_h \in \mathcal{H}_{k_e}(\mathcal{X})$, with $\| \bar{f} \|_{\mathcal{H}_{k_b}(\mathcal{X})} = \| e \|_{\mathcal{H}_{k_e}(\mathcal{X})}$ and $\| \bar{m}_n^h[f] \|_{\mathcal{H}_{k_b}(\mathcal{X})} = \| \tilde{e}_h \|_{\mathcal{H}_{k_e}(\mathcal{X})}$.
  Therefore
  \begin{equation} \label{eq:UQ-ratio-bound-e}
    \frac{\lvert f(\mathbf{0}) - m_n^h[f](\mathbf{0}) \rvert}{\sqrt{k_n^h[f](\mathbf{0}, \mathbf{0})}} \leq \frac{ n \lVert e \lVert_{\mathcal{H}_{k_e}(\mathcal{X})} }{ \lVert \tilde{e}_h \rVert_{\mathcal{H}_{k_e}(\mathcal{X})} }.
  \end{equation}  
  By construction, $m_n^h[f](h \mathbf{x}_i) = f(h \mathbf{x}_i)$ for $i = 1, \ldots, n$.
  Because $f(\mathbf{x}) = f(\mathbf{0}) + b(\mathbf{x}) e(\mathbf{x})$, the function $\tilde{e}_h$ satisfies
  \begin{equation*}
    \tilde{e}_h(h \mathbf{x}_i) = e(h \mathbf{x}_i) + b(h \mathbf{x}_i)^{-1}[ f(\mathbf{0}) - m_n^h[f](\mathbf{0})]
  \end{equation*}
  for each $i = 1, \ldots, n$.
  The assumptions of \Cref{prop: mean conver} with $s = 1$ are satisfied, and thus
  \begin{equation*}
    b(h \mathbf{x}_i)^{-1} \lvert f(\mathbf{0}) - m_n^h[f](\mathbf{0}) \rvert \leq C_{r, 1} h \rho_{X_n,\mathcal{X}} \frac{ \|b\|_{L^\infty(\mathcal{X}_h)}}{b(h \mathbf{x}_i)} |f|_{\mathcal{H}_k(\mathcal{X})} \to 0
  \end{equation*}
  as $h \to 0$ since, $b$ being a polynomial, $\lim_{h \to 0} \|b\|_{L^\infty(\mathcal{X}_h)} b(h \mathbf{x}_i)^{-1} < \infty$.
  The continuity of $e$, which follows from $e \in \mathcal{H}_{k_e}(\mathcal{X})$ and the continuity of $k_e$, then implies that $\tilde{e}_h(h \mathbf{x}_i) \to e(\mathbf{0})$ as $h \to 0$.
  Let $c \in \mathbb{R}$. The function $g(\mathbf{x}) := c \, k_e(\mathbf{x}, \mathbf{x}') k_e(\mathbf{x}', \mathbf{x}')^{-1}$ has minimal norm among all functions in $\mathcal{H}_{k_e}(\mathcal{X})$ that equal $c$ at $\mathbf{x}'$~\citep[Corollary~3.5]{Paulsen2016}.
  By the reproducing property $\langle k_e(\cdot, \mathbf{x}_1), k_e(\cdot, \mathbf{x}_2) \rangle_{\mathcal{H}_{k_e}(\mathcal{X})}= k_e(\mathbf{x}_1, \mathbf{x}_2)$ in the RKHS inner product, from which it follows that the norm of this function is
  \begin{equation*}
    \lVert g \rVert_{\mathcal{H}_{k_e}(\mathcal{X})} = \langle g , g \rangle_{\mathcal{H}_{k_e}(\mathcal{X})}^{1/2} = \lvert c \rvert \, k_e(\mathbf{x}', \mathbf{x}')^{1/2}.
  \end{equation*}
  Consequently, setting $\mathbf{x}' = h \mathbf{x}_1$ and $c = \tilde{e}_h(h\mathbf{x}_1)$ gives
  \begin{equation*}
    \lVert \tilde{e}_h \rVert_{\mathcal{H}_{k_e}(\mathcal{X})} \geq \lvert \tilde{e}_h(h\mathbf{x}_1) \rvert k_e(h \mathbf{x}_1, h \mathbf{x}_1)^{-1/2} \to \lvert e(\mathbf{0}) \rvert k_e(\mathbf{0}, \mathbf{0})^{-1/2}.
  \end{equation*}
  Using this bound in~\eqref{eq:UQ-ratio-bound-e} then shows that the ratio on the left-hand side is bounded as $h \to 0$, since we have assumed $e(\mathbf{0}) \neq 0$.
\end{proof}

\subsection{Proof of \Cref{prop: kernel parameter estimation}}
\label{app: quasi likelihood sec}

\begin{proof}[Proof of \Cref{prop: kernel parameter estimation}]
Recall that the estimator $\mathbf{r}_n^h[f]$ is any maximiser of
\begin{align*}
  \mathcal{L}_n^h(\mathbf{r}) & = - \underbrace{ \left\{ f(X_n^h)^\top \mathbf{K}_{b_{\mathbf{r}},h}^{-1} f(X_n^h) -  \frac{ ( \mathbf{1}^\top \mathbf{K}_{b_{\mathbf{r}},h}^{-1} f(X_n^h) )^2 }{ \mathbf{1}^\top \mathbf{K}_{b_{\mathbf{r}},h}^{-1} \mathbf{1} } \right\} }_{ =: Q_n^h(f, \mathbf{r})} - \log \det \mathbf{K}_{b_{\mathbf{r}},h} .
\end{align*}
Maximising $\mathcal{L}_n^h(\mathbf{r})$ is equivalent to maximising
\begin{align*}
  \mathcal{J}_n^h(\mathbf{r}) = \mathcal{L}_n^h(\mathbf{r}) - \mathcal{L}_n^h(\mathbf{r}_0).
\end{align*}
Let $\bar{f}(\mathbf{x}) := f(\mathbf{x}) - f(\mathbf{0})$.
By assumption, $\bar{f}$ is an element of the RKHS of the covariance function $k_{b_{\mathbf{r}_0}}$.
According to \Cref{prop: sigma is norm}, $Q_n^h(f, \mathbf{r}) = |m_n^h[f]|_{\mathcal{H}_k(\mathcal{X})}^2 \geq 0$.
Also from \Cref{prop: sigma is norm} and the minimal norm characterisation of the interpolant, we obtain the $h$-independent bound $Q_n^h(f, \mathbf{r}_0) \leq \lVert \bar{f} \rVert_{\mathcal{H}_{k_{b_{\mathbf{r}_0}}}(\mathcal{X})}^2$.
Therefore
\begin{align*}
  \mathcal{J}_n^h(\mathbf{r}) &= Q_n^h(f, \mathbf{r}_0) + \log \det \mathbf{K}_{b_{\mathbf{r}_0},h} - Q_n^h(f, \mathbf{r}) - \log \det \mathbf{K}_{b_{\mathbf{r}},h} \\
  &\leq \log \det \mathbf{K}_{b_{\mathbf{r}_0},h} - \log \det \mathbf{K}_{b_{\mathbf{r}},h} + \lVert \bar{f} \rVert_{\mathcal{H}_{k_{b_{\mathbf{r}_0}}}(\mathcal{X})}^2.
\end{align*}
Because the covariance matrix factorises as $\mathbf{K}_{b_{\mathbf{r}},h} = \mathbf{B}_{\mathbf{r},h} \mathbf{K}_{e, h} \mathbf{B}_{\mathbf{r},h}$, where $\mathbf{K}_{e,h}$ is the covariance matrix for $k_e$ at $X_n^h$ and $\mathbf{B}_{\mathbf{r},h}$ is a diagonal matrix with entries $b_{\mathbf{r}}(h \mathbf{x}_i)$, we get
\begin{align*}
  \mathcal{J}_n^h(\mathbf{r}) \leq 2 \sum_{i=1}^n \log \frac{b_{\mathbf{r}_0}(h \mathbf{x}_i)}{b_{\mathbf{r}}(h \mathbf{x}_i)}  + \lVert \bar{f} \rVert_{\mathcal{H}_{k_{b_{\mathbf{r}_0}}}(\mathcal{X})}^2 .
\end{align*}
Then, since the class of error bounds $b_{\mathbf{r}}$ is monotonically parametrised, $\mathcal{J}_n^h(\mathbf{r}) \to - \infty$ as $h \to 0$ uniformly over $\mathbf{r} \in [\mathbf{0}, \mathbf{r}_0 - \bm{\epsilon}]$ for any $\bm{\epsilon} > \mathbf{0}$.
Because $\mathcal{J}_n^h(\mathbf{r}_0) = 0$, this establishes that $\liminf_{h \to 0} \mathbf{r}_n^h[f] \geq \mathbf{r}_0$, as claimed.
\end{proof}

\subsection{Calculations for Multidimensional Output}
\label{app: multivar calcs}

To simplify the presentation we assume throughout that $\sigma = 1$, since $\sigma$ enters only as a multiplicative constant that can be propagated through the calculations at the end.
Let $k_{\mathcal{X}}(\mathbf{x},\mathbf{x}') := k_0^2 + b(\mathbf{x}) b(\mathbf{x}') k_e(\mathbf{x},\mathbf{x}')$.
From the Kronecker decomposition $\mathbf{K} = \mathbf{K}_{\mathcal{X}} \otimes \mathbf{K}_{\mathcal{T}}$ we have that
\begin{align*}
    \mathbf{K}^{-1} = (\mathbf{K}_{\mathcal{X}}^{-1}) \otimes (\mathbf{K}_{\mathcal{T}}^{-1})
\end{align*}
where from \eqref{eq: after Woodbury} we know that 
\begin{align*}
\mathbf{K}_{\mathcal{X}}^{-1} = \mathbf{K}_b^{-1} - \mathbf{K}_b^{-1} \mathbf{1} ( k_0^{-2} + \mathbf{1}^\top \mathbf{K}_b^{-1} \mathbf{1} )^{-1} \mathbf{1}^\top \mathbf{K}_b^{-1} .
\end{align*}
Let $\mathbf{k}(\mathbf{x},\mathbf{t})$ be the column vector with entries $k((\mathbf{x}_i,\mathbf{t}_i),(\mathbf{x},\mathbf{t}))$, and analogously define $\mathbf{k}_{\mathcal{X}}(\mathbf{x})$ and $\mathbf{k}_{\mathcal{T}}(\mathbf{t})$ as the column vectors with respective entries $k_{\mathcal{X}}(\mathbf{x}_i,\mathbf{x})$ and $k_{\mathcal{T}}(\mathbf{t}_i,\mathbf{t})$.
In this notation we have also the Kronecker decomposition $\mathbf{k}(\mathbf{x},\mathbf{t}) = \mathbf{k}_{\mathcal{X}}(\mathbf{x}) \otimes \mathbf{k}_{\mathcal{T}}(\mathbf{t})$ and the algebraic result that $\mathbf{k}(\mathbf{x},\mathbf{t})^\top = (\mathbf{k}_{\mathcal{X}}(\mathbf{x})^\top) \otimes (\mathbf{k}_{\mathcal{T}}(\mathbf{t})^\top)$.
For the conditional mean function we therefore have that
\begin{align*}
    m_n[f](\mathbf{x},\mathbf{t}) & = \mathbf{k}(\mathbf{x},\mathbf{t})^\top \mathbf{K}^{-1} f(X_n) \\
    & = [ (\mathbf{k}_{\mathcal{X}}(\mathbf{x})^\top) \otimes (\mathbf{k}_{\mathcal{T}}(\mathbf{t}))^\top ] [ (\mathbf{K}_{\mathcal{X}}^{-1}) \otimes (\mathbf{K}_{\mathcal{T}}^{-1}) ] f(X_n) \\
    & = [ \mathbf{k}_{\mathcal{X}}(\mathbf{x})^\top \mathbf{K}_{\mathcal{X}}^{-1} ] \otimes [ \mathbf{k}_{\mathcal{T}}(\mathbf{t})^\top \mathbf{K}_{\mathcal{T}}^{-1} ] f(X_n)
\end{align*}
where in the final equality we have exploited the \emph{mixed-product} property of the Kronecker product.
The dependence on $k_0^2$ of this expression occurs only in the term $\mathbf{k}_{\mathcal{X}}(\mathbf{x})^\top \mathbf{K}_{\mathcal{X}}^{-1}$.
From the calculations in \Cref{app: flat prior limit}, we can take the $k_0^2 \rightarrow \infty$ limit of the term $\mathbf{k}_{\mathcal{X}}(\mathbf{x})^\top \mathbf{K}_{\mathcal{X}}^{-1}$, to obtain that
\begin{align*}
m_n[f](\mathbf{x},\mathbf{t}) & = \left\{ \mathbf{k}_b(\mathbf{x})^\top \mathbf{K}_b^{-1} + [1 - \mathbf{k}_b(\mathbf{x})^\top \mathbf{K}_b^{-1} \mathbf{1}] \frac{\mathbf{1}^\top \mathbf{K}_b^{-1}}{\mathbf{1}^\top \mathbf{K}_b^{-1} \mathbf{1}} \right\} \otimes \left[ k_{\mathcal{T}}(\mathbf{t})^\top \mathbf{K}_{\mathcal{T}}^{-1} \right] f(X_n) .
\end{align*}
For the conditional covariance function, we have from a similar argument based on the mixed-product property that
\begin{align}
    k_n[f]((\mathbf{x},\mathbf{t}),(\mathbf{x}',\mathbf{t}')) & = k((\mathbf{x},\mathbf{t}),(\mathbf{x}',\mathbf{t}')) - \mathbf{k}(\mathbf{x},\mathbf{t})^\top \mathbf{K}^{-1} \mathbf{k}(\mathbf{x}',\mathbf{t}') \nonumber \\
    & = k_{\mathcal{X}}(\mathbf{x},\mathbf{x}') k_{\mathcal{T}}(\mathbf{t},\mathbf{t}') - \underbrace{[\mathbf{k}_{\mathcal{X}}(\mathbf{x})^\top \mathbf{K}_{\mathcal{X}}^{-1} \mathbf{k}_{\mathcal{X}}(\mathbf{x}')]}_{(*)}  [\mathbf{k}_{\mathcal{T}}(\mathbf{t})^\top \mathbf{K}_{\mathcal{T}}^{-1} \mathbf{k}_{\mathcal{T}}(\mathbf{t}')] . \label{eq: kn intermed}
\end{align}
The term $(*)$ can be read off from \eqref{eq: standard GP cov} and \eqref{eq: plug in k}:
\begin{align*}
    (*) = (k_0^2 \mathbf{1} + \mathbf{k}_b(\mathbf{x}))^\top \left\{ \mathbf{K}_b^{-1} - \frac{\mathbf{K}_b^{-1} \mathbf{1} \mathbf{1}^\top \mathbf{K}_b^{-1}}{k_0^{-2} + \mathbf{1}^\top \mathbf{K}_b^{-1} \mathbf{1}} \right\} (k_0^2 \mathbf{1} + \mathbf{k}_b(\mathbf{x}'))
\end{align*}
The fractional term can again be treated using the Taylor expansion \eqref{eq: Taylor for denominator} for $k_0^{-2}$ at 0, which upon expanding brackets yields 
\begin{align*}
    (*) = k_0^2 + \underbrace{ \mathbf{k}_b(\mathbf{x})^\top \mathbf{K}_b^{-1} \mathbf{k}_b(\mathbf{x}') - \frac{[\mathbf{k}_b(\mathbf{x})^\top \mathbf{K}_b^{-1} \mathbf{1} - 1] [\mathbf{k}_b(\mathbf{x}')^\top \mathbf{K}_b^{-1} \mathbf{1} - 1]^\top}{ \mathbf{1}^\top \mathbf{K}_b^{-1} \mathbf{1}}  + O(k_0^{-2}) }_{(**)} .
\end{align*}
Substituting this expression into \eqref{eq: kn intermed}, and using the definition of $k_{\mathcal{X}}$, we obtain that
\begin{align*}
    k_n[f]((\mathbf{x},\mathbf{t}),(\mathbf{x}',\mathbf{t}')) & = (k_0^2 + k_b(\mathbf{x},\mathbf{x}')) k_{\mathcal{T}}(\mathbf{t},\mathbf{t}') - (k_0^2 + (**)) [\mathbf{k}_{\mathcal{T}}(\mathbf{t})^\top \mathbf{K}_{\mathcal{T}}^{-1} \mathbf{k}_{\mathcal{T}}(\mathbf{t}')] \\
    & = k_0^2 \underbrace{ [k_{\mathcal{T}}(\mathbf{t},\mathbf{t}') - \mathbf{k}_{\mathcal{T}}(\mathbf{t})^\top \mathbf{K}_{\mathcal{T}}^{-1} \mathbf{k}_{\mathcal{T}}(\mathbf{t}')] }_{(***)} \\
    & \qquad + k_b(\mathbf{x},\mathbf{x}') k_{\mathcal{T}}(\mathbf{t},\mathbf{t}') - (**) [\mathbf{k}_{\mathcal{T}}(\mathbf{t})^\top \mathbf{K}_{\mathcal{T}}^{-1} \mathbf{k}_{\mathcal{T}}(\mathbf{t}')]
\end{align*}
where for $\mathbf{t}, \mathbf{t}'$ in the training set we have $(***) = 0$, which gives the desired result.

\section{Details for the Cardiac Model}
\label{app: cardiac}

This appendix summarises the most important aspects of the cardiac model that we used, and is principally intended for researchers working in the area of cardiac modelling who are interested in understanding the technical modelling aspects of the case study that we report in \Cref{sec: applications}.

\paragraph{Geometry}

The model was based on a heart geometry obtained from a computer tomography (CT) dataset of a single patient, used in \citet{strocchi2020simulating,strocchi2023cell}. 
The original patient geometry was recorded using an average mesh resolution of 1.06 $\pm$ 0.16 mm; below we describe how our coarser and finer meshes were constructed. 
Fibres in the atria and the ventricles were generated with \emph{universal atrial coordinates} \citep{roney2019universal} and a rule-based method developed by \citet{bayer2012novel}. 
Full details about the geometry are provided in \citet{strocchi2020simulating,strocchi2023cell}. 

\paragraph{Atrial and Ventricular Activation}

For the cardiac dynamics, we simulated atrial and ventricular activation using an \emph{Eikonal model} \citep{neic2017efficient}
\begin{align*}
\sqrt{\nabla t_a(\mathbf{x})^\top \mathbf{V}(\mathbf{x}) \nabla t_a(\mathbf{x})} &= 1 \qquad \mathbf{x}\in\Omega \; \\
t_a(\mathbf{x}) &= t_0 \qquad \mathbf{x}\in\Gamma
\end{align*}
 where $t_a(\mathbf{x})$ is the local activation time in the active domain $\Omega$, $\mathbf{V}$ is the tensor of the squared conduction velocities (CV) in the fibres, sheet and normal directions, and $\Gamma$ is a subset of nodes in the domain that get activated at time $t_0$.

\paragraph{Myocardium}

The cardiac model that we used treats the myocardium as a transversely isotropic conduction medium. 
The fibre CV in the atrial and ventricular myocardium was set to 0.6 m/s and 0.9 m/s, consistent with experimental measurements \citep{taggart2000inhomogeneous,hansson1998right}, while the anisotropy ratio was set to 0.4 in all tissues \citep{strocchi2023cell}. 
We simulated fast endocardial conduction in the ventricles by defining a 1 mm thick layer and assigning it with increased CV by five fold compared to normal ventricular myocardium \citep{ono2009morphological}. 
Fast conduction in the Bachmann bundle area in the atria was simulated by defining a region between the left and the right atrium as in \citet{strocchi2023cell}, and by assigning it with CV increased by 3.5 fold compared to normal atrial myocardium \citep{strocchi2023cell}. 
The atria and ventricles were electrically isolated by defining an insulating layer to avoid non-physiological activation and to control the atrio-ventricular delay. 
Atrial and ventricular activation were initiated at the site of the right atrial and right ventricular pacing lead, respectively, identified from the CT images. 
The atrio-ventricular delay was set to 150 ms. 

\paragraph{Active Tension}

Atrial and ventricular activation triggered a rise in local transmembrane potential \citep{neic2017efficient}, which then triggered a rise in active tension. Active tension was modelled with a phenomenological model developed by \citet{niederer2011length}:
\begin{align*}
S_a(\mathbf{x},t) &= T_{\text{peak}}\phi(\lambda)\tanh^2\Bigg(\frac{t_s}{\tau_r}\Bigg)\tanh^2\Bigg(\frac{t_{\text{dur}}-t_s}{\tau_d}\Bigg), \qquad 0<t_s<t_{\text{dur}}\;, \\
\phi(\lambda) &= \tan(\text{ld}(\lambda-\lambda_0))\;, \qquad t_{s} = t - t_{\text{a}}(\mathbf{x}) - t_{\text{emd}}\;
\end{align*}
where $T_{\text{peak}}$, $\tau_r$, $\tau_d$, $t_{\text{dur}}$, $\text{ld}$, $\lambda_0$ and $t_{\text{emd}}$ represent the reference tension, the rise and the decay time, the twitch duration, the reference stretch and the electro-mechanical delay. The following parameter values were used:
\begin{center} \small
\begin{tabularx}{0.7\textwidth}{|X|c|c|c|c|c|c|c|}
\hline
 & $T_{\text{peak}}$ & $\tau_r$ & $\tau_d$ & $t_{\text{dur}}$ & $\text{ld}$ & $\lambda_0$ & $t_{\text{emd}}$ \\
 & kPa & ms & ms & ms & - & - & ms \\
\hline
\textbf{Atria} & 80 & 100 & 50 & 450 & 0.7 & 6 & 20 \\
\textbf{Ventricles} & 60 & 50 & 50 & 200 & 0.7 & 6 & 20 \\
\hline
\end{tabularx}
\end{center}

\paragraph{Passive Properties}

We simulated passive properties of atrial and ventricular myocardium with a transversely isotropic Guccione law
\begin{align*}
\Psi(\mathbf{E}) &= \frac{C}{2}\big[e^{Q}-1\big]\;,\\
Q &= b_{\text{ff}}E_{\text{ff}}^2+2b_{\text{fs}}(E_{\text{fs}}^2 + E_{\text{fn}}^2) + b_{\text{ss}}(E_{\text{ss}}^2 + E_{\text{nn}}^2 + 2E_{\text{sn}}^2)\;,
\end{align*}
\noindent where f, s, n in the Cauchy--Green strain tensor $\mathbf{E}$ represent the strain in the local fibres, sheet and normal to sheet directions, and $C$, $b_{\text{ff}}$, $b_{\text{fs}}$ $b_{\text{ss}}$ are the bulk stiffness, and the stiffness in the fibre, cross-fibre and transverse directions. 
The following parameter values were used:
\begin{center} \small
\begin{tabularx}{0.5\textwidth}{|X|c|c|c|c|}
\hline
 & $C$ & $b_{\text{ff}}$ & $b_{\text{fs}}$ & $b_{\text{ss}}$ \\
 & kPa & - & - & - \\
\hline
\textbf{Atria} & 3 & 25 & 11 & 9 \\
\textbf{Ventricles} & 4 & 25 & 11 & 9 \\
\hline
\end{tabularx}
\end{center}
Parameters in tissues other than the atria and the ventricles were represented with a neo-Hookean law and the parameters were set according to \citet{strocchi2020simulating,strocchi2023cell}. 
For all tissues, near-incompressibility was enforced with a penalty method with a bulk modulus $\kappa=1000.0$ kPa. To constrain the motion of the heart, we applied omni-directional springs on the superior vena cava and the two right pulmonary veins, as well as a region around the apex (\Cref{fig:BCs}, left). Spring stiffness was set to 0.1 kPa/$\mu$m.

\paragraph{Adjusting the Spatial Resolution}

We used \texttt{meshtool}, open-source software for mesh manipulation \citep{neic2020automating}, to refine or coarsen the mesh to a desired spatial resolution for the experiments that we report. 
The spatial resolution $x_1$ that we used for extrapolation in the main text is the \emph{nominal} resolution fed into \texttt{meshtool}, but we note that the nominal resolution is typically not exactly achieved in the re-meshing process; we therefore also report the effective average mesh resolution: 
\begin{center} \small
\begin{tabular}{|c|c|c|c|}
\hline
\textbf{Target dx} [mm] & \textbf{Effective dx} [mm] & \textbf{\# nodes} & \textbf{\# elements} \\
\hline
1.7 & 1.7 & 119,104 & 540,621 \\
1.4 & 1.36 & 213,580 & 1,018,699 \\
1.0 & 1.06 & 417,863 & 1,988,945 \\
0.7 & 0.86 & 764,094 & 3,679,517 \\
0.6 & 0.7 & 1,498,007 & 7,512,728 \\
0.5 & 0.56 & 3,006,080 & 15,768,118 \\
0.4 & 0.43 & 6,217,838 & 30,699,422 \\
\hline
\end{tabular}
\end{center}

\paragraph{Computational Resources}

All simulations were carried out using the \emph{Cardiac Arrhythmia Research Package} (CARP) \citep{vigmond2003computational,augustin2021computationally} on ARCHER2, a UK national super computing service (\url{https://www.archer2.ac.uk/}).

\end{document}